\title{Computing Diverse and Nice Triangulations}
\author{Waldo Gálvez\inst{1} \and
Mayank Goswami\inst{2} \and
Arturo Merino\inst{3} \and \\
GiBeom Park\inst{4,5} \and Meng-Tsung Tsai\inst{6}}
\authorrunning{W. Gálvez, M. Goswami, A. Merino, G. Park, and M. Tsai}
\institute{Universidad de Concepción, Chile \and Queens College of CUNY, USA \and Universidad de O'Higgins, Chile \and CUNY Graduate Center, USA \and Academia Sinica, Taiwan}
\begin{document}

\makeatletter
  \renewcommand*\l@title[2]{}
  \renewcommand*\l@author[2]{}
  \makeatother

\maketitle
\footnotetext[5]{A preliminary version of this paper was presented at the 31st Annual Fall Workshop on Computational Geometry 2024.}

\begin{abstract}
\noindent We initiate the study of computing diverse triangulations to a given polygon.
Given a simple $n$-gon $P$, an integer $ k \geq 2 $, a quality measure $\sigma$ on the set of triangulations of $P$ and a factor $ \alpha \geq 1 $, we formulate the Diverse and Nice Triangulations (DNT) problem that asks to compute $k$ \emph{distinct} triangulations $T_1,\dots,T_k$ of $P$ such that a) their diversity, $\sum_{i < j} d(T_i,T_j)
$, is as large as possible \emph{and} b) they are nice, i.e., $\sigma(T_i) \leq \alpha \sigma^* $ for all $1\leq i \leq k$. Here, $d$ denotes the symmetric difference of edge sets of two triangulations, and $\sigma^*$ denotes the best quality of triangulations of $P$, e.g., the minimum Euclidean length.

\hspace{4mm} As our main result, we provide a $\mathrm{poly}(n,k)$-time approximation algorithm for the DNT problem that returns a collection of $k$ distinct triangulations whose diversity is at least $1 - \Theta(1/k)$ of the optimal, and each triangulation satisfies the quality constraint. This is accomplished by studying \emph{bi-criteria triangulations} (BCT), which are triangulations that simultaneously optimize two criteria, a topic of independent interest. We complement our approximation algorithms by showing that the DNT problem and the BCT problem are NP-hard.

\hspace{4mm}Finally, for the version where diversity is defined as $\min_{i < j} d(T_i,T_j)
$, we show a reduction from the problem of computing optimal Hamming codes, and provide an $n^{O(k)}$-time $\tfrac12$-approximation algorithm.
This improves over the naive ${C_{n-2} \choose k} \approx 2^{O(nk)}$ time bound for enumerating all $k$-tuples among the triangulations of a simple $n$-gon, where $C_n$ denotes the $n$-th Catalan number.
\end{abstract}

\tableofcontents

\clearpage

\section{Introduction}
\label{sec:introduction}

There has been considerable recent interest in computing diverse solutions to optimization problems, also called the diverseX paradigm. Instead of computing a single (approximately) optimal solution to a (perhaps hard) optimization problem, this paradigm asks to compute a set of $k \geq 2$ many good solutions that are as different from each other as possible. There are several motivations for this, such as providing more choices to the user~\cite{baste2019fpt}, fairness~\cite{aumuller2020fair,aumuller2021fair}, robustness and security~\cite{gao2022obtaining}, and portfolio optimization~\cite{drygala2024data}. 

Computing diverse solutions to optimization problems have been studied for problems such as vertex cover and hitting set ~\cite{baste2022diversity,baste2019fpt}, matchings ~\cite{fomin2020diverse}, shortest paths ~\cite{hanaka2022computing,funayama2024parameterized}, spanning trees ~\cite{hanaka2021finding}, approximate minimum spanning trees, global min cuts, matchings, short paths ~\cite{gao2022obtaining}, $s$-$t$ min cuts~\cite{de2023finding}, and SAT~\cite{misra2024parameterized,austrin2024geometry}.

In this work, we focus on computing diverse solutions in computational geometry (CG). One of the first topics in a CG course is the problem of computing triangulations of a given polygon or a point set. What if we want to compute $k$ very different triangulations? When $k=2$, we want to compute the diametral pair of triangulations according to some metric on the space of triangulations. This problem has been studied qualitatively when the metric is the flip distance between triangulations: even the simple case of convex polygons turns out to be fascinating, as computing the triangulations of a convex polygon with maximal flip distance is related to the rotation distance between binary search trees and the diameter of associahedra \cite{pournin2014diameter}. However, for arbitrary polygons even computing the flip distance between two given triangulations is NP-complete~\cite{aichholzer2015flip}. In this work we focus on another natural metric, the symmetric difference of the edge sets of two triangulations. We will consider non-convex polygons as well as obtaining $k>2$ solutions. Moreover, since many applications require triangulations that are ``nice'' with respect to some measure, we will study the \emph{diverse and nice triangulations} problem where we want to return a maximally diverse set of $k$ nice triangulations.

The diverse and nice triangulations problem is not only of theoretical interest but also of potential practical significance. 
Decomposing a complex shape into simple pieces is required for many applications ranging from computer graphics and vision~\cite{deberg2008} to numerical analysis~\cite{lawson1977software}. 
Presenting the user with many \emph{different} looking triangulations not only increases available choices but may also increase the robustness of these procedures.
As another example, consider an application in numerical analysis - the finite element method used to solve differential equations~\cite{babuvska1976angle}. 
A result achieved after averaging the computation over diverse triangulations may provide a more robust, more domain-dependent and less triangulation-dependent answer, as opposed to performing the computation over just one triangulation, or a few similar triangulations. 

Moreover, in many application certain triangulations are preferred over others. Well-studied examples include the \emph{minimum (Euclidean) length triangulation, the Delaunay triangulation that maximizes the minimum angle}, and others. In this setting, we want triangulations that are not only diverse, but also ``nice'' with respect to such quality measures.
This leads us to formulate the \emph{diverse and nice triangulations} problem, which we formally define next.

\subsection{Problem Statement}\label{sec: problem-statement}
Unless otherwise stated, we assume that $P$ is a simple $n$-gon, i.e., a polygon without holes. Define a triangulation of $P$ as a maximal set of non-crossing diagonals of $P$. We first define our diversity measure for the triangulations.

\begin{definition}[Diversity Measure]\label{def: diversity-measure} Given triangulations $ T_1, \ldots, T_k $ of $P$, we consider two diversity measures:
    \[ \SumSD[T_{1},\dots,T_{k}] \overset{\text{def}}{=} \sum\limits_{1 \leq i < j \leq k} |T_{i} \Delta T_{j}|, \ \ \ \text{and} \ \ \ \MinSD[T_{1},\dots,T_{k}] \overset{\text{def}}{=} \min\limits_{1 \leq i < j \leq k} |T_{i} \Delta T_{j}| \]
\end{definition}

Our goal is to develop algorithms that maximize these two diversity measures. Note that maximizing $\SumSD$ is equivalent to maximizing the \emph{average} pairwise distance between the triangulations. In addition, we will not only care about the diversity among a collection of triangulations, but with different applications in mind, also care about the \emph{quality of each triangulation}. Let $\Tcal$ denote the set of all triangulations of $P$. Given a quality measure $\sigma: \Tcal \rightarrow \mathbb{R}_{\geq_0}$ and an approximation factor $\alpha \geq 1$, we say a triangulation $T$ is \emph{nice} if $ \sigma(T) \leq \alpha \sigma^* $, where by convention, we assume that $\sigma^*=\min_{T \in \Tcal} \sigma(T)$ is the quality of a best triangulation (w.r.t. $\sigma$) of $P$.

\begin{definition}[Sum-DNT and Min-DNT] \label{def: dnt}
 Input: A simple $n$-gon $P$ without holes, $k \geq 2 $, a quality measure $\sigma$ and an $\alpha \geq 1$. Output: A collection of $k$ distinct triangulations $T_1,\dots,T_k \in \T$, if they exist, such that the following holds:
\begin{enumerate}
    \item ($\alpha$-optimal) For every $i\in \{1,\ldots,k\}$,  $\sigma(T_{i}) \leq \alpha \cdot \sigma^*$.
    These $\alpha$-optimal triangulations are called ``nice'' triangulations.
    \item (Maximally diverse) For any set of $k$ distinct $\alpha$-optimal triangulations $ T^{'}_1 $, ..., $T^{'}_k\in\T$ 
    (i.e, satisfying 1 above),
    $ \SumSD[T_1,\ldots,T_k] \geq \SumSD[T'_1,\ldots,T'_k] $ (Sum-DNT) or  $ \MinSD[T_1,\ldots,T_k] \geq \MinSD[T'_1,\ldots,T'_k] $ (Min-DNT).
\end{enumerate}
\end{definition}

Note that the DNT problem comes with two measures - the diversity measure $\SumSD$ or $\MinSD$ on a collection of $k$ triangulations, and the quality measure $\sigma$ for each triangulation.
If the quality measure $\sigma$ is not considered (we just set it to the constant function), omit the ``N'' and call these variants of problem the Diverse Triangulations (Sum-DT or Min-DT) problems.

\subsection{Related Work}\label{sec: related-work}

As far as we know, we are the first to study the problem of finding diverse triangulations. In the field of computational geometry, the only existing work on finding diverse solutions is the recent interesting work by Klute and van Kreveld~\cite{klute2022fully} that considers diverse sets of geometric objects, such as polygons and point sets, with several diversity metrics. The authors investigate the maximum size of a fully diverse set, defined as a set of $k$ objects such that $\min_{i \neq j} d(x_i,x_j)$ is at least a constant fraction of the diameter of the space. The focus is on quantitative upper and lower bounds on the sizes of such sets, and not algorithmic. The authors mention a simple randomized algorithm that samples an object and adds it to the collection if it is sufficiently far away from all current objects in the collection, but the runtime is not analyzed. Apart from bounding the number of samples needed, an additional issue in implementing this algorithm for the nice triangulation problem is that while sampling a triangulation at random can be done in polynomial time~\cite{epstein1994generating}, it is not clear how to sample uniformly a nice triangulation (e.g., one with Euclidean weight at most twice that of the minimum) in polynomial time. For a survey on quality measures of triangulation, see~\cite{de2010triangulations}.

A related but fundamentally different problem is that of \emph{diversity optimization}, or \emph{dispersion}, that has been studied extensively~\cite{erkut1990discrete,BorodinJLY17,abboud2022improved}. 
In the latter problem, one wants to obtain a set of $k$ maximally dispersed points from a metric space of $N$ points.
Note that for our problem, the metric space is the space of all triangulations, and so $\textrm{poly}(N,k)$ algorithms for dispersion translate directly to a $\textrm{exp}(n)\textrm{poly}(k)$ algorithms for the diverse solutions.

The concept of diversity has also been explored \emph{within a solution}. Recent examples include work on packing~\cite{galvez2022approximation}, diverse convex and Voronoi partitions~\cite{van2021diverse}, and the dispersive art gallery problem~\cite{rieck2024dispersive}.
\section{Our Results}
\label{sec: results}

We state our results for the Sum-DNT problem first, followed by the results for the Min-DNT problem.

\subsection{Sum-DNT Results}

\noindent\textbf{Hardness:} Observe that a successful output of the DNT problem consist of $k$ triangulations and hence takes space $\Omega(kn)$, which is not polynomial in the size of input already.
However, we show that even the decision version of the DNT problem (DDNT) is NP-hard; see \Cref{sec: DDNT-Hardness}. For this hardness, our quality measure $\sigma(T)$ on a triangulation $T$ is the sum of the Euclidean lengths of all the $(n-3)$ diagonals in the triangulation. We call this quantity $\sigma(T)$ the Euclidean length of the triangulation.
\begin{restatable}[NP-hardness of DDNT]{theorem}{ResDNTHardness}\label{thm: DDNT-Hardness}
There exists a simple polygon $P$ with $n$ vertices,  an integer $D \geq 1$, and an $\alpha' \in (1, \infty)$ such that no $\text{poly}(n, \log k)$-time algorithm can decide whether $P$ has $k$ distinct triangulations $T_1, \ldots, T_k$ such that $\SumSD[T_1, \ldots, T_k] \geq D$ and $\sigma(T_i) \leq \alpha' \sigma^*$ for all $i \in [k]$, unless $\classP=\classNP$. Here, $\sigma^*$ denotes the minimum Euclidean length of a triangulation of $P$.
\end{restatable}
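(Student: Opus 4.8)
The plan is to reduce from a \#\textbf{P}-complete \emph{counting} problem and to exploit the unusual $\mathrm{poly}(n,\log k)$ time budget: since $k$ is given in binary, an algorithm meeting this bound is too fast even to list its $\Theta(kn)$-size output, so it can only answer ``succinct'' existence questions, and a decider for these lets us \emph{count} the nice triangulations by binary search on $k$. Concretely, I would reduce from \#\textsc{Knapsack}, i.e.\ counting the subsets $S\subseteq\{a_1,\dots,a_m\}$ with $\sum_{i\in S}a_i\le B$, which is \#\textbf{P}-complete. The target is to build a polygon whose \emph{nice} triangulations are in bijection with these feasible subsets, so that the largest $k$ for which a DDNT instance answers ``yes'' equals $N(B):=\#\{S:\sum_{i\in S}a_i\le B\}$.

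The geometric core is an \emph{independent binary gadget}. For each element $a_i$ I would carve out a small sub-polygon that admits exactly two near-optimal triangulations: a ``$0$-choice'' using one diagonal of baseline length and a ``$1$-choice'' using a diagonal that is longer by an amount proportional to $a_i$, with every other triangulation of the gadget much longer (hence never nice). Stringing $m$ such gadgets along a spine so their diagonal choices do not interact, a triangulation of the whole polygon is determined by a vector $c\in\{0,1\}^m$, its length is $\sigma^*+\sum_{i:c_i=1}a_i$ where $\sigma^*$ is the all-$0$ (minimum) triangulation, and two triangulations $c,c'$ satisfy $|T_c\,\Delta\,T_{c'}|=2\,|\{i:c_i\ne c_i'\}|$. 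Choosing $\alpha'$ so that $(\alpha'-1)\sigma^*=B$, the nice triangulations are exactly the $T_c$ with $\sum_{i:c_i=1}a_i\le B$, i.e.\ exactly the feasible subsets.

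Setting $D=1$ makes the diversity constraint vacuous (any two distinct triangulations already satisfy $|T_i\,\Delta\,T_j|\ge 2$), so the DDNT query reduces to ``does $P$ have at least $k$ nice triangulations?'', which is the threshold predicate $N(B)\ge k$. A $\mathrm{poly}(n,\log k)$ decider then pins down $N(B)$ exactly by binary search over $k$ (the count is at most $2^m\le 2^{\mathrm{poly}(n)}$, so $\mathrm{poly}(n)$ queries suffice, each costing $\mathrm{poly}(n,\log k)=\mathrm{poly}(n)$). Since $N(B)=\#\textsc{Knapsack}$ is \#\textbf{P}-complete, computing it in polynomial time places a \#\textbf{P}-complete problem in $\textbf{FP}$, forcing $\classP=\classNP$ and giving the claimed conditional hardness. (If one insists on a many-one reduction with nontrivial $D$, one instead picks the $k$ most spread-out feasible subsets and tunes $D$ to the optimal pairwise-distance sum, but the counting route above is the cleanest.)

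The hard part will be the geometric realization rather than the combinatorics. I must engineer gadgets in which precisely two triangulations are within the factor $\alpha'$ of optimal while all ``mixed'' alternatives fall outside it, guarantee that the per-gadget choices are genuinely independent so that the global minimum-weight triangulation is the concatenation of the local minima, and---most delicately---cope with the fact that $\sigma$ sums \emph{Euclidean} lengths, i.e.\ square roots. Comparing a sum of square roots to the threshold $\alpha'\sigma^*$ is the notorious sum-of-square-roots obstacle, so I would place the gadget vertices on coordinates (e.g.\ via Pythagorean-triple lengths or an explicit scaling) that make the relevant diagonal lengths integral, or else build in a provable additive gap between nice and non-nice totals, all while keeping the coordinate bit-complexity polynomial in the input. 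Establishing this clean separation between ``feasible-subset'' lengths and everything else is where the real work lies.
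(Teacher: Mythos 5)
Your proposal is correct and takes essentially the same route as the paper: the paper likewise builds a polygon of independent two-choice gadgets (kites, each offering a short diagonal of length $v_i$ or a long one of length $2v_i$, glued to a right-triangle spine) whose nice triangulations biject with subset-sum solutions, makes the diversity constraint trivially satisfiable, and binary-searches over $k$ (the polygon has at most $4^n$ triangulations) with the DDNT oracle to recover the \#P-hard count. The only cosmetic differences are that the paper counts exact-target subset sums rather than threshold knapsack solutions and sets $D=\binom{n}{2}$ rather than $D=1$; your sum-of-square-roots worry is handled exactly as you suggest, since the kite diagonals have integer lengths and the (possibly irrational) forced-diagonal contribution is common to every triangulation, so the niceness test reduces to an integer comparison.
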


\vspace{2mm}\noindent\textbf{Decomposable Quality Measure:} In light of the hardness above, our first contribution to the DNT problem is identifying a class of quality measures that we call \emph{decomposable}. It will turn out that not only does this class contain most of the quality measures on triangulations studied in the computational geometry community, but we can also develop approximation algorithms for these quality measures!
Assume first that a quality measure $\sigma'$ for a single edge, or a single triangle is given; e.g., $\sigma'$ measures the length of an edge, or the minimum angle of a triangle.
A \emph{decomposable} quality measure $\sigma$ on a triangulation $T$ is one that is either a min, max or sum, of the measure $\sigma'$ over the edges or triangles in $T$.

\begin{definition}[Decomposable Quality Measures]\label{def:decomposable}
 Let $\odot \in \{\sum, \min, \max\}$, and let $\sigma'$ be a nonnegative function defined on the set of diagonals or triangles of the polygon.
 A quality measure $\sigma: \T \rightarrow \mathbb{R}_{\geq 0}$ is said to be edge-decomposable if $\sigma(T)=\odot_{e \in T} \sigma'(e)$ for any triangulation $T$, and triangle-decomposable if $\sigma(T)=\odot_{t \in tr(T)} \sigma'(t)$ for any triangulation $T \in \T$.
\end{definition}
By a decomposable measure we will mean a measure that is either edge-decomposable or triangle-decomposable.
Examples of decomposable quality measures that have been considered extensively in the literature are the Euclidean length \cite{mulzer2008minimum,remy2009quasi}, maximum length (of the $(n-3)$ diagonals) \cite{edelsbrunner1993quadratic}, minimum length (of the $(n-3)$ diagonals) \cite{fekete2014computing}, maximum angle of the triangulation \cite{edelsbrunner1990n}, and the minimum angle of the triangulation, a measure that is maximized by the popular Delaunay triangulation. There is also research on triangulations that are close-to-Delaunay~\cite{van2021near}; see \Cref{fig:delaunaymeasures-app} in \Cref{sec: bct-hardness}.

\vspace{2mm}\noindent\textbf{Approximation Algorithms:} We now present our main algorithmic results for the DNT problem with respect to the $\SumSD$ measure.
All of our algorithms will have an approximation factor of \defi{$\beta := \max\left\{ \frac{1}{2}, 1 - \frac{2}{k+1} \right\}$} for the diversity. $\beta$ equals $1/2$ when $k \leq 3$, and approaches $1$ as $k \rightarrow \infty$.

\begin{restatable}[Algorithms for Sum-DNT]{theorem}{dnttheorem}
    \label{thm: dnttheorem}
    Let $\sigma$ be any decomposable measure. Let $P$, $k$ and $\alpha$ be given as in Definition 2 of the DNT problem, and let $T_1,\ldots,T_k $ and $ T'_1,\ldots,T'_k $ in the following denote distinct triangulations. Then,
        \begin{enumerate}
            \item{\label{dnt: general}} For $ \alpha \in (1, \infty) $, there exist an $O(n^5k^5\log k)$-time algorithm that returns $k$ $\alpha$-optimal triangulations $T_{1},\dots,T_{k}$ such that $ \SumSD[T_{1},\dots,T_{k}] $ is at least $ \beta \cdot \SumSD[T'_{1},\dots,T'_{k}]$ for any $\alpha$-optimal triangulations $T'_{1},\dots,T'_{k}$.

            \item{\label{dnt: fptas}} For $ \alpha \in (1, \infty) $, there exists an algorithm that runs in time $ O \left( \varepsilon^{-2}\cdot n^5 k^3\log k \right) $ and returns $k$ $ \alpha(1+\varepsilon) $-optimal triangulations $T_{1},\dots,T_{k}$ such that $\SumSD[T_{1},\dots,T_{k}] $ is at least $ \beta \cdot \SumSD[T'_{1},\dots,T'_{k}]$ for any $\alpha$-optimal triangulations $T'_{1},\dots,T'_{k}$.
        \end{enumerate}
\end{restatable}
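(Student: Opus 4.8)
The plan is to turn the combinatorial diversity objective into a separable concave function of per-diagonal occupancies, and then to maximize it with a dynamic-programming \emph{bi-criteria} oracle that, subject to the quality budget, optimizes an arbitrary linear edge (or triangle) reward. \textbf{Reformulation.} For a collection $T_1,\dots,T_k$ and a diagonal $e$, let $x_e=|\{i: e\in T_i\}|$ be its occupancy. A diagonal contributes to $|T_i\Delta T_j|$ exactly when it lies in precisely one of $T_i,T_j$, so
$$\SumSD[T_1,\dots,T_k]=\sum_{e} x_e\,(k-x_e).$$
Each summand $x(k-x)$ is concave and maximized at $x=k/2$, so every diagonal should ideally sit in about half of the triangulations. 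This view both reduces the task to maximizing a separable concave function over realizable, nice occupancy vectors and yields clean upper bounds on $\SumSD[T'_1,\dots,T'_k]$, either by bounding each term by $\lfloor k/2\rfloor\lceil k/2\rceil$ or via the triangle inequality for the symmetric-difference metric.

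\textbf{The BCT engine.} The workhorse is the standard $O(n^3)$ triangulation DP over sub-polygons (which splits the sub-polygon spanned by a boundary chord by choosing a third apex vertex), augmented to respect the quality budget while optimizing a linear reward $r(T)=\sum_{e\in T}r_e$. I would first compute $\sigma^*$, and hence the niceness threshold $\alpha\sigma^*$, by the same DP. For min-/max-decomposable $\sigma$ niceness is purely local (forbid diagonals or triangles that are too heavy, or force at least one light one), so the reward-optimal nice triangulation is found directly in $O(n^3)$. For sum-decomposable $\sigma$ (e.g.\ Euclidean length) the budget $\sum_{e\in T}\sigma'(e)\le\alpha\sigma^*$ is a knapsack-type constraint: I would carry at each DP cell the Pareto frontier of $(\text{quality},\text{reward})$ pairs. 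Maintaining it exactly gives the first, $\alpha$-optimal algorithm; rounding the $\sigma'$-values to granularity $\Theta(\varepsilon\sigma^*/n)$ caps the budget dimension at $O(n/\varepsilon)$ states and yields the $\alpha(1+\varepsilon)$-relaxed variant, explaining the $\varepsilon^{-2}$ factor in the second statement.

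\textbf{Assembly and ratio.} With the engine in hand, the assembly is essentially \emph{max-sum dispersion} over the ground set of nice triangulations under the symmetric-difference metric: although this set is exponential, BCT is exactly the linear-optimization oracle a dispersion algorithm needs, so the procedure runs implicitly. Concretely, I would build the collection by farthest-point insertion: having chosen a current set $S$ with occupancies $x_e$, the marginal diversity of adding $T$ equals $\sum_{e\in T}(|S|-2x_e)$ up to a constant, a linear reward, so the next (distinct) nice triangulation is obtained by one BCT call with $r_e=|S|-2x_e$. Seeding with any nice $A$ and the BCT-maximizer $B$ of $r_e=\mathbbm{1}[e\notin A]$ gives two well-separated poles, which already realizes $\SumSD=\Theta\!\big(k^2\,|A\Delta B|\big)$ and, against the concave upper bound, the factor $1/2$ for $k\le 3$; the insertion greedy spreads the remaining triangulations over more directions and, by a dispersion-style exchange argument, is intended to capture a $\tfrac{k-1}{k+1}$ fraction of the optimum for larger $k$, matching $\beta=\max\{\tfrac12,1-\tfrac{2}{k+1}\}$. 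A binary search over the attainable diversity range (equivalently, a Lagrangian multiplier balancing quality against reward) contributes the $\log k$ factor, and the remaining $\mathrm{poly}(k)$ factors come from the $k$ insertion rounds and from re-running the engine.

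\textbf{Main obstacle.} The crux is enforcing niceness and distinctness \emph{simultaneously} without degrading the ratio. For sum-decomposable $\sigma$ the per-round subproblem becomes a budgeted triangulation optimization, which is why an exact treatment is costlier (part 1) while an FPTAS is natural (part 2); the subtlety is that rounding the budget must relax only $\alpha$ to $\alpha(1+\varepsilon)$ and must leave the integer rewards that drive diversity uncorrupted. Distinctness is the second hazard: copying a pole is forbidden, so I must argue that enough distinct nice triangulations exist near each target occupancy (and otherwise correctly report infeasibility), and that replacing exact copies by nearby distinct triangulations perturbs $\SumSD$ only by a lower-order amount. Threading both constraints through a single DP, and certifying that the greedy/two-pole occupancy vector lies within $\beta$ of the concave relaxation, is where the real work lies.
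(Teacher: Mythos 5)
Your reformulation of the diversity objective as a linear per-edge reward, and your DP engine, are essentially the paper's route: the occupancy identity $\SumSD[T_1,\dots,T_k]=\sum_e x_e(k-x_e)$ is an equivalent rephrasing of the paper's reduction of farthest insertion to $\BCT$ (the reward $r_e=|S|-2x_e$ is exactly their $w_i$ up to an affine shift), and your Pareto-frontier DP over sub-polygons, with the integer reward bounded by $O(nk)$ as the extra table dimension and with $\sigma$-scaling for the FPTAS, matches their $k$-best $\BCT$ algorithm and its $O(\varepsilon^{-2})$ variant. The genuine gap is in how you obtain the factor $\beta=\max\{\tfrac12,\,1-\tfrac{2}{k+1}\}$. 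Farthest-point insertion for max-sum dispersion guarantees only $\tfrac12$ in general metrics, and no ``dispersion-style exchange argument'' applied to the greedy sequence is known to yield $\tfrac{k-1}{k+1}$; your two-pole seeding does not change this, and the claim that it ``already realizes $\SumSD=\Theta(k^2|A\Delta B|)$'' is not even meaningful for a set of only two triangulations. The paper's improved factor rests on a structural fact you never invoke — symmetric difference is a metric of \emph{negative type} — combined with the local-search \emph{swapping} algorithm of Cevallos et al.~\cite{cevallos2019improved}: starting from any $k$-set, repeatedly swap out some $T_j$ and swap in the triangulation maximizing the resulting diversity (each swap step implemented by $k$ farthest-insertion/$\BCT$ calls), with $O(k\log k)$ iterations sufficing. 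That is also the true source of the $\log k$ in the running time; your attribution of it to a binary search or Lagrangian balancing step has no counterpart in the analysis and no justification in your sketch.

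A second, smaller gap is distinctness. You flag it, but your fix — replace exact copies by ``nearby'' distinct nice triangulations and argue the perturbation is lower order — is unsubstantiated: when the pool of nice triangulations is thin near a target occupancy, no bound on the loss follows, and you give no mechanism to detect or construct such a nearby solution. The paper's fix is exact and should replace yours: make the DP a $k$-best enumeration (each cell stores the $k$ smallest values, merged via Frederickson--Johnson-style selection at an $O(k)$ overhead per cell), and at step $i+1$ request the $i+1$ best nice triangulations under the current reward; by pigeonhole one is new, and since it is an exact optimizer among the unused triangulations, the farthest-insertion and local-search guarantees carry over verbatim rather than approximately. With these two repairs — negative type plus local search for the ratio, and $k$-best enumeration for distinctness — your reduction and DP assemble into the claimed $O(n^5k^5\log k)$ and $O(\varepsilon^{-2} n^5 k^3\log k)$ bounds.
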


For some special cases, the Sum-DNT problem has faster algorithms or even admits PTAS. We provide proofs of these special cases in \Cref{sec: dnt-special}.
\begin{restatable}[Special Cases for Sum-DNT]{theorem}{ResDNTSpecial}\label{thm: dnt-special}
    \begin{enumerate}
        \item{\label{dnt: special}} If $\sigma$ is decomposable and $\alpha = 1 $, there exists an $ O(n^3k^3\log k) $-time algorithm for Sum-DNT that returns $k$ triangulations $T_{1},\dots,T_{k}$ with diversity at least $\beta$ of optimal.
        \item When $P$ is a convex polygon, there is an algorithm for Sum-DT that runs in time $2^{O(1/\varepsilon^2)}n^5k^5$ and returns $k$ distinct triangulations whose diversity is at least $ (1-\varepsilon) $ of the optimal.
        \item If $\sigma$ is Delaunay measure and $\alpha=1$, then there is an algorithm that runs in time $k^3n^{O(1/\varepsilon)}$ and returns $k$ distinct Delaunay triangulations of $P$ whose diversity is at least $(1-\varepsilon)$ of the optimal.
    \end{enumerate}
\end{restatable}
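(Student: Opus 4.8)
The common starting point for all three parts is a reformulation of the diversity objective. Writing $x_e$ for the number of the chosen triangulations $T_1,\dots,T_k$ that contain a diagonal $e$, that diagonal lies in $T_i\Delta T_j$ for exactly $x_e(k-x_e)$ pairs, and since every triangulation of a simple $n$-gon has exactly $n-3$ diagonals we have $\sum_e x_e=k(n-3)$. Hence
\[
\SumSD[T_1,\dots,T_k]=\sum_e x_e(k-x_e)=k^2(n-3)-\sum_e x_e^2 ,
\]
so maximizing $\SumSD$ is equivalent to minimizing $\sum_e x_e^2$, i.e.\ to spreading edge usage as evenly as possible. In particular the absolute optimum $k(k-1)(n-3)$ is attained \emph{exactly} by any family of pairwise edge-disjoint triangulations (each $x_e\le 1$), and the factor $\beta=\max\{\tfrac12,1-\tfrac{2}{k+1}\}$ of \Cref{thm: dnttheorem} already exceeds $1-\varepsilon$ once $k\ge 2/\varepsilon$. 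I will use these two facts repeatedly, together with the engine of \Cref{thm: dnttheorem}, whose core reduces one optimization step to a single bi-criteria (weighted) triangulation computed by an interval dynamic program.

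For Part~1 ($\alpha=1$), niceness is no longer a soft budget but the hard constraint $\sigma(T)=\sigma^*$. The plan is to first compute, by one $O(n^3)$ interval DP, the canonical structure describing all $\sigma$-optimal triangulations of $P$ (for a decomposable $\sigma$ this is a sub-DAG of the usual triangulation DP in which only transitions attaining the optimal value of $\odot$ survive), and then run the diversity engine of \Cref{thm: dnttheorem} restricted to this DAG. Because optimality is now encoded structurally, the dynamic program no longer needs the extra dimension tracking how far $\sigma(T)$ lies from $\alpha\sigma^*$; removing it is exactly what lowers the running time from $O(n^5k^5\log k)$ to $O(n^3k^3\log k)$. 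The step needing care is the $\min/\max$ case of \Cref{def:decomposable}, where ``optimal'' means every triangle (or diagonal) meets the extremal value, so I must verify that the restricted DP enumerates precisely the optimal triangulations and that the weighted-triangulation subroutine composes with it.

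For Part~2 (convex $P$, Sum-DT), I would split on $k$. If $k\ge 2/\varepsilon$, then $\beta\ge 1-\varepsilon$ and the Sum-DT specialization of \Cref{thm: dnttheorem} already returns a $(1-\varepsilon)$-approximation. If $k<2/\varepsilon$, there are two sub-cases. When $n$ is also small, say $n\le 4/\varepsilon$, the number of triangulations is $2^{O(n)}$, so brute-force enumeration of all $k$-subsets costs $2^{O(nk)}=2^{O(1/\varepsilon^2)}$ and finds the exact optimum. When $n$ is large we have $k\le 2/\varepsilon\le n/2$, so a counting bound $k(n-3)\le\binom{n}{2}-n$ makes edge-disjointness feasible, and I would construct $k$ pairwise edge-disjoint triangulations of the convex polygon (e.g.\ by a rotation/packing scheme on the diagonal-length classes); by the reformulation above these attain the absolute optimum exactly. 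Combining the branches yields the stated $2^{O(1/\varepsilon^2)}n^5k^5$ bound. I expect the existence and explicit construction of $k=O(n)$ edge-disjoint triangulations of a convex polygon to be the main obstacle here.

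Part~3 (Delaunay measure, $\alpha=1$) I would reduce to Part~2. Under $\alpha=1$ the nice triangulations are exactly the (constrained) Delaunay ones, and these differ only inside the maximal cocircular regions of $P$: all remaining diagonals are forced, while each cocircular region is a convex polygon that may be triangulated arbitrarily. Since forced diagonals never enter any $T_i\Delta T_j$, the objective decomposes as $\SumSD[T_1,\dots,T_k]=\sum_{\ell}\SumSD[\text{restrictions to region }R_\ell]$, so up to the global requirement that the $k$ full triangulations be distinct the problem splits into independent convex Sum-DT instances, one per region. The plan is to enumerate the behaviour in the $O(1/\varepsilon)$ regions contributing most to the objective -- each contributing an $n^{O(1)}$ factor, hence $n^{O(1/\varepsilon)}$ overall -- solve them near-optimally by the convex techniques of Part~2, and dispatch the remaining regions to the $\beta$-approximation, whose loss is absorbed into $\varepsilon$. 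The delicate points, and the overall main obstacle for the two PTAS results, are (i) certifying that a balanced, near-optimal edge-multiplicity profile is genuinely realizable by distinct triangulations (the edge-disjoint and cocircular constructions), and (ii) enforcing distinctness of the $k$ global solutions while the separable objective is optimized region-by-region.
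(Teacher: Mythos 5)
Your skeleton is essentially the paper's: the identity $\SumSD[T_1,\dots,T_k]=k^2(n-3)-\sum_e x_e^2$ (hence exact optimality of pairwise edge-disjoint families), the three-way split in Part~2 (disjoint triangulations when $k\le n/2$, the engine of \Cref{thm: dnttheorem} when $k\ge 2/\varepsilon$, brute force over $k$-subsets in the remaining window $n/2<k<2/\varepsilon$ where $n=O(1/\varepsilon)$), and the cocircular-region decomposition in Part~3 all match the paper's proof. For Part~1 your mechanism differs in an interesting way: the paper does not restrict the DP to a sub-DAG of $\sigma$-optimal transitions but instead defines a two-dimensional weight $w_i(T)=\sum_{t\in tr(T)}\bigl(\sigma(t),\,\sum_{e\in t}\sum_{T_j\in S_i}\mathbbm{1}(e\in T_j)\bigr)$ ordered lexicographically, so that the farthest quality-optimal triangulation is simply the minimum-weight triangulation, and then invokes Eppstein's $O(kn^3)$ $k$-best MWT enumeration; your restricted-DAG alternative is workable for additive measures, though your characterization of optimality in the $\min/\max$ case is off as stated (optimal means every triangle lies within the extremal threshold $\sigma^*$, not that each \emph{meets} it) --- an easy fix you correctly flagged.

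The genuine gaps are the two ingredients you defer. First, the existence and $O(kn)$-time construction of $k\le n/2$ pairwise edge-disjoint triangulations of a convex polygon, which you name as your "main obstacle," is exactly the paper's \Cref{lem:disjtriangulations}, proved by the zigzag/rotation construction starting at successive vertices; without it neither Part~2's large-$n$ branch nor Part~3's Case~1 closes. Second, your Part~3 plan --- enumerate the $O(1/\varepsilon)$ "most contributing" regions at cost $n^{O(1)}$ each --- does not hold up: a cocircular region with $r$ points has $4^{\Theta(r)}$ triangulations, so enumerating the $k$-tuple of behaviors in even one region costs $4^{O(rk)}$, which is not $n^{O(1)}$ unless $r$ is small, and you never establish that $O(1/\varepsilon)$ regions dominate the objective. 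The paper instead splits on region size: if some $|C_\ell|\ge 2k$, disjoint triangulations in every large region give exact optimality there and, crucially, force global distinctness for free, with brute force confined to regions of size $O(1/\varepsilon)$; if all regions satisfy $|C_\ell|<2k$, it invokes \Cref{lem: delaunay-brute-force}, which handles your acknowledged delicate point (ii) by explicitly enumerating, for each pair $(i,j)$, a region $\ell_{ij}$ in which $T_i$ and $T_j$ are assigned distinct sub-triangulations. Without such a pair-distinguishing device, region-by-region optimization of the separable objective can return repeated global triangulations, so your proposal as written does not yield a valid solution set.
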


\subsection{Min-DNT Results}

We now state our results on triangulations that maximize the minimum distance between two triangulations, denoted by $\MinSD$. In general, the max-min version of dispersion is considered harder than the max-sum version, and some hardness results for matroids are given by Fomin et al.~\cite{fomin2023diverse}. For this version, we will ignore the niceness constraint, and only focus on the min-DT problem.

For simple polygons, we relate the min-DT problem to the problem of computing $A_q(n,d)$, which is the maximum number of $q$-ary codewords of length $n$ with pairwise Hamming distance at least $d$. If we let $D_q(n,m,d)$ denote the decision version $A_q(n,d) \geq m$, then we can show the following interesting result, whose proof can be found in \Cref{sec: aqnd-proof}.

\begin{restatable}[Reduction to Hamming Codes]{theorem}{cctrihardness}\label{thm:codes}
   Assume there is an algorithm that, given a polygon with $n$ vertices and an integer $k = O(n)$, runs in time in $\mathrm{poly}(n)$ and outputs $k$ diverse triangulations maximizing $\min_{\normalfont\text{SD}}$.
   Then there is an algorithm for computing $A_2(n, d)$ for any $d > n/2$ in time $\mathrm{poly}(n)$.
\end{restatable}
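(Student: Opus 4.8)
The plan is to build a family of simple polygons whose triangulations are in bijection with binary strings, in such a way that the symmetric difference of two triangulations is exactly twice the Hamming distance of the corresponding strings. Then computing the maximally $\min_{\text{SD}}$-diverse family of $k$ triangulations becomes the problem of finding $k$ binary words of length $n$ whose minimum pairwise Hamming distance is as large as possible, and a search over $k$ recovers $A_2(n,d)$.

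Concretely, I would construct a ``ladder'' polygon $P_m$ consisting of $m$ convex quadrilateral cells $Q_1,\dots,Q_m$ glued in a chain, with reflex vertices inserted between consecutive cells so that the shared (``rung'') edges are \emph{forced} diagonals, present in every triangulation. Each cell $Q_i$ is then triangulated independently by exactly one of its two diagonals, so the triangulations of $P_m$ biject with $\{0,1\}^m$ via the chosen diagonal in each cell. If $T_x$ and $T_y$ are the triangulations indexed by $x,y\in\{0,1\}^m$, the only edges on which they can differ are the two diagonals inside a cell, and a cell contributes both of its diagonals to $T_x \Delta T_y$ precisely when $x$ and $y$ disagree there; hence $|T_x \Delta T_y| = 2\, d_H(x,y)$. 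The polygon $P_m$ has $\Theta(m)$ vertices, so to target code length $n$ I take $m=n$ and obtain a polygon on $\Theta(n)$ vertices, on which the oracle still runs in $\mathrm{poly}(n)$ time.

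Given the oracle, let $f(k)$ denote the optimal value $\MinSD[T_1,\dots,T_k]$ it returns on $P_n$ for a given $k$ (recoverable from its output in polynomial time). By the distance identity, $f(k) = 2\cdot\max\min_{i<j} d_H(x_i,x_j)$ over all choices of $k$ distinct words, so $f(k)\ge 2d$ holds if and only if there exist $k$ codewords of length $n$ with pairwise Hamming distance at least $d$, i.e. if and only if $A_2(n,d)\ge k$ (distinctness of the words is automatic once $d\ge 1$). Since $f$ is non-increasing in $k$ — a diverse family of size $k+1$ restricts to one of size $k$ with no smaller minimum distance — we have $A_2(n,d)=\max\{k : f(k)\ge 2d\}$, which I would locate by binary search over $k$, using one oracle call per step. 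Each call runs in $\mathrm{poly}(n)$, so the whole procedure is $\mathrm{poly}(n)$.

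The restriction $d>n/2$ is exactly what keeps this inside the oracle's regime: by the Plotkin bound, $2d>n$ forces $A_2(n,d)\le 2\lfloor d/(2d-n)\rfloor \le 2n$, so the relevant range is $k\in\{2,\dots,O(n)\}$, matching the oracle's requirement $k=O(n)$ and keeping the search polynomial; for $d\le n/2$ the value $A_2(n,d)$ can be exponentially large and cannot even be written down in polynomial time. I expect the main obstacle to be the gadget itself: one must verify that the reflex ``pinches'' between cells genuinely force every rung diagonal, so that the cells are triangulated independently and no ``long'' diagonal crossing several cells is ever available. This forcing is precisely what guarantees the exact $\{0,1\}^m$ bijection and the clean factor-of-two distance identity that the reduction rests on.
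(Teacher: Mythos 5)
Your proposal is correct and follows essentially the same route as the paper's proof: a polygon with $n$ independently triangulable convex quadrilateral cells (your ``ladder'' playing the role of the paper's recursively-defined spiral polygon, whose forced diagonals likewise isolate the cells), the identity $|T_x \Delta T_y| = 2\,d_H(x,y)$, and a binary search over $k$ using one oracle call per step. Your added justifications --- the Plotkin bound explaining why $d > n/2$ keeps $k = O(n)$, and the monotonicity of $f(k)$ validating the binary search --- are details the paper leaves implicit, but they do not change the argument.
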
 

As far as we know, computing $A_2(n,d)$ is still open, and only a limited number of instances are currently known, e.g., see~\cite{ostergard2011size}. 
Now, we provide an algorithm for the min-DT problem, whose proof can be found in \Cref{sec: proof-min-dt}.
\begin{restatable}[Algorithm for Min-DT]{theorem}{mindttheorem}
    \label{thm: mindttheorem}
    Let $P$ and $k$ be given as in Definition 2 of the DNT problem.
    Define $r=2(n-3)-\frac{d_{\mathrm{OPT}}}{2}$, where $d_{\mathrm{OPT}}$ denotes optimal diversity in $\MinSD$ measure.
    Then, there exists an $ r^{O(k)} $-time algorithm that outputs $k$ triangulations such that
    \[
        \MinSD[T_1,\ldots,T_k] \geq \frac{1}{2}\cdot \MinSD[T'_1,\ldots,T'_k]
    \]
    for any triangulations $T'_1,\ldots,T'_k$ of $P$.
\end{restatable}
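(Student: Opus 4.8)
The plan is to recast Min-DT as a \emph{packing} problem in the metric space of triangulations and solve it greedily, deferring all geometry to a dynamic-programming oracle. First note that the symmetric difference $d(T,T')=|T\Delta T'|$ is a metric (it is the Hamming distance between the indicator vectors of the two diagonal sets) and that for any two triangulations $d(T,T')=2(n-3)-2\,|T\cap T'|$, so \emph{maximizing the minimum pairwise distance is the same as minimizing the maximum pairwise overlap}. I would guess the optimum value $d_{\mathrm{OPT}}=\max_{T'_1,\dots,T'_k}\MinSD[T'_1,\dots,T'_k]$ by trying all $O(n)$ even integers in $[0,2(n-3)]$; then, fixing a target $d$ with threshold $s=d/2$, call a set of triangulations an \emph{$s$-packing} if all its pairwise distances are at least $s$.

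The factor-$\tfrac12$ guarantee comes from a standard packing/covering argument. Fix $d=d_{\mathrm{OPT}}$, let some $k$-tuple $o_1,\dots,o_k$ achieve $\MinSD=d_{\mathrm{OPT}}$, and let $M$ be \emph{any} maximal $(d_{\mathrm{OPT}}/2)$-packing. Maximality means every triangulation lies within distance $<d_{\mathrm{OPT}}/2$ of some element of $M$, so each $o_i$ has a witness $m(o_i)\in M$. If $m(o_i)=m(o_j)$ for $i\neq j$, the triangle inequality gives $d(o_i,o_j)< d_{\mathrm{OPT}}/2+d_{\mathrm{OPT}}/2=d_{\mathrm{OPT}}$, contradicting optimality; hence $o_i\mapsto m(o_i)$ is injective and $|M|\ge k$. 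Thus a greedy procedure that repeatedly adds \emph{any} triangulation at distance $\ge d_{\mathrm{OPT}}/2$ from all previously chosen ones never gets stuck before reaching $k$ elements. Searching $d$ downward from $2(n-3)$ and keeping the largest $d$ for which this greedy yields $k$ triangulations therefore returns a set whose minimum pairwise distance is $\ge d/2\ge d_{\mathrm{OPT}}/2$, which is exactly the claimed approximation since $d_{\mathrm{OPT}}$ dominates $\MinSD[T'_1,\dots,T'_k]$ for every $k$-tuple.

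The one nontrivial ingredient, and the main obstacle, is the \emph{augmentation oracle}: given the current set $S$ with $|S|\le k-1$, decide whether there is a triangulation $T$ of $P$ whose overlap with every $A\in S$ is at most the cap $c:=(n-3)-d/4$ (equivalently $d(T,A)\ge d/2$), and if so produce one. I would implement this by a dynamic program over the $O(n^2)$ sub-polygons $P_{i,j}$ spanned by vertices $v_i,\dots,v_j$, exactly as in the classical weight-triangulation DP: a triangulation of $P_{i,j}$ is obtained by choosing the apex $\ell$ of the triangle on the base edge $v_iv_j$ and recursing on $P_{i,\ell}$ and $P_{\ell,j}$. The crucial point is that the overlap $|T\cap A|$ is \emph{additive} over the diagonals of $T$, so the DP state is the vector $\vec{o}=(o_A)_{A\in S}$ recording, per sub-polygon, how many chosen diagonals lie in each $A$, with transitions that add the two apex-edge contributions. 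Since we only care about triangulations meeting the cap, I would \emph{truncate} every coordinate at $c+1$ (marking overflow as infeasible), so each sub-polygon stores at most $(c+2)^{|S|}$ Pareto-relevant states.

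Finally I would account for the running time. From $c=(n-3)-d/4$ and $r=2(n-3)-d_{\mathrm{OPT}}/2$ we get $c=r/2$ at $d=d_{\mathrm{OPT}}$; because the downward search halts at the first (largest) successful $d$, which is $\ge d_{\mathrm{OPT}}$, every DP we actually run uses $d\ge d_{\mathrm{OPT}}$ and hence cap $\le r/2$, giving at most $(r/2+2)^{k}=r^{O(k)}$ states per sub-polygon. Each of the $O(n)$ guesses runs the greedy for $\le k$ rounds, each round a DP over $O(n^2)$ sub-polygons with $O(n)$ apex choices and an $r^{O(k)}$-by-$r^{O(k)}$ state combination, all absorbed into $r^{O(k)}$ since $r\ge n-3$ makes $\mathrm{poly}(n)=r^{O(1)}$. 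The delicate points left for the full proof are the correctness of the truncated DP (that capping never discards a feasible cap-respecting triangulation) and the bookkeeping that each diagonal is counted exactly once across sub-polygon boundaries; the packing argument and the metric property are then routine.
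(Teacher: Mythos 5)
Your proposal is correct, and it reaches the theorem by a genuinely different route than the paper, even though both rest on the same computational core: your multi-budget interval DP over sub-polygons, with one saturating overlap counter per already-chosen triangulation, is precisely the paper's Multi-Criteria Triangulation (MCT) oracle (its Lemma gives time $O((b_1^2+1)\cdots(b_i^2+1)\,n^3)$), with the same $r^{O(k)}$ state space. Where you diverge is the approximation argument. The paper runs farthest insertion, computing at each round the \emph{exact} farthest triangulation by calling the MCT oracle with $r=0,1,2,\dots$ until feasible, and then imports the known $\tfrac12$-approximation guarantee for max--min dispersion from Ravi--Rosenkrantz--Tayi. You instead search the $O(n)$ even thresholds $d$ downward, run a threshold greedy that accepts \emph{any} triangulation at distance $\geq d/2$ from the current set, and prove the $\tfrac12$ factor from scratch via the maximal-packing argument: every maximal $(d/2)$-packing has size $\geq k$ whenever $d\leq d_{\mathrm{OPT}}$, since two optimal solutions sharing a witness would be at distance $< d \leq d_{\mathrm{OPT}}$, so the greedy never stalls before $k$ elements and the first successful threshold satisfies $d^\ast \geq d_{\mathrm{OPT}}$. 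Your route buys self-containedness (no external dispersion theorem) and a strictly weaker oracle requirement — feasibility under the caps rather than optimization, since you never need the true farthest point, only some point beyond the threshold — while the paper's route gets the guarantee for free from the literature and avoids the threshold search entirely, at the price of exact per-round maximization. Your runtime accounting is also sound: because the downward search halts at $d^\ast\geq d_{\mathrm{OPT}}$, every DP invoked has cap $c=(n-3)-d/4\leq r/2$ (modulo rounding $d/4$ to an integer, which is harmless), and $r\geq n-3$ lets $\mathrm{poly}(n)$ factors be absorbed into $r^{O(k)}$. The two loose ends you flag — that saturating the counters at $c+1$ discards no cap-respecting triangulation, and that each diagonal (but no boundary edge) is counted exactly once across the split at the apex — are indeed routine and are likewise left implicit in the paper's proof of its MCT lemma.
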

Note that the algorithm in \Cref{thm: mindttheorem} runs fast if $ d_{\mathrm{OPT}} $ is large, i.e., if $P$ has very diverse triangulations, our algorithm can find those faster.

\section{Technical Overview for Sum-DNT: Enter Bicriteria Triangulations}\label{sec: technical-overview-sum-dnt}

In the remainder of the main body of this abstract, we will focus on the Sum-DNT version, since we consider the polynomial time approximation algorithms for this version as our main result. Proofs of the hardness of Sum-DNT (\Cref{thm: DDNT-Hardness}) can be found in~\Cref{sec: DDNT-Hardness}, the relation of min-DT to Hamming codes (\Cref{thm:codes}) can be found in~\Cref{sec: aqnd-proof} and the algorithmic results on min-DNT version (\Cref{thm: mindttheorem}) can be found in~\Cref{sec: proof-min-dt}.

In the Sum-DNT problem (\Cref{def: dnt}), the diversity is the sum, or the average, of the pairwise distances between the $k$ triangulations. Let us recall the classical problem of dispersion, where one wants to obtain a set of $k$ maximally dispersed (w.r.t. either the sum or the minimum of the pairwise distances) points $x_1,\cdots,x_k$ from a metric space $\mathcal{M}$ of $N$ points. While $\classNP$-complete, an application of the farthest insertion method gives an approximation algorithm for both the sum and the minimum versions of this problem. Set $x_1$ to be an arbitrary point in $\mathcal{M}$, and iteratively select $x_i =\argmax_{p \in \mathcal{M}} \sum_{1\leq j \leq i-1} d(p,x_j)$ for the sum version, and $x_i =\argmax_{p \in \mathcal{M}} \min_{1\leq j \leq i-1} d(p,x_j)$ for the minimum version. Both algorithms give a $(1/2)$-approximation guarantee to their respective objective functions \cite{BorodinJLY17,ravi1994heuristic}, and clearly run in time $\mathrm{poly}(N,k)$.

For now let us aim for a $(1/2)$-approximation factor for the Sum-DNT problem; later we will show how to boost it to $\betak$. Let $\mathcal{T}$ be the space of all triangulations of a polygon with $n$ vertices, and $\mathcal{T}_\alpha$ the space of $\alpha$-optimal, or nice triangulations as in Definition 2. Applying the dispersion algorithms (for the sum of pairwise distances) from \cite{BorodinJLY17} translates to the following: having found $i$ triangulations $T_1,\cdots,T_i$, we set $T_{i+1}$ as the triangulation $T \in \mathcal{T}_\alpha$ that maximizes $\sum_{j=1}^{i} |T_j\Delta T|$. This would give a $(1/2)$-approximation to the $\SumSD$.

\subsection{Bi-Criteria Triangulations}\label{sec: bct-introduction}

It turns out that finding $T_{i+1}$ is a special case of the following problem.
\begin{definition}[$\BCT(\cdot,\cdot)$] \label{def: bct}
    Given a polygon $P$, two measures \emph{weight} and \emph{quality} $ w,\sigma: \cT \rightarrow \mathbbm{R}_{\geq0} $ and a bound $ B \geq 0 $, the Bi-Criteria Triangulation is a solution to the following program:
    \begin{equation*}
    \argmin\{w(T): T \in \mathcal{T} \text{ and } \sigma(T) \leq B\}.
    \end{equation*}
    When $B=\alpha\sigma^*$, the obtained solution is called $\alpha$-optimal $\BCT$.
    Additionally, $k$ distinct triangulations $T_1,\ldots,T_k$ of $P$ are said to be the $k$-best enumeration for BCT provided that 1) $ w(T_i) \leq B  $ for each $i\in[k]$ and 2) $ w(T_1) \leq \ldots \leq w(T_k) \leq w(T') $ for any $ T' \in \mathcal{T}\setminus\{T_1,\ldots,T_k\} $ such that $ \sigma(T') \leq B $.
    The notion of $k$-best $\alpha$-optimal $\BCT$s are defined analogously.
\end{definition}

\noindent\textbf{Remark.} Although the BCT is defined as a minimization problem with ``$\leq$'', $\min$ and ``$\leq$'', respectively, can be replaced with $\max$ and $\geq$ independently. 

When $k=1$, the BCT problem asks for a \emph{triangulation that is simultaneously good w.r.t. two measures}. Surprisingly, this problem has not appeared in literature, even though historically very related questions have been asked, such as whether the Delaunay triangulation has low Euclidean length~\cite{lloyd1977triangulations}.

\vspace{2mm}\noindent\textbf{Reduction of Farthest Insertion to BCT.} Assuming one has an algorithm for BCT, we show how to compute $T_{i+1}$ 
(which is $\argmax_{T \in \mathcal{T}_\alpha} \sum_{j=1}^{i} |T_j\Delta T| $).
This follows from a general framework by \cite{hanaka2023framework,gao2022obtaining}, which we translate to our setting in the following proposition.

\begin{restatable}{proposition}{RestMinCE}\label{prop: min-ce}
Let $ T_1, \dots, T_i $ be $\alpha$-optimal triangulations of $P$. For any $T\in\cT_{\alpha}$, define $ w_i(T) := \sum_{e\in T}\sum_{j=1}^i \mathbbm{1}(e\in T_j)$.
Then, \begin{equation}\label{eq: farthest-bct-relationship}
    \argmax_{T \in \cT_{\alpha}} \sum_{j=1}^{i} |T_j\Delta T| = \argmin_{T\in\cT_{\alpha}}{w_i(T)}.
\end{equation}
\end{restatable}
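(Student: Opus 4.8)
The plan is to reduce the whole statement to a single structural fact: every triangulation of a simple $n$-gon uses exactly $n-3$ diagonals. This constancy of the diagonal count is what converts the symmetric-difference objective into an affine (in fact, decreasing linear) function of the overlap count $w_i$, after which the claimed equality of $\argmax$ and $\argmin$ is immediate.

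Concretely, I would first rewrite each pairwise distance through intersections. For any two triangulations $T_j, T \in \mathcal{T}$, viewing both as sets of diagonals,
\[
    |T_j \Delta T| = |T_j| + |T| - 2\,|T_j \cap T| = 2(n-3) - 2\,|T_j \cap T|,
\]
where the second equality uses $|T_j| = |T| = n-3$. Summing over the $i$ already-chosen triangulations gives
\[
    \sum_{j=1}^{i} |T_j \Delta T| = 2\,i\,(n-3) - 2\sum_{j=1}^{i} |T_j \cap T|.
\]
Next I would identify the overlap sum with $w_i(T)$: since $|T_j \cap T| = \sum_{e \in T} \mathbbm{1}(e \in T_j)$, interchanging the two finite sums yields
\[
    \sum_{j=1}^{i} |T_j \cap T| = \sum_{e \in T}\sum_{j=1}^{i} \mathbbm{1}(e \in T_j) = w_i(T).
\]
Substituting back, $\sum_{j=1}^{i} |T_j \Delta T| = 2\,i\,(n-3) - 2\,w_i(T)$. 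The leading term $2\,i\,(n-3)$ does not depend on the choice of $T$, so over the feasible set $\mathcal{T}_\alpha$ maximizing the left-hand side is equivalent to minimizing $w_i(T)$, which is exactly the desired identity \eqref{eq: farthest-bct-relationship}.

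There is no genuine obstacle in this argument; it is essentially one line once the right quantities are expanded. The only point requiring care is making the feasible set explicit: the additive constant $|T_j| + |T|$ is the same for every $T \in \mathcal{T}_\alpha$ precisely because all triangulations share the diagonal count $n-3$, and it is this that lets the constant be pulled out of the $\argmax$. (Had we counted polygon boundary edges as well, they are common to all triangulations and cancel in every symmetric difference, so the identity would be unchanged.) This is exactly the structural feature that makes the farthest-insertion step for the $\SumSD$ objective collapse to a single minimization of $w_i$ over $\mathcal{T}_\alpha$, i.e.\ to a BCT instance with weight measure $w_i$, thereby connecting the dispersion framework to the bicriteria triangulation machinery developed next.
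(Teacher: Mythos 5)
Your proposal is correct and follows essentially the same route as the paper's proof: both rewrite each pairwise distance via $|T_j \Delta T| = 2(n-3) - 2|T_j \cap T|$ using the fact that every triangulation of a simple $n$-gon has exactly $n-3$ diagonals, then swap the finite sums to identify $\sum_{j=1}^{i}|T_j \cap T|$ with $w_i(T)$, so that maximizing the diversity sum equals minimizing $w_i$ over $\cT_\alpha$. Your added remark about boundary edges canceling is a harmless clarification not needed in the paper's formulation, since triangulations there are defined as sets of diagonals.
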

\begin{proof}
Let $ T_1,\ldots,T_i $ be $\alpha$-optimal triangulations. Then,
\[
\begin{split}
    \max_{T \in \cT_{\alpha}} \sum_{j=1}^i \lvert T \Delta T_j \rvert
        &= \max_{T \in \cT_{\alpha}}\sum_{j=1}^i \Big( 2(n - 3) - 2|T \cap T_j| \Big) \\
        &= \max_{T \in \cT_{\alpha}} \Big( 2(n-3)i - 2\sum_{j=1}^i |T \cap T_j| \Big) \\
        &= 2(n - 3)i - 2\min_{T \in \cT_{\alpha}}\sum_{j=1}^i |T \cap T_j| \\
        &= 2(n-3)i - 2\min_{T \in \cT_{\alpha}}\sum_{e\in T}\sum_{j=1}^i \mathbbm{1}(e\in T_j),
\end{split}
\]
since every triangulation of a simple $n$-gon comprises $(n-3)$ edges. 
\end{proof}
\noindent Note that the right hand side of \Cref{eq: farthest-bct-relationship} is exactly the $\alpha$-optimal BCT with measures $w_i$ and $\sigma$. Hence, computing a farthest $\alpha$-optimal triangulation is equivalent to solving an $\alpha$-optimal $\BCT$.

We now illustrate the algorithm. For the time being, we denote $ f(w_i,\sigma) $ and $f_k(w_i,\sigma)$, respectively, the running times for solving the $\alpha$-optimal $\BCT(w_i,\sigma)$ and the $k$-best $\alpha$-optimal $\BCT(w_i,\sigma)$.

Initially, set the weight $w_0$ of each allowed diagonal of $P$ to 0, $\sigma$ as in the input to the Sum-DNT problem, and $B$ as $\alpha \sigma^*$. Then, solve the associated BCT problem, i.e., $ \BCT(w_0,\sigma) $, obtaining $T_1$.
Next, for every $ e \in T_1$, increase the weight of $e$ by 1, solve $ \BCT(w_1,\sigma) $, and call the output triangulation $T_2$.
Increase the weight of each $ e \in T_2$ by 1, solve  $ \BCT(w_2,\sigma) $, and so on.
Note that during the process a copy of one of the preceding triangulations might be obtained again, e.g., when all allowed edges of $P$ have appeared exactly the same number of times.
To avoid duplicates, we utilize the $k$-best enumeration procedure: given $i$ triangulations, instead of finding a single BCT, find $(i+1)$ distinct BCTs, $T'_1, \ldots, T'_{i+1} $ such that $ w_{i}(T'_1) \leq \ldots \leq w_{i}(T'_{i+1}) \leq w_{i}(T') $ for all $ T' \in \mathcal{T}_{\alpha}\setminus\{T'_1,\ldots,T'_i\} $. Then, one of them must be distinct from all previous triangulations, and we take this triangulation as our $T_{i+1}$.
Therefore, to obtain $\tfrac12$-approximation in diversity, it is enough to run the $k$-best $\alpha$-optimal $\BCT$ program $k$ times, which incurs a total running time of $ O(kf_k(w_k,\sigma))$.

\vspace{2mm}\noindent\textbf{Improving $1/2$ to $\betak$.} We now illustrate how to obtain the improved $ \betak $ factor for large $k$.
It turns out that the symmetric difference of two sets is a \emph{negative type} metric~\cite{duin2009dissimilarity}, and for such metrics it is possible to obtain the improved approximation factor by using a local-search based swapping algorithm ~\cite{cevallos2019improved,hanaka2023framework}.
This swapping algorithm begins with any set of $k$ solutions, say $ S_0 $.
At step $i$, then, the algorithm finds two triangulations $ T^* $ from outside $ S_i $ and $T_j$ inside $S_i$ that maximize the diversity of the current solution set when they are swapped.
Note that such $T^*$ can be found by comparing the farthest triangulations of $ S_i \setminus \{T_j\} $ for all $j\in[k]$.
Therefore, each step of the swapping algorithm can be done by performing farthest insertions $k$ times.
Cevallos et al.~\cite{cevallos2019improved} guarantee that one needs $O(k\log k)$ iterations of each step to obtain the desired diversity.
Consequently, the overall running time for finding $k$ triangulations with diversity of $\betak$ is $O(f_k(w_k,\sigma) k^2 \log k)$.

\subsection{Results on BCT}

With the general reduction above, we now state our algorithms for the BCT problem that will deliver the promised approximation algorithms for the Sum-DNT problem in \Cref{thm: dnttheorem} by showing $f_k(w_k,\sigma) \in \mathrm{poly}(n,k)$.
Before we state the approximation algorithms for the BCT problem, and since the BCT problem is interesting on its own, we mention a potentially important but tangential (for our purposes) observation, where proof can be found in \Cref{sec: bct-hardness}.

\begin{restatable}[Hardness of BCT]{theorem}{ResBCTHardness}\label{thm:bcthardness}
    Let the weight $w$ on a diagonal equal its Euclidean length. Let $\sigma$ be any of the near-Delaunay measures in \cite{van2021near}, and $B \geq 0$ be given. Then solving the $\BCT$ problem w.r.t. $w$ and $\sigma$ is $\nphard$.
\end{restatable}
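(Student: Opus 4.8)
The plan is to reduce from the \emph{Minimum Weight Triangulation} (MWT) problem, which is known to be \nphard{} \cite{mulzer2008minimum}. The central observation is that in the $\BCT$ objective the weight $w(T)$ equals the total Euclidean length of the diagonals of $T$, which is exactly the MWT objective; hence, if we can make the quality constraint $\sigma(T) \le B$ non-binding, then solving $\BCT(w,\sigma)$ amounts to computing a minimum-weight triangulation. Because the theorem lets $B$ be part of the input and asks only that $\sigma$ be \emph{some} near-Delaunay measure, it suffices to exhibit, for each such $\sigma$, a choice of $B$ under which every triangulation is feasible.

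Concretely, given an MWT instance I would keep the geometry unchanged, set $w$ to Euclidean length and $\sigma$ to the chosen near-Delaunay measure of \cite{van2021near}. Each such measure can be written as an aggregate ($\min$, $\max$, or sum) of a bounded per-edge or per-triangle deviation from the Delaunay property, so over the finite set of triangulations it attains a maximum value $M$ that admits a trivial, polynomially encodable upper bound (for instance, the number of diagonals for a count-type measure, or a fixed geometric constant such as $\pi$ for an angle-type measure). Setting $B := M$ makes $\sigma(T) \le B$ hold for \emph{all} triangulations, so $\argmin\{w(T): \sigma(T)\le B\} = \argmin_{T} w(T)$, i.e., the $\BCT$ optimum coincides with the minimum-weight triangulation.

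The correctness direction is then immediate: any algorithm solving this $\BCT$ instance returns a minimum-weight triangulation, so a $\mathrm{poly}$-time algorithm for $\BCT$ would yield a $\mathrm{poly}$-time algorithm for MWT, forcing $\classP=\classNP$. The two steps that require care are (i) checking that, for each individual measure in \cite{van2021near}, the vacuous bound $B$ is both a genuine upper bound and encodable in polynomial size, which I would verify case by case using only the crude bounds above, and (ii) matching the combinatorial domain.

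The step I expect to be the main obstacle is this domain match. The \nphard{} instances of \cite{mulzer2008minimum} are \emph{point sets}, whereas the $\BCT$ is phrased over triangulations of a polygon via non-crossing diagonals. To bridge this I would realize the point-set instance as a polygon with (point-)holes, or equivalently treat the convex hull as the outer boundary with the interior points as mandatory vertices, so that triangulations of this object are exactly triangulations of the point set; one must then confirm that the reformulation preserves both $w$ and the near-Delaunay value of every triangulation. This generality is necessary rather than incidental: for a simple polygon \emph{without} holes MWT is solvable in polynomial time by dynamic programming, and we expect the length-plus-near-Delaunay $\BCT$ to be as well, so the hardness genuinely relies on the point-set (equivalently, holes) setting.
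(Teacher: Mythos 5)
There is a genuine gap, and it is fatal to the approach rather than a fixable detail. Your plan makes the quality constraint vacuous (choose $B$ at least the maximum value $M$ of $\sigma$), so that $\BCT(w,\sigma)$ collapses to pure minimum-weight triangulation. But the $\BCT$ problem in this paper is defined over triangulations of a \emph{simple polygon without holes} (this is fixed in Section~1.1, and the theorem is about that problem), and over a simple polygon the minimum Euclidean length triangulation is computable in $O(n^3)$ time by the classical dynamic program — so your reduction produces an instance that is solvable in polynomial time and yields no hardness. You notice this yourself and propose to escape to point sets (equivalently, polygons with holes), but that changes the problem: you would then be proving hardness of a different problem, not the stated theorem. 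Your closing conjecture that the length-plus-near-Delaunay $\BCT$ on simple polygons should also be in $\classP$ is precisely what the theorem refutes: the hardness does \emph{not} come from the MWT objective, and it cannot be obtained with a non-binding constraint.

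The paper's proof works because the constraint is \emph{binding}: it reduces from 0/1-Knapsack, building a simple $(7n-2)$-gon out of kite gadgets attached to a triangular bus. In the $i$-th kite the two diagonals differ in length by $v_i$ (encoding the item's value in the Euclidean-length objective), while the angles of the kite are calibrated, via the generalized opposing-angles measure $\quaddel_g$, so that choosing the shorter diagonal incurs a near-Delaunay penalty of exactly $c\cdot w_i$ (encoding the item's weight in the constraint $\sigma(T)\leq c\cdot W$). The rest of the polygon has a unique, locally Delaunay triangulation, so the only length/quality trade-offs are the $n$ independent kite choices, and feasible triangulations of length at most $L_{\max}-V$ correspond exactly to knapsack solutions. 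Hardness for the remaining near-Delaunay measures then follows from monotone relationships between $\quaddel_g$ and each of them on the gadget. Note this is a weak (pseudo-polynomial) hardness via Knapsack, consistent with the paper's pseudo-polynomial DP and FPTAS for $\BCT$; a vacuous-constraint reduction, even in a domain where MWT is hard, could never exhibit this bicriteria structure, which is the actual content of the theorem.
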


Despite the hardness result above, we now state the promised positive algorithms for BCT. Fortunately, it turns out that the BCT problems admits pseudo-polynomial time algorithms if both weight and quality measures $w$ and $\sigma$ are decomposable.
Furthermore, we also prove that the BCT problem admits an FPTAS. These two results will deliver the two results stated in \Cref{thm: dnttheorem}.

\begin{restatable}[Algorithms for BCT]{theorem}{ResBCTAlgorithms}\label{thm: bct-algorithms}
    Consider the problem $\BCT(w,\sigma)$ problem with bound $B \geq 0$, where both $w$ and $\sigma$ are decomposable. Then for any integer $k\geq1$ the following hold:
    \begin{enumerate}
        \item\label{itm: bct-algorithm-integer-measure}
    \textbf{(Integer‐valued measure)} 
    The $k$-best enumeration for $\BCT(w,\sigma)$ can be solved in time
    $ O\bigl((M+1)^2\,k\,n^3\bigr)$,
    where $M=W$ in the case $w(\cdot)$ is integer-valued and $w(T)\in[0,W]$ for some integer $W\geq0$ and $M=B$ in the case $\sigma(\cdot)$ is integer-valued.
    \item\label{itm: bct-algorithm-fptas} \textbf{(FPTAS)} For any $\varepsilon>0$, let $ T^* = \BCT(w,\sigma)$. Then, there is an $O(\varepsilon^{-2} n^5 )$-time algorithm that returns a triangulation $\tilde{T}$ such that $w(\tilde{T}) \leq w(T^*) $ and $ \sigma(\tilde{T})\leq (1+\varepsilon)B $. In particular, its $k$-best version runs in time $O\left( \varepsilon^{-2} n^5 k \right)$.
    \end{enumerate}
\end{restatable}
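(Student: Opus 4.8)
The plan is to build everything on the classical $O(n^3)$ dynamic program for triangulating a simple polygon. Label the vertices $v_1,\dots,v_n$ in boundary order and, for every pair $(i,j)$ such that $v_iv_j$ is an edge or a valid (interior, non-crossing) diagonal, let $S_{ij}$ be the sub-polygon bounded by the chain $v_i,\dots,v_j$ together with $v_iv_j$. A triangulation of $S_{ij}$ is determined by the apex $v_m$ of the unique triangle incident to $v_iv_j$, together with triangulations of $S_{im}$ and $S_{mj}$; this gives a recurrence in which we range over all apexes $m$ with $i<m<j$ for which $v_iv_mv_j$ lies inside $P$, combining the two sub-solutions with the decomposable operator $\odot$ on the per-edge or per-triangle values $\sigma'$. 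The crucial structural fact I will use throughout is that this recursion is a \emph{bijection}: a triangulation of $S_{ij}$ determines the apex $v_m$, and recursively the whole parse, uniquely, so distinct parse trees correspond to distinct triangulations. This makes the distinctness requirement in the $k$-best enumeration automatic and lets the enumeration be carried out purely locally.

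For Part~\ref{itm: bct-algorithm-integer-measure} I would augment the state with a single discretized dimension recording the value of whichever measure is the integer-valued, sum-type one. If $\sigma=\sum_t\sigma'(t)$ is integer-valued I set $D[i,j,s]$ to be the minimum $w$-value over triangulations of $S_{ij}$ with $\sigma$-value exactly $s\in\{0,\dots,B\}$ (symmetrically track $w\in\{0,\dots,W\}$ and minimize $\sigma$ when $w$ is integer-valued). The other measure is enforced either as the objective or, when it is of $\min$/$\max$ type, as a purely \emph{local} constraint: a $\max$-decomposable bound $\sigma(T)\le B$ simply forbids any edge or triangle with $\sigma'>B$, and a $\min$-decomposable bound is carried by one extra Boolean flag recording whether some admissible element has appeared. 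Computing $D[i,j,\cdot]$ from its two children is a min-plus convolution over the split of $s$ between $S_{im}$ and $S_{mj}$, costing $O((M+1)^2)$ per apex; summing over $O(n)$ apexes and $O(n^2)$ sub-polygons yields $O((M+1)^2n^3)$ for the single optimum. For the $k$-best list I replace each cell value by the sorted list of its $k$ smallest feasible $w$-values with back-pointers; the convolution becomes a merge that, by the bijection above, produces only distinct triangulations, and the top-$k$ of a combination of sorted lists is extractable in $O(k)$ amortized time, giving the claimed $O((M+1)^2kn^3)$.

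For Part~\ref{itm: bct-algorithm-fptas} I would obtain the FPTAS by the standard scaling of the constraint measure $\sigma$. Set the grid width $\delta=\varepsilon B/(n-3)$ and round each per-element quality \emph{down} to $\bar\sigma'(t)=\lfloor\sigma'(t)/\delta\rfloor\delta$, inducing $\bar\sigma=\odot\,\bar\sigma'$ with $\bar\sigma(T)\le\sigma(T)\le\bar\sigma(T)+(n-3)\delta=\bar\sigma(T)+\varepsilon B$. Running the Part~1 program with the tracked dimension indexing multiples of $\delta$ uses only $M=O(n/\varepsilon)$ values while still minimizing the true (untouched, real-valued) $w$. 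The optimum $T^*$ has $\bar\sigma(T^*)\le\sigma(T^*)\le B$, so it is feasible for the rounded program; hence the returned $\tilde T$ satisfies $w(\tilde T)\le w(T^*)$, while $\sigma(\tilde T)\le\bar\sigma(\tilde T)+\varepsilon B\le(1+\varepsilon)B$. The running time is $O(M^2n^3)=O(\varepsilon^{-2}n^5)$, and the $k$-best variant multiplies this by $k$.

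The main obstacle I anticipate is the bookkeeping that makes the $k$-best enumeration both \emph{correct} and \emph{efficient}. I must argue carefully that merging the per-cell top-$k$ lists never drops a globally top-$k$ triangulation and never emits a duplicate; this is exactly where the parse-tree bijection does the real work and must be stated precisely, and it is also where the merge must be shown to run in $O(k)$ rather than $O(k^2)$ per combination, e.g.\ via a heap over the frontier of candidate pairs. A secondary but genuine difficulty is the non-convex setting: the $O(n^3)$ bound relies on restricting the recurrence to valid interior diagonals and triangles, so I would precompute visibility and validity of all $O(n^2)$ segments once and thread these feasibility predicates through every step, confirming that the $\min$/$\max$ decomposable constraints interact correctly with them.
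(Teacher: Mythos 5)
Your proposal is correct and follows essentially the same route as the paper: the same interval dynamic program over sub-polygons $P[i:j]$ augmented with one exact integer dimension for the integer-valued measure (tracking $w$ exactly and listing the $k$ smallest $\sigma$-values, or symmetrically budgeting $\sigma$ and minimizing $w$), $k$-best lists merged via an $O(k)$ selection per combination (the paper invokes Frederickson's selection algorithm, which gives genuine $O(k)$ rather than the $O(k\log k)$ of a frontier heap), and the identical round-down scaling for the FPTAS with feasibility of $T^*$ preserved. The only nits are cosmetic: for triangle-decomposable $\sigma$ a triangulation of an $n$-gon has $n-2$ triangles, so the grid width should be $\varepsilon B/(n-2)$ rather than $\varepsilon B/(n-3)$, and your parse-tree bijection is precisely the (implicit) reason the paper's recurrence over apexes and weight splits enumerates each triangulation exactly once.
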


\begin{proof}
\begin{enumerate}
    \item\textbf{(Integer-valued measure)} First, we prove for the cases where both $w(\cdot)$ and $\sigma(\cdot)$ are additive, triangle-decomposable measures. The remaining cases follow by an analogous argument, which we outline at the end of the proof.

    Assume that $w(\cdot)$ and $\sigma(\cdot)$ are additive, triangle-decomposable measures. Let $W$ be a fixed nonnegative integer.
    
    \hspace{4mm}\textbf{Case 1: $w(\cdot)$ is integer-valued and $w(T)\in[0,W]$.} Assume that $w(\cdot)$ is integer‑valued and satisfies $w(T)\in[0,W]$.
    Let $P[i:j]$ denote the closed chain of vertices of $P$, from $i$ to $j$, then returning to $i$ in a counterclockwise direction, and let $\cT[i:j]$ denote the collection of all triangulations of $P[i:j]$.
    Let $OPT_k(W',i,j)$ denote the sorted multiset of the $k$ smallest $\sigma(T)$ among $T\in\cT[i:j]$ with $w(T)=W'$.
    If fewer than $k$ such triangulations exist, we pad with $\infty$.
    Whenever $\overline{ij}$ is not an edge or diagonal of $P$, we set $OPT_k(W',i,j)=\{\infty,\ldots,\infty\}$; if $j=i+2$ and $\overline{ij}$ \emph{is} a diagonal of $P$, we set $OPT_k(W',i,i+2)=\{\sigma(\triangle(i,i+1,i+2)),\infty,\ldots,\infty \}$.

    \hspace{4mm}For $j>i+2$, since every triangulation in $\cT[i:j]$ chooses some vertex $m \in [i+2,j-2]$ to form the triangle $\triangle imj$, splitting at $m$ yields the following:
    \begin{equation}\label{eq: k-best}
        OPT_k(W',i,j) = \theta_k \Bigg(\bigcup_{m,W'_1,W'_2} \Big(OPT_k(W'_1,i,m) + OPT_k(W'_2, m,j) + \sigma(\triangle imj)\Big)\Bigg),
    \end{equation}
    where $\theta_k (A) $ denotes the smallest $k$ elements of $A$, the union is taken over all $ m \in [i+2, j-2] $ and all $ W'_1,W'_2 \in [0,W'-w(\triangle imj)] $ such that $ W'_1 + W'_2 = W'-w(\triangle imj) $, and $ A+B+c $ denotes $ \{ a+b+c\mid a\in A, b\in B \} $. Thus, dynamic programming on triples $(W',i,j)$ with $W'\in[0,W]$ and $1\leq i < j \leq n$ fills a table of size $(W+1)n^2$.

    \hspace{4mm} Fix $W'$, $i$ and $j$.
    For any $m$, $W'_1$ and $W'_2$, we can select the $k$ smallest elements of $ OPT_k(W'_1,i,m) + OPT_k(W'_2, m,j) + \sigma(\triangle imj) $ in time $O(k)$ by using the selection algorithm in \cite{frederickson1982complexity}.
    Thus, to evaluate the right-hand side of \Cref{eq: k-best}, we loop over all choices of $W'_1$, $W'_2$, and $m$, compute the $k$ smallest sums for each triple, and then select the smallest $k$ elements from their union. This entire process runs in $O(W'kn)$ time, i.e., one cell of the table can be filled in time $O(W'kn)$.
    Since the table consists of $ O(Wn^2) $ cells, the overall time bound to fill the entire table is $O(W^2kn^3)$.
    Finally, starting from $W'=0$, scan the cells $ OPT_k(W',1,n) $ for values $\leq B$, and return the desired set of up to $k$ smallest weights.
    If fewer than $k$ such values exist, we report that the given polygon has less than $k$ nice triangulations.

    \hspace{4mm}As usual, we can compute the actual triangulations with standard bookkeeping, without affecting the overall running time. We omit the details here.

    \vspace{2mm}\noindent\textbf{Case 2: $\sigma(\cdot)$ is integer-valued.} Suppose $\sigma(\cdot)$ takes integer values. The $k$-best enumeration can be carried out similarly as in Case 1, with only an additional $O(k)$ factor in the running time. Therefore, we concentrate here on finding a single optimal solution.

    \hspace{4mm}Let $OPT(B',i,j)$ denote the minimum weight of triangulation of $P[i:j]$ whose quality is at most $B'$:
\[ OPT(B',i,j) = \min\{ w(T) \mid \sigma(T) \leq B', T\in\cT[i:j] \}. \]
By convention, we define $ OPT(B',i,j) = w(\triangle(i,i+1,i+2)) $ when $j=i+2$.
When $j>i+2$, then $ OPT(B',i,j) $ can be found by using the following recurrence relation:
\begin{equation*}
    OPT(B',i,j) = \min\limits_{\substack{m \in [i+2, j-2] \\ B'_1,B'_2 \in [0, B'-\sigma(\triangle imj)], \\ B'_1 + B'_2 = B'-\sigma(\triangle imj)}} OPT(B'_1,i,m) + w(\triangle imj) + OPT(B'_2, m,j)
\end{equation*}

When any of $\overline{im}$, $\overline{mj}$ and $\overline{mj}$ is not an allowed diagonal of $P$, or $B'<\sigma(\triangle imj)$, we set $OPT(B',i,j) = \infty$.
Then, it is easy to check that $ OPT(B,1,n) $ can be computed in time $ O((B+1)^2 n^3) $, which gives an overall time bound of $ O((B+1)^2 k n^3) $.

\vspace{2mm}\noindent\textbf{Other Cases.} We end the proof by noting similar dynamic programs can be easily constructed when $w(\cdot)$ and $\sigma(\cdot)$ are not additive, but are decomposable with $\min$ or $\max$.
For example, let $w(\cdot)$ and $\sigma(\cdot)$ are both edge-decomposable with respect to $\min$, and $w(\cdot)$ ranges over $[0,W]$, let $ OPT(W',i,j) $ denote the best quality of a triangulation of $P[i:j]$, and define $ OPT(W',i,j)=\infty $ when $ j=i+1 $.
When $j>i+1$, then $OPT(W',i,j)$ can be computed by using the following recurrence relation:
\begin{equation*}
    OPT(W',i,j) = \min_{m, W'_1,W'_2} \Big\{ \min \{OPT(W'_1,i,m),\sigma(\triangle imj), OPT(W'_2, m,j)\} \Big\},
\end{equation*}
where the first minimum in the equation above is taken over all $m \in [i+1, j-1] $ and $ W'_1,W'_2 \in [0, W'-(\triangle imj)] $ such that $ W' = \min \{W'_1,W'_2,w(\triangle imj)\} $.
The other cases can be similarly handled, we omit the details.

    \item \textbf{(FPTAS)} Assume that $ w(\cdot) $, $ \sigma(\cdot) $, and a budget $B\geq0$ are given. For brevity, we assume that both $ w(\cdot) $ and $ \sigma(\cdot) $ are additive and triangle-decomposable. The other cases can be handled similarly. Also, we focus on finding single best solution since the $k$-best enumeration can be handled analogously with only extra $O(k)$ factor in the running time as in Case 1 above.
    
    \hspace{4mm} Assume that $k=1$. The main idea of the algorithm is to scale down the quality measures of the allowed triangles in the polygon, as well as the given budget, to integers of size at most $\mathrm{poly}(n, \varepsilon)$.  
    We then run the BCT algorithm described above in Case 2, over these scaled weights and budget. Since the running time of the algorithm in Case 2 is $O((B+1)^2 n^3)$, the BCT problem can be solved in cubic time when $B=0$. Thus, we focus ourselves on the case where $ B > 0$. 

    \hspace{4mm}Assume that $ B > 0 $.
    Given $ \varepsilon > 0 $, define
        \begin{equation}
            \tilde{\sigma}(t) := \left\lfloor \frac{n-2}{\varepsilon B} \cdot \sigma(t) \right\rfloor, \qquad \tilde{B} := \left\lfloor \frac{n-2}{\varepsilon} \right\rfloor,
        \end{equation}
     where $t$ is any allowed triangle of $P$.
     Then we solve $ \BCT(w, \tilde{\sigma}) $ with bound $\tilde{B}$, i.e.,
     \begin{equation*}\label{eq: fptas-constraint}
         \argmin\{w(T):T \in \mathcal{T}\text{ and }\tilde{\sigma}(T) \,\,\leq\,\, \tilde{B}\}
     \end{equation*}
    which can be done in time $O(\tilde{B}^2 \cdot n^3) = O(\varepsilon^{-2} \cdot n^5)$ by using the BCT algorithm in Case 2.

    \hspace{4mm} Let $ \tilde{T} $ be an output triangulation, and $\tstar$ be an optimal triangulation.
    We now prove that the $\tilde{T}$ and $\tstar$ satisfy the desired conditions, i.e.,  $ \sigma(\tilde{T}) \leq (1+\varepsilon)B $ and $ w(\tilde{T}) \leq w(\tstar) $.

    \vspace{2mm}
    \hspace{4mm} We first claim that $ \sigma(\tilde{T}) \leq (1+\varepsilon)B $. Note that $ 0\leq \frac{n-2}{\varepsilon B}\cdot\sigma(t) - \left\lfloor \frac{n-2}{\varepsilon B} \cdot \sigma(t) \right\rfloor \leq 1 $.  Therefore,
        \begin{equation*}
        \begin{split}
            \sigma(\tilde{T})
                = \sum_{t\in \tilde{T}} \sigma(t)
                &\leq \left( (n-2) + \sum_{t\in\tilde{T}} \left\lfloor \frac{n-2}{\varepsilon B}\cdot\sigma(t) \right\rfloor \right) \frac{\varepsilon B}{n-2} \\[2mm]
                &= \left( (n-2) + \sum_{t\in\tilde{T}} \tilde{\sigma}(t) \right) \frac{\varepsilon B}{n-2}
                \leq \left( (n-2) + \left\lfloor \frac{n-2}{\varepsilon} \right\rfloor \right) \frac{\varepsilon B}{n-2} \\[2mm]
                &\leq \left( (n-2) + \frac{n-2}{\varepsilon} \right) \frac{\varepsilon B}{n-2} = (1+\varepsilon)B.
        \end{split}
        \end{equation*}

    \hspace{4mm} We now claim that $ w(\tilde{T}) \leq w(\tstar) $. This follows if we instead prove that $\tilde{\sigma}(\tstar) \leq \tilde{B}$, because if these two triangulations satisfy the same quality constraint then by the minimality of $w(\tilde{T})$, it follows that $ w(\tilde{T}) \leq w(\tstar) $. This can be shown as follows.
    \begin{equation*}
    \begin{split}
        \tilde{\sigma}({\tstar})
        &= \sum_{t\in{\tstar}} \tilde{\sigma}(t)
        = \sum_{t\in{\tstar}} \left\lfloor \frac{n-2}{\varepsilon B} \cdot \sigma(t) \right\rfloor \\
        &\leq \left\lfloor \frac{n-2}{\varepsilon B} \cdot \sum_{t\in{\tstar}} \sigma(t) \right\rfloor
        \leq \left\lfloor \frac{n-2}{\varepsilon B} \cdot B \right\rfloor = \tilde{B},
    \end{split}
    \end{equation*}
as we desired.
\end{enumerate}
\end{proof}

\subsection{Putting Everything Together: Proof of \Cref{thm: dnttheorem}}\label{sec: putting-everything-together}

Recall that at the end of \Cref{sec: bct-introduction}, we proved that we can obtain $k$ distinct $\alpha$-optimal triangulations with diversity at least $ \betak $ of optimal, in time $ O(f_k(w_k,\sigma)k^2\log k) $, where $ f_k(w_k,\sigma) $ denotes the running time for computing the $k$-best $\alpha$-optimal $\BCT(w_k,\sigma)$.

(\ref{dnt: general}) Note that any allowed diagonal of $P$ can appear at most $k$ times in any set of $k$ triangulations of $P$.
Therefore, $w_k(T)$ is at most $nk$. Hence, by \Cref{thm: bct-algorithms}(\ref{itm: bct-algorithm-integer-measure}), $ f_k(w_k,\sigma) = O((nk)^2kn^3) = O(k^3n^5) $. Therefore, the running time of the DNT algorithm for $\alpha >1 $ is $ O(n^5k^5\log k) $.

(\ref{dnt: fptas}) Follow the similar lines in the preceding proof except to use \Cref{thm: bct-algorithms}(\ref{itm: bct-algorithm-fptas}).
\section{Conclusion}\label{sec: conclusion}
In this paper, we introduced and studied the Diverse and Nice Triangulations (DNT) problem, which seeks to find $k$ triangulations of a given simple polygon that maximize their diversity measured by symmetric differences, while ensuring each triangulation meets a predefined quality criterion. For the sum-DNT problem, we provided a polynomial-time approximation algorithm for the general case and presented a PTAS for certain special cases. For the min-DT problem, we highlighted potential computational hardness through a reduction from the Hamming code problem and showed a $(\tfrac12$)-approximation algorithm.

An intriguing direction for future research is to explore diversity measures based on \emph{flip distance}. While Aichholzer et al.~\cite{aichholzer2015flip} showed that computing the flip distance between two given triangulations is NP-complete, it is not clear if this result applies to computing $k=2$ most diverse solutions, a.k.a. the \emph{diameter} of the flip graph.
Another interesting direction would be to consider specific inputs, such as point sets without empty pentagons, for which the flip distance between triangulations \emph{can} be computed in polynomial time by a result of Eppstein~\cite{eppstein2007happy}.
In fact, Eppstein's methods can be used to define an \emph{earth mover's distance} between two triangulations, which can be computed in polynomial time and is a lower bound on the flip distance.
Finding diverse triangulations w.r.t. earth mover's distance would be a natural—and perhaps tractable—way to guarantee triangulations that are also diverse w.r.t. flip distance.

\bibliographystyle{splncs04}
\bibliography{reduced-ref.bib}

\clearpage
\appendix

\section{NP-Hardness of the Decision Versions of DNT}\label{sec: DDNT-Hardness}

In this section, we prove that the decision version of the DNT (DDNT) problem is NP-hard.
We restate the theorem for the reader's convenience. 

\ResDNTHardness*

Let $\#NT$ denote the problem of counting the number of nice triangulations of simple polygons. We prove \Cref{thm: DDNT-Hardness} by showing that $\#NT$ is $\#P$-hard and that, with at most $O(n)$ calls to a DDNT oracle, one can solve $\#NT$.

\begin{theorem}[\#P-hardness for \#NT]\label{thm: sharp-hardness-nt}
    There exists a simple polygon with $n$ vertices and some $\alpha'\in(1,\infty)$ such that no $\mathrm{poly}(n,\log k)$-time algorithm can count all $\alpha'$-optimal triangulations of $P$, unless $\pisnp$.
\end{theorem}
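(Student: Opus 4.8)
The plan is to establish the stronger statement that $\#\mathrm{NT}$ is $\#\mathrm{P}$-hard by reducing from a $\#\mathrm{P}$-complete counting problem whose structure mirrors a length budget: counting the subsets $S\subseteq\{1,\dots,m\}$ of given positive integers $a_1,\dots,a_m$ with $\sum_{i\in S}a_i\le B$ (the counting version of Knapsack, which is $\#\mathrm{P}$-complete). First I would note that counting \emph{all} triangulations of a simple polygon is polynomial-time by the usual split dynamic program, so any hardness must be forced through the niceness constraint $\sigma(T)\le\alpha'\sigma^*$, where $\sigma$ is the Euclidean length (sum of the $(n-3)$ diagonal lengths) as in \Cref{thm: DDNT-Hardness}. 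The goal is thus to build, from an instance $(a_1,\dots,a_m,B)$, a simple polygon $P$ on $n=\mathrm{poly}(m)$ vertices together with an $\alpha'>1$ so that the $\alpha'$-optimal triangulations of $P$ are in bijection with the feasible subsets $S$.

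The gadget I would use is a chain of $m$ very thin rhombic cells. Cell $i$ is a convex quadrilateral whose horizontal diagonal is axis-aligned of integer length $a_i$ and whose crossing diagonal is near-vertical of length $2\delta_i$ (e.g.\ vertices near $(c_i,0)$, $(c_i+a_i/2,-\delta_i)$, $(c_i+a_i,0)$, $(c_i+a_i/2,\delta_i)$, with coordinates scaled to integers). A convex quad has exactly two triangulations, so cell $i$ offers an independent binary choice: the horizontal diagonal contributes length $a_i$ (``$i\in S$''), while the short diagonal contributes only $2\delta_i\approx0$ (``$i\notin S$''). I would string the cells together using rigid connector regions---subdivided into triangles, or pinched by reflex vertices that kill spurious diagonals---so that every triangulation of $P$ restricts to a \emph{forced, identical} skeleton on the connectors and to a free toggle inside each cell. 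Then triangulations of $P$ correspond exactly to subsets $S$, and the total diagonal length equals $C_{\mathrm{fix}}+\sum_{i\in S}a_i+\sum_{i\notin S}2\delta_i$, where $C_{\mathrm{fix}}$ is the fixed contribution of the connector diagonals.

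To make the real-valued length test decide the integer budget, I would choose the $\delta_i$ so small that $\sum_i 2\delta_i<\tfrac12$ and set the threshold at $\alpha'\sigma^*=C_{\mathrm{fix}}+B+\tfrac12$, with $\sigma^*=C_{\mathrm{fix}}+\sum_i 2\delta_i$ attained by the all-short triangulation. Since each $\sum_{i\in S}a_i$ is an integer and the aggregate perturbation lies in $[0,\tfrac12)$, we get $\sigma(T_S)\le\alpha'\sigma^*$ if and only if $\sum_{i\in S}a_i\le B$. Hence the number of $\alpha'$-optimal triangulations equals the number of feasible Knapsack subsets, yielding $\#\mathrm{P}$-hardness; the stated conditional then follows, since a polynomial-time count would place a $\#\mathrm{P}$-hard problem in $\mathrm{FP}$ and force the collapse $\pisnp$.

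I expect the main obstacle to be entirely in the construction rather than the arithmetic, and twofold: (i) realizing the chain as a genuine simple polygon in which the connector skeleton is provably forced, so that the only freedom is the per-cell toggle and the map to subsets is an exact bijection---this needs careful visibility/reflex-vertex arguments to rule out all unintended diagonals; and (ii) the sum-of-square-roots precision issue, since the connector and short rhombus diagonals have irrational lengths. I must keep all coordinates of polynomially bounded bit-size, prove the perturbation budget is below $\tfrac12$, and represent $\alpha'$ (equivalently the threshold) to sufficient precision that the instance is produced in polynomial time. The $\tfrac12$-gap argument of the third paragraph is exactly what insulates the combinatorial count from these geometric irrationalities.
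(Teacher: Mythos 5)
Your proposal is correct and takes essentially the same approach as the paper: the paper likewise reduces from a subset-sum-type counting problem (\#SS) via a simple polygon built from quadrilateral ``kite'' gadgets, each offering an independent binary diagonal choice (lengths $v_i$ versus $2v_i$), glued to a rigid frame whose remaining diagonals are forced, so that nice triangulations biject with feasible subsets. Your variations---using the budgeted ($\le B$) counting-Knapsack formulation instead of the exact-target version, and the explicit $\tfrac12$-gap to insulate the integer comparison from irrational connector lengths---are minor refinements of the same argument (the latter actually addresses a precision point the paper leaves implicit).
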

\begin{proof}
We reduce from a \#P-hard variant of the Subset-Sum problem (SS)~\cite{kleinbergtardos2005}\footnote{Although \cite{kleinbergtardos2005} proves the NP-hardness of SS, it is straightforward to extend this to \#P-hardness.}, denoted by \#SS. The \#SS problem asks, given a multiset $S$ of $n$ positive integers $v_i$ and a target value $V$, to compute the number of subsets $S' \subseteq S$ such that the sum of the elements in $S'$ equals $V$, \emph{i.e.}, $\sum_{v_i \in S'} v_i = V$.

\vspace{2mm}\noindent\textbf{Construction of hardness polygon.}
See \Cref{fig: sharp-p-Euclidean}. The polygon $P$ in the figure consists of $n$ \emph{kites}, where the $i$-th kite corresponds to the $i$-th item in $S$, all glued to a large right triangle.
For each kite, the vertical diagonal is shorter than the horizontal one, with the length of the shorter diagonal being $v_i$ and the longer diagonal being $2v_i$, so the difference between them is always $v_i$. Note that all diagonals of $P$, except those of the kites, are fixed and must form part of any triangulation of $P$. Moreover, exactly one diagonal of each kite can be included in a triangulation of $P$.
\begin{figure}[!ht]
    \centering
    \includegraphics[width=.8\linewidth]{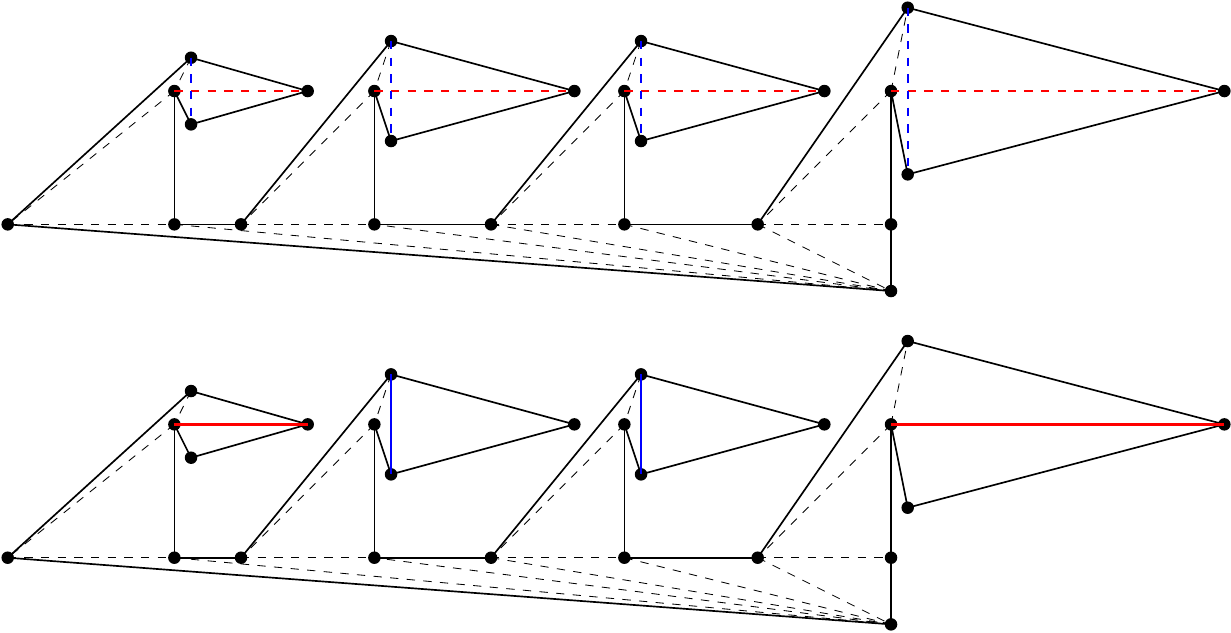}
    \caption{\small{An illustration for the reduction from \#SS to \#NT. \textbf{Above:} The hardness polygon $P$ constructed from an instance of \#SS given by ${v_1 = 1, v_2 = 4, v_3 = 4, v_4 = 6, V = 7}$.
    The dashed lines represent diagonals of $P$. All black diagonals always appear in any triangulation of $P$, and there are only two choices for each kite gadget: either {\color{red}red} (horizontal) or {\color{blue}blue} (vertical). Each red diagonal is twice as long as the blue one in the same kite. \textbf{Below:} Illustration for the encoding of the subset $\{ v_1, v_4 \}$. Note that the Euclidean length of the triangulation is $L + 1 + 4 = L + 7$, where $L$ is the MWT weight of $P$.}
    }
    \label{fig: sharp-p-Euclidean}
\end{figure}

\vspace{2mm}\noindent\textbf{The reduction.} Let $\sigma(T)$ denote the Euclidean length of a triangulation $T$, and let $L$ be the minimum Euclidean length of a triangulation of $P$, and let $h_i$ denote the horizontal diagonal of the $i$-th kite in $P$. Define a mapping $g$ that assigns each subset $S' \subseteq S$ to a triangulation $T$ of $P$ such that 1) $\sigma(T) = L + \sum_{v \in S'} v$, where $\sigma$ denotes the Euclidean length, and 2) $v_i \in S'$ if and only if $h_i \in T$.
From the construction, it is straightforward to verify that $g$ is a one-to-one correspondence between subsets of $S$ and triangulations of $P$. Moreover, only those subsets of $S$ whose elements sum to the target value $V$ are mapped to a triangulation $T$ of $P$ with Euclidean length $\sigma(T) = L + V$.

This proves that counting the number of triangulations with a given target Euclidean length is as hard as counting the number of solutions to a subset-sum problem; thus, counting the number of nice triangulations is \#P-hard.
\end{proof}

\vspace{2mm}We are now ready to prove \Cref{thm: DDNT-Hardness}.
Note that $P$ can have at most $ 4^n$ triangulations~\cite{DUTTON1986211}.
We show that running a DDNT oracle at most $O(n)$ times one can solve the \#NT problem.
Given an instance of \#NT, run the DDNT oracle for $ D = {n \choose 2} $ and $k\leq 4^n $ repeatedly until the oracle returns ``YES.'' By using the binary search technique, this can be done by running the oracle at most $ O (\log 4^n ) = O(n) $ times.
This completes proof.
\clearpage
\section{NP-hardness of BCT}\label{sec: bct-hardness}
In this section we prove NP-hardness of the BCT problem with near-Delaunay measures. We first state the theorem followed by its proof sketch, and then provide detailed proof.

\ResBCTHardness*

For the definitions of the near-Delaunay measures, see \Cref{fig:delaunaymeasures-app}.

\begin{figure}[!h]
    \centering
    \begin{subfigure}[t]{0.24\textwidth}
        \centering
        \includegraphics[width=.9\textwidth]{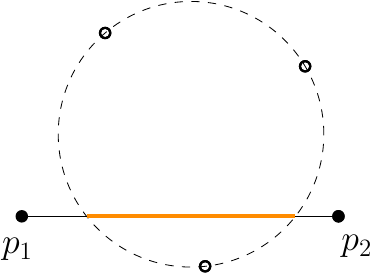}
        \subcaption{Shrunk Circle}
    \end{subfigure}
    \hfill
    \begin{subfigure}[t]{0.24\textwidth}
        \centering
        \includegraphics[width=.9\textwidth]{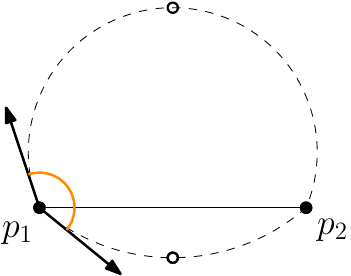}
        \subcaption{Lens}
    \end{subfigure}
    \hfill
    \begin{subfigure}[t]{0.24\textwidth}
        \centering
        \includegraphics[width=.9\textwidth]{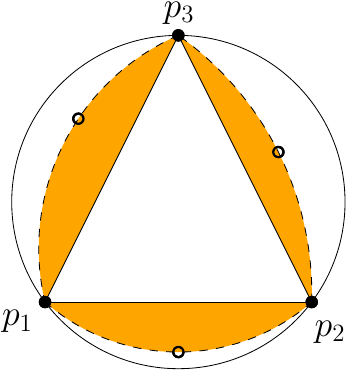}
        \subcaption{Triangular Lens}
    \end{subfigure}
    \hfill
    \begin{subfigure}[t]{0.24\textwidth}
        \centering
        \includegraphics[width=.9\textwidth]{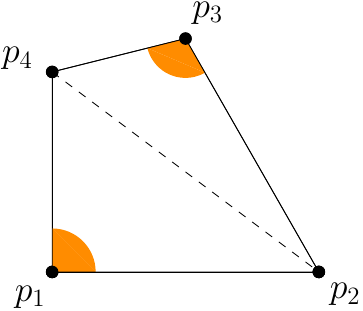}
        \subcaption{Opposing Angles}
    \end{subfigure}
    \caption{\small{The following measures, called near-Delaunay measures, were proposed by~\cite{van2021near,o2018open}. (a) Shrunk-circle measure ($\edgesh$), equals $\sum_{e \in T} \sigma(e)$, where $ \sigma(e) $ is the maximum fraction of the diagonal $e$ overlapped with the largest empty circle. (b) Lens-based measure($\edgelen$), equals $\sum_{e \in T} \sigma(e)$, where $\sigma(e) =\min\{1, \theta /\pi\} $ and $\theta$ is the angle formed by the tangent vectors of the two largest empty arcs on both sides. (c) Triangular-lens measure ($\trigdel$), equals $\sum_{t \in tr(T)} \sigma(t)$, where $\sigma(t)$ is the fraction of the area in the circle but outside the triangle that is covered by the shown lens. (d) Opposing angles measure ($\quaddel$), equals $\sum_{e \in T} \sigma(e)$, where $\sigma(e)=\max\{1,\theta(e)/\pi \}$ and $\theta(e)$ is the sum of opposing angles in the quadrilateral that has $e$ as a diagonal. Note that this measure is an example of a near-Delaunay measure that is not decomposable, since the opposing angles depend \emph{not} only on the diagonal $e$, but also on the triangles in $T$ adjacent to $e$.}}
    \label{fig:delaunaymeasures-app}
\end{figure}

Let $\uparrow$ denote maximization and $\downarrow$ denote minimization. For instance, $\BCT(\downarrow, \quaddel)$ denotes the problem of minimizing the Euclidean length of a triangulation of $P$, subject to $\quaddel \geq B$ for a given bound $B$.
We first generalize $\quaddel$ to $\quaddel_g$, where $g$ is any strictly increasing continuous function, then we prove hardness of $\BCT(\downarrow E,\quaddel_g)$ and $\BCT(\uparrow E,\quaddel_g)$, therefore hardness of $\BCT(\downarrow E, \quaddel)$ and $\BCT(\uparrow E, \quaddel)$ follow. Subsequently, we argue that there are some $ g $ with which  $\BCT(\downarrow E,\quaddel_g)$ or $\BCT(\uparrow E,\quaddel_g)$ can be reduced to its analogous $\edgelen$, $\edgesh$ and $\trigdel$ versions.

We show a reduction from the classical 0/1-Knapsack problem: $\{\{v_i,w_i\}_{i=1}^n$, $W\}$.
We create a polygon that is composed of \emph{kites} (See \Cref{fig: reduction-sketch}), each of which is linked to an isosceles trapezoid connected to a large right triangle, where the difference between the lengths of the two diagonals of the $i$-th kite is $v_i$ in the Knapsack problem.
If the left angle of a kite is small (resp., big), then opting for the longer (resp., shorter) diagonal of the kite results in a trade-off.
By setting the length of the shorter (resp., longer) diagonal of the $i$-th kite to $v_i$ (resp., $2v_i$), we make choosing the shorter (resp. longer) diagonal represent selecting (resp., not selecting) the $i$-th item.
Furthermore, we define a mapping between the weights of the Knapsack problem and the left angles of the kites so that the Delaunay trade-off caused by selecting a diagonal of the $i$-th kite represents weight trade-off in the Knapsack problem.

\begin{figure}[!t]
    \centering
    \includegraphics[width=.9\linewidth]{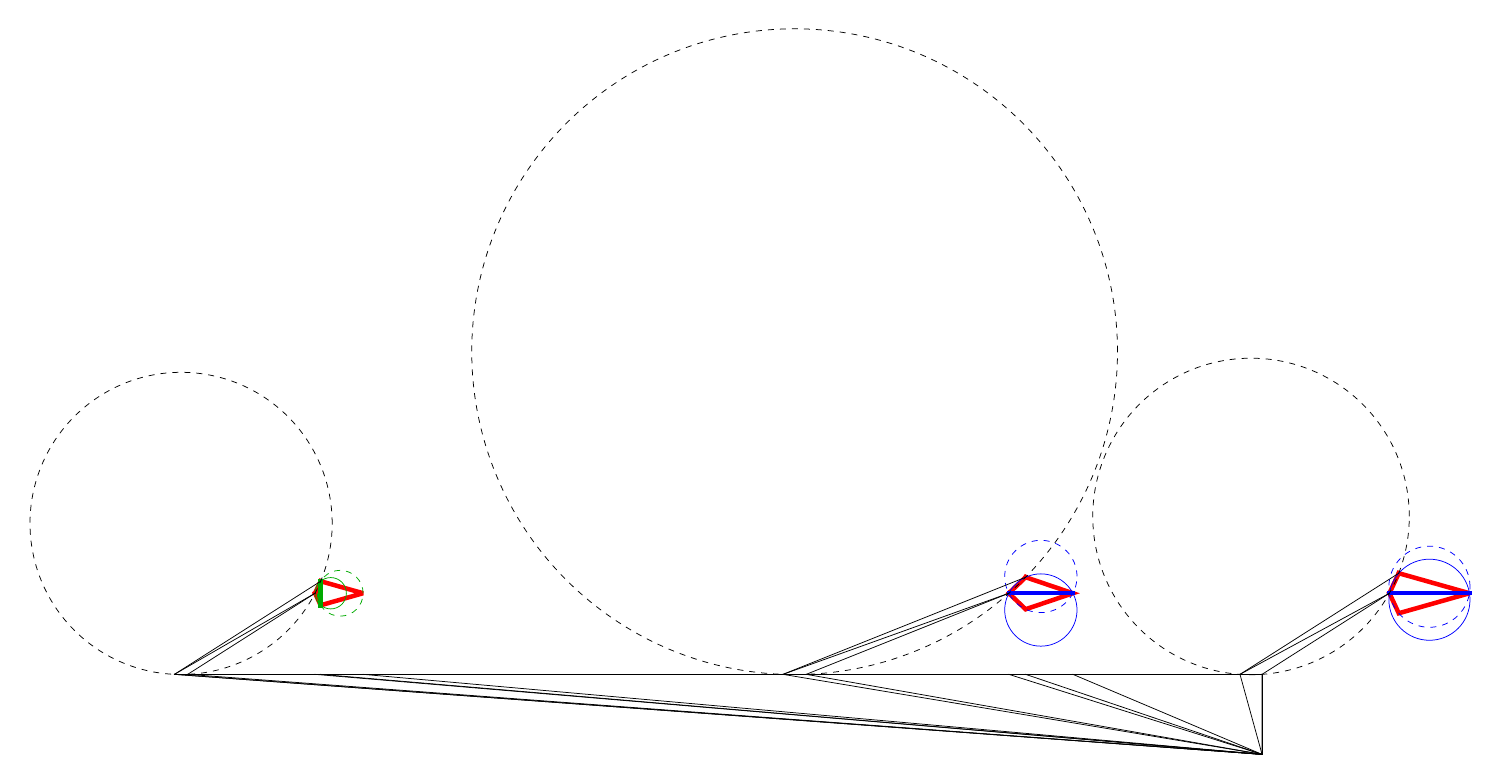}
    \vspace{5mm}
    \includegraphics[width=.9\linewidth]{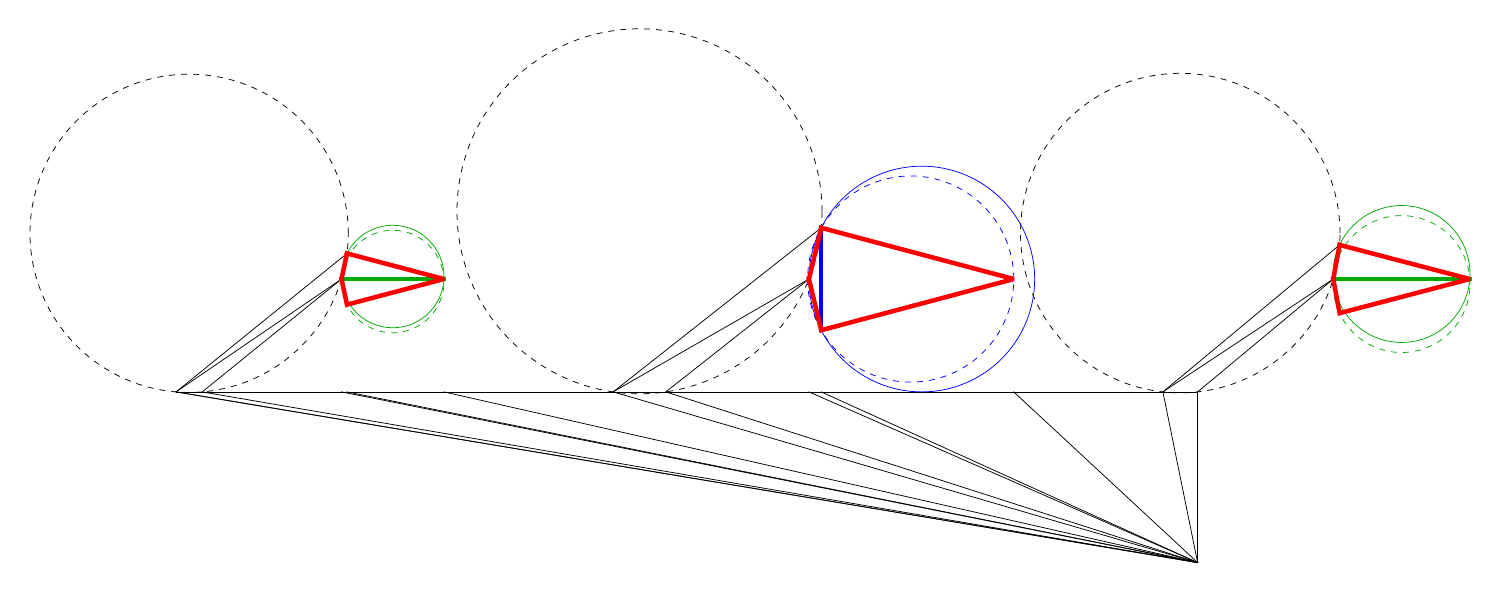}
    \captionsetup{width=.9\linewidth}
    \caption{Sketch for the reduction of knapsack to $\BCT(\uparrow{}E,\quaddel_g)$ (\emph{above}) and $\BCT(\downarrow E,\quaddel_g)$ (\emph{below}). The {\color{blue}blue} diagonals represent a choice of the associated item. Each {\color{blue}blue} circle is not empty, thus for each {\color{blue}blue} diagonal results in a Delaunay trade-off.
    }
    \label{fig: reduction-sketch}
\end{figure}

\begin{definition}[Generalized Opposing Angles Near-Delaunay Measure] \label{def: delaunay-soa-generalized}
    Let \( g:[\pi,2\pi] \rightarrow [0,1] \) be a \emph{strictly increasing continuous} function such that $g(\pi)=0$ and $g(2\pi)=1$. Given a quadrilateral \( \Box\,abcd \) and its diagonals \( \overline{ac} \) and \( \overline{bd} \), define \( \quaddel_g \) by
    \begin{align}
        {{\quaddel}_{g}}(\overline{ac}, abcd) = 
        \begin{cases}
            0, & \text{if} \quad (\angle b + \angle d) \in [0,\pi], \\
            g(\angle b + \angle d), & \text{if} \quad (\angle b + \angle d) \in [\pi,2\pi],
        \end{cases}
    \end{align}
    and similarly for \( \overline{bd} \).
    We may write \( \quaddel_g(\overline{ac}, \Box\,abcd) \) as \( \quaddel_g(\overline{ac}) \) if that does not cause any confusion.
\end{definition}

Before we go further, we provide two propositions which illustrate limitations of angles of item gadgets that are to be used in our reduction.

\begin{proposition} \label{prop: p2-angle-range}
    Let $\Box\,p_1p_2p_3p_4$ be an orthogonal quadrilateral such that both diagonals are lines of symmetry and $ \overline{p_2p_4} = 2 \cdot \overline{p_1p_3}$;
    see \Cref{fig: two-types-kites} for illustration. Then, opting the shorter (resp. longer) diagonal leads to a non-Delaunay triangulation if $ 3\pi/2 < \angle\,p_2  < \pi $ (resp. $ 2\arctan(1/2) < \angle\,p_2  < 3\pi/2 $).
\end{proposition}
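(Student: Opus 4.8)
The plan is to reduce the statement to the \emph{local Delaunay criterion}. Recall that a diagonal $\overline{ac}$ of a convex quadrilateral $\Box\,abcd$ yields a non-Delaunay triangulation --- equivalently, $\quaddel_g(\overline{ac})>0$ --- exactly when the two opposing interior angles satisfy $\angle b+\angle d>\pi$; by the inscribed-angle theorem this is the same as saying that the circumcircle of $\triangle abc$ strictly contains $d$. So it suffices, for each of the two admissible diagonals of $\Box\,p_1p_2p_3p_4$, to determine the set of values of $\angle p_2$ for which the corresponding opposing-angle sum exceeds $\pi$ (equivalently, for which the fourth vertex lies inside the relevant circumcircle).

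First I would coordinatize the quadrilateral so as to exploit its orthogonality and symmetry: place the longer diagonal $\overline{p_2p_4}$ along the axis of symmetry (the $y$-axis) and the shorter diagonal $\overline{p_1p_3}$ horizontally, so that $p_1=(-c,0)$, $p_3=(c,0)$, $p_2=(0,h_2)$, and $p_4=(0,-h_4)$. Orthogonality of the diagonals and the reflective symmetry are then automatic, and the length constraint $\overline{p_2p_4}=2\,\overline{p_1p_3}$ becomes $h_2+h_4=4c$. The crucial observation is that $\angle p_2=2\arctan(c/h_2)$ together with $h_2+h_4=4c$ pins down $h_2$, $h_4$, and hence the entire configuration; thus $\angle p_2$ is a single parameter in terms of which the whole statement can be phrased.

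Next I would write the two opposing-angle sums explicitly as functions of $\angle p_2$. Opting for the shorter diagonal $\overline{p_1p_3}$ makes $p_2,p_4$ the opposing vertices, so the quantity to test is $\angle p_2+\angle p_4$; opting for the longer diagonal $\overline{p_2p_4}$ makes $p_1,p_3$ the opposing vertices, so by symmetry the quantity is $2\angle p_1$. Eliminating $h_4$ via $h_2+h_4=4c$, each sum becomes a monotone function of $\angle p_2$, and I would solve the threshold equation ``opposing-angle sum $=\pi$'' in each case. The rhombus configuration $h_2=h_4=2c$ forces $\angle p_2=2\arctan(1/2)$ (equivalently $\arccos(3/5)$), which is exactly the cocircular, degenerate-Delaunay case; this is the origin of the endpoint $2\arctan(1/2)$. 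The remaining endpoint corresponds to the boundary of admissibility of the gadget, where the configuration degenerates and the interior angle passes through the reflex value $3\pi/2$; on each side of these thresholds I would check the sign of ``opposing-angle sum minus $\pi$'' to obtain the two asserted ranges.

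I expect the main obstacle to be bookkeeping rather than any conceptual difficulty: verifying the monotonicity of the two opposing-angle sums in $\angle p_2$, pinning down the exact threshold constants (in particular tying the rhombus configuration to cocircularity through the identity $2\arctan(1/2)=\arccos(3/5)$ forced by the $2{:}1$ diagonal ratio), and correctly tracking the admissible range of $\angle p_2$ --- including the reflex regime near $3\pi/2$ --- so that the inequalities emerge with precisely the claimed endpoints. Once the two threshold equations are solved and the signs verified, the proposition is immediate from the local Delaunay criterion.
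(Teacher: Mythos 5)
Your skeleton (local Delaunay criterion, coordinatize the kite so the whole configuration is pinned down by the single parameter $\angle p_2$, then locate the threshold where the opposing-angle sum crosses $\pi$) is exactly the paper's approach: the paper sets $\overline{p_1p}=1$, parameterizes by $x=\overline{p_2p}$, and finds the transition from the altitude relation $\overline{p_1p}^2=\overline{p_2p}\cdot\overline{pp_4}$, i.e., $x(4-x)=1$, $x=2-\sqrt3$. The gap is in your identification of the two endpoints, which is the entire content of the proposition. First, the rhombus $h_2=h_4=2c$ is \emph{not} the cocircular, degenerate-Delaunay configuration: a rhombus is cyclic only if it is a square, and here $\angle p_1=2\arctan 2>\pi/2$, so the circumcircle of $\triangle p_1p_2p_4$ strictly contains $p_3$ and choosing the longer diagonal at the rhombus is strictly non-Delaunay (which is also why the second interval should really be closed at that endpoint, under the normalization $h_2\le h_4$). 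The cocircular case is instead where $\angle p_1=\pi/2$, i.e., where \emph{both} of your opposing-angle sums equal $\pi$ simultaneously; solving your own threshold equation ($h_2h_4=c^2$ with $h_2+h_4=4c$, i.e., $h_2=(2-\sqrt3)c$) gives $\angle p_2=2\arctan(2+\sqrt3)=5\pi/6$. The role of $2\arctan(1/2)$ is different: it is the \emph{parameter-domain boundary} — with $p_2$ taken as the nearer tip ($h_2\le h_4$), $\angle p_2$ is minimized at the rhombus, where it equals $2\arctan(1/2)$.

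Second, your rationalization of the other endpoint as ``the interior angle passes through the reflex value $3\pi/2$'' cannot be right: both diagonals are available only when the quadrilateral is convex, so $\angle p_2<\pi$ throughout, and as $h_2\to0$ the angle tends to $\pi$, which is already the outer endpoint of the first interval. The ``$3\pi/2$'' in the statement (whose literal reading ``$3\pi/2<\angle p_2<\pi$'' is vacuous) is a misprint for $5\pi/6$ — the paper's own proof betrays this, since its asserted identity ``$\cot(3\pi/2)=2-\sqrt3$'' is false while $\cot(5\pi/12)=2-\sqrt3$ and $2\cdot\tfrac{5\pi}{12}=\tfrac{5\pi}{6}$. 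Your plan takes the misprinted constant at face value and invents a mechanism for it, rather than letting the computation expose it. If you carry out your monotonicity-plus-threshold scheme honestly (both sums are monotone in $\angle p_2$, as you say), you obtain: the shorter diagonal is non-Delaunay iff $\angle p_2\in(5\pi/6,\pi)$, and the longer diagonal is non-Delaunay iff $\angle p_2\in[2\arctan(1/2),5\pi/6)$. So the method would succeed, but the two anchor claims as written — rhombus equals cocircularity, and $3\pi/2$ as a reflex admissibility bound — are both false and must be replaced by the identifications above.
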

\begin{proof}
    Let $p$ be the intersection of the two diagonals. Wlog, let $ \overline{p_1p} = 1 $. Then, $ \overline{p_2p_4} = 4 $. Put $ x = \overline{p_2p} $. Recall that $ \angle\,p_1 = \pi/2 $ \emph{if and only if} $ \overline{p_1p}^2 = \overline{p_2p} \cdot \overline{pp_4} $. Since $ x = 2-\sqrt{3} $ is a solution to $ 1^2 = x(4-x) $ and $ \cot(3\pi/2) = 2-\sqrt{3} $, we have proved the first case.

    To prove the second case, notice that we may assume that $ \overline{p_2p} \leq 2 $ by symmetry. When $ \overline{p_2p} = 2 $, $ \angle\,p_1p_2p = \arctan(1/2) $; we have proved the second case. 
\end{proof}

    \begin{figure}[H]
        \centering
        \includegraphics[width=\linewidth]{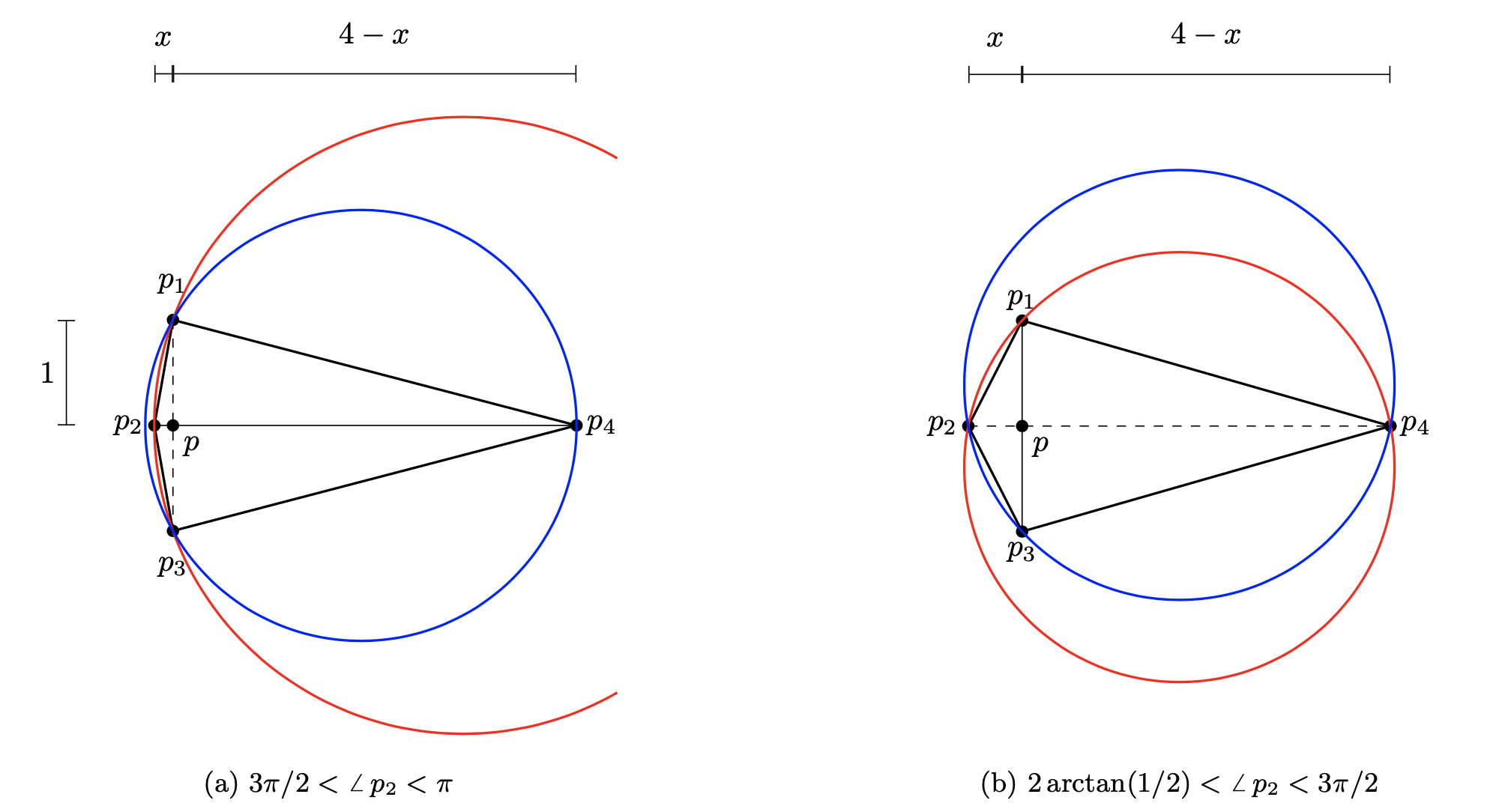}
        \caption{\textbf{Left:} $ 3\pi/2 < \angle\,p_2 < \pi $, and \textbf{Right:} $ 2\arctan(1/2) < \angle\,p_2 < 3\pi/2 $. Non-Delaunay triangulations of orthogonal quadrilaterals depending on the size of $ \angle\, p_2 $, where both diagonals are lines of symmetry and $\overline{p_2p_4} = 2\cdot\overline{p_1p_3}$. Opting for the shorter (resp. longer) diagonal leads to a non-Delaunay triangulation if $ 3\pi/2 < \angle\,p_2  < \pi $ (resp. $ \pi/6 < \angle\,p_2  < 3\pi/2 $).}
        \label{fig: two-types-kites}
    \end{figure}

\begin{proposition} \label{prop: p1-x-coord}
    Let $\Box\,p_1p_2p_3p_4$ be the same quadrilateral with the same constraints as given in \Cref{prop: p2-angle-range}. Let $ \beta = \angle\,p_1 $, and let $ x = \overline{p_2p} $, where $ 0 < x \leq 2 $. Then,
        \begin{equation}\label{eq: x}
            x =
            \begin{cases} 
                2 - \sqrt{3 + 4\cot(\beta)}, & \quad \text{if } \beta \in \left(\arctan(4),2\arctan(2)\right]\setminus \{\pi/2\}, \\
                2-\sqrt{3}, & \quad \text{if } \beta = \pi/2.
            \end{cases}
        \end{equation}
\end{proposition}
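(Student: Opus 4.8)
The plan is to reduce the claim to an elementary trigonometric computation by placing the quadrilateral in coordinates inherited from the normalization already used in \Cref{prop: p2-angle-range}. Recall from that setup that we may take the intersection point $p$ of the two perpendicular diagonals at the origin with $\overline{p_1 p} = \overline{p p_3} = 1$ (so the short diagonal has length $2$) and $\overline{p_2 p_4} = 4$. Concretely I would put $p = (0,0)$, $p_1 = (0,1)$, $p_3 = (0,-1)$, $p_2 = (-x,0)$ and $p_4 = (4-x,0)$, so that $\overline{p_2 p} = x$, $\overline{p p_4} = 4-x$, and the symmetry reduction $0 < x \le 2$ of \Cref{prop: p2-angle-range} holds.

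The key step is to express $\beta = \angle\,p_1$ in terms of $x$. The segment $\overline{p_1 p}$ (vertical, of length $1$) splits the interior angle $\angle\,p_2 p_1 p_4$ into the two right triangles $\triangle p_1 p p_2$ and $\triangle p_1 p p_4$, each with its right angle at $p$. Since $p_2$ and $p_4$ lie on opposite sides of the line $p_1 p$ (because $x>0$ and $4-x>0$ over the valid range), the interior angle is the sum
\[ \beta = \angle\,p p_1 p_2 + \angle\,p p_1 p_4 = \arctan(x) + \arctan(4-x). \]
Applying the tangent addition formula gives $\tan\beta = \tfrac{4}{x^2 - 4x + 1}$, equivalently $\cot\beta = \tfrac{x^2 - 4x + 1}{4}$ whenever $\beta \ne \pi/2$. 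I would then solve the quadratic $x^2 - 4x + (1 - 4\cot\beta) = 0$, whose roots are $2 \pm \sqrt{3 + 4\cot\beta}$, and select the root lying in $(0,2]$, namely $x = 2 - \sqrt{3 + 4\cot\beta}$, which is the first case of \Cref{eq: x}.

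The case $\beta = \pi/2$ must be treated separately, since the tangent addition formula is singular there. Here I would invoke the altitude-on-hypotenuse relation already used in \Cref{prop: p2-angle-range}: $\angle\,p_1 = \pi/2$ exactly when $\overline{p_1 p}^2 = \overline{p_2 p}\cdot\overline{p p_4}$, i.e.\ $1 = x(4-x)$, whose root in $(0,2]$ is $x = 2 - \sqrt{3}$. (This coincides with formally substituting $\cot(\pi/2) = 0$ into the first formula, but that substitution is only justified after the fact.) To confirm the stated domain $\beta \in (\arctan 4,\, 2\arctan 2]$, I would note that $x \mapsto \arctan(x) + \arctan(4-x)$ has derivative $\tfrac{1}{1+x^2} - \tfrac{1}{1+(4-x)^2}$, which is positive on $(0,2)$; hence $\beta$ increases monotonically from $\arctan 4$ (as $x \to 0^+$) to $2\arctan 2$ (at $x = 2$), which simultaneously guarantees $3 + 4\cot\beta \ge 0$ throughout the domain and that exactly one root falls in $(0,2]$.

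The only genuinely non-routine point is the angle decomposition $\beta = \arctan(x) + \arctan(4-x)$: it requires checking the configuration, namely that the foot $p$ lies so that the two right-triangle angles \emph{add} rather than subtract, which is exactly where the normalization $0 < x \le 2$ is used. Everything after that is a quadratic solve with a sign choice, and the $\beta = \pi/2$ case is a two-line degenerate computation.
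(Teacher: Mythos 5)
Your proposal is correct and follows essentially the same route as the paper's proof: both split $\beta$ at the foot $p$ into $\angle\,pp_1p_2$ and $\angle\,pp_1p_4$, apply the tangent addition formula to get $\tan\beta = 4/(x^2-4x+1)$, solve the resulting quadratic and select the root $2-\sqrt{3+4\cot\beta}$ lying in $(0,2]$, with $\beta=\pi/2$ handled separately via the altitude relation $1=x(4-x)$ from \Cref{prop: p2-angle-range}. Your explicit monotonicity check of $x\mapsto\arctan(x)+\arctan(4-x)$ is a slightly more systematic justification of the domain $(\arctan 4,\,2\arctan 2]$ than the paper's endpoint argument, but it is a cosmetic difference, not a different approach.
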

\begin{proof}
    The second case was already proven in \Cref{prop: p2-angle-range}. Put $ \beta_1 = \angle\,pp_1p_2 $ and $ \beta_2 = \angle\,pp_1p_4 $. Then,
        \[
            \tan \beta = \tan (\beta_1 + \beta_2) = \frac{x+(4-x)}{1-x(4-x)} = \frac{4}{x^2-4x+1},
        \]
    which yields $ x = 2 \pm \sqrt{3 + 4\cot(\beta)} $. Since $ \beta > \arctan(4) $, $ 0 < \sqrt{3 + 4\cot\beta} < 2 $. Equality on the right hand side holds when $ \beta = \arctan(-4/3)+\pi $. Since $ \beta $ is at its maximum when the quadrilateral is symmetric, $ \arctan(-4/3)+\pi = 2\arctan(2) $, hence \Cref{eq: x}.
\end{proof}

We are now ready to prove the hardness of BCT w.r.t $\BCT(\downarrow E,\quaddel_g)$ and $\BCT(\uparrow E,\quaddel_g)$.
\begin{lemma}\label{lem: hardness-quaddel}
    Let $g: [\pi,2\pi]\to [0,1]$ be a strictly increasing continuous function such that $g(\pi)=0$ and $g(2\pi)=1$. 
    Then $\BCT(\downarrow E,\quaddel_g)$ and $\BCT(\uparrow E,\quaddel_g)$ are NP-Hard.
    In particular, $\BCT(\downarrow E,\quaddel)$ and $\BCT(\uparrow E,\quaddel)$ are NP-hard.
\end{lemma}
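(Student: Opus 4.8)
The plan is to reduce from the NP-hard $0/1$-Knapsack problem with items $\{(v_i,w_i)\}_{i=1}^{n}$ and budget $W$, realizing the two Knapsack quantities by the two measures of the BCT: the Euclidean length $E$ will encode the values $v_i$, while the near-Delaunay defect $\quaddel_g$ will encode the weights $w_i$. Following the sketch in \Cref{fig: reduction-sketch}, I would build a polygon $P$ consisting of a large right triangle to which $n$ kite gadgets are attached through isosceles trapezoids, one kite per item. The coordinates are chosen so that every diagonal of $P$ other than the kite diagonals is forced into every triangulation and is locally Delaunay (hence contributes $0$ to $\quaddel_g$); the only freedom is, for each kite, whether to insert its shorter diagonal (of length $v_i$) or its longer diagonal (of length $2v_i$). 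Thus triangulations of $P$ are in bijection with subsets $S'\subseteq\{1,\dots,n\}$, and both measures decompose additively over the kites.

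The heart of the reduction is the calibration of a single kite, and here I would use \Cref{prop: p2-angle-range,prop: p1-x-coord}. For each item $i$ I set the two diagonal lengths to $v_i$ and $2v_i$, so that switching between the two choices changes $E$ by exactly $v_i$; summing over kites makes $E(T)$ an affine function of $\sum_{i\in S'}v_i$. To encode the weight, note that exactly one of the two diagonal choices is non-Delaunay (\Cref{prop: p2-angle-range}), and its contribution to $\quaddel_g$ equals $g(\theta_i)$, where $\theta_i$ is the opposing-angle sum of that diagonal. Since $g:[\pi,2\pi]\to[0,1]$ is a continuous strictly increasing bijection, $g^{-1}$ exists and is continuous, so I would choose the apex angle $\beta_i$ of the $i$-th kite so that $\theta_i=g^{-1}\!\left(w_i/W_{\mathrm{tot}}\right)$ with $W_{\mathrm{tot}}=\sum_j w_j$, after a harmless common rescaling of the weights so that all targets land in the angular range that \Cref{prop: p1-x-coord} can realize; that proposition then turns each target angle into explicit kite coordinates. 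With this calibration the non-Delaunay choices contribute exactly $\tfrac{1}{W_{\mathrm{tot}}}\sum_{i\in S'}w_i$ to $\quaddel_g(T)$, so the budget $\sum_{i\in S'}w_i\le W$ becomes a single threshold constraint on $\quaddel_g$.

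For the two directions I would use the two cases of \Cref{prop: p2-angle-range}, corresponding to the two polygons of \Cref{fig: reduction-sketch}, always arranging that inserting the item's diagonal (the ``blue'' choice) is the non-Delaunay one. For $\BCT(\uparrow E,\quaddel_g)$ I put each kite in the regime where its \emph{longer} diagonal is non-Delaunay, so that selecting item $i$ inserts the longer diagonal: then $E(T)=E_0+\sum_{i\in S'}v_i$ grows with the selected value and $\quaddel_g(T)=\tfrac{1}{W_{\mathrm{tot}}}\sum_{i\in S'}w_i$, whence maximizing $E$ under the threshold on $\quaddel_g$ is exactly Knapsack. For $\BCT(\downarrow E,\quaddel_g)$ I use the complementary regime where the \emph{shorter} diagonal is non-Delaunay, so that selecting inserts the shorter diagonal and $E(T)=E_1-\sum_{i\in S'}v_i$ shrinks with the selected value, so minimizing $E$ again maximizes the selected value. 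In either case the weight budget is a single threshold on $\quaddel_g$, whose direction ($\le$ or $\ge$) can be fixed as required by the Remark following \Cref{def: bct}: if a ``$\ge$'' constraint is wanted, one instead lets the non-Delaunay diagonal mark the \emph{unselected} items, turning the budget into its complement. Thus in both cases the BCT optimum equals the Knapsack optimum up to a fixed additive constant ($E_0$, resp.\ $E_1$), so their decision versions coincide. Finally, the ``in particular'' claim follows by specialization: writing $g_0(\theta)=\theta/\pi-1$ gives $g_0(\pi)=0$, $g_0(2\pi)=1$, and $g_0$ strictly increasing and continuous, so the reduction applies to $\quaddel_{g_0}$; and since an edge is Delaunay exactly when its opposing-angle sum is at most $\pi$, one has $\quaddel(e)=1+\quaddel_{g_0}(e)$ for every diagonal, whence $\quaddel(T)=\quaddel_{g_0}(T)+(n-3)$ and the hardness transfers after shifting the bound by $n-3$.

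I expect the main obstacle to be the single-kite calibration: one must simultaneously realize the prescribed diagonal-length ratio (for the value encoding) \emph{and} the prescribed opposing-angle sum $g^{-1}(w_i/W_{\mathrm{tot}})$ (for the weight encoding) with a geometrically valid kite whose chosen diagonal has both opposing vertices internal to the kite, while guaranteeing that the \emph{other} diagonal choice stays strictly Delaunay and that the forced trapezoid and triangle diagonals never become non-Delaunay; \Cref{prop: p2-angle-range,prop: p1-x-coord} are exactly what make this quantitative control possible. A secondary point is that for an arbitrary $g$ the target angles, and hence the coordinates produced by \Cref{prop: p1-x-coord}, need not be rational; I would handle this by working in the real-RAM model and by noting that the statement only asserts the \emph{existence} of $P$, with all coordinates determined in polynomial time (and concretely computable for each of the explicit near-Delaunay measures obtained by fixing $g$).
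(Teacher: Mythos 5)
Your proposal is correct and follows essentially the same route as the paper: a reduction from 0/1-Knapsack via kite gadgets with diagonals of lengths $v_i$ and $2v_i$, attached through isosceles trapezoids to a Delaunay-forced right-triangular bus, with apex angles calibrated through $g^{-1}$ so that the non-Delaunay diagonal choice contributes a scaled weight to $\quaddel_g$ (the paper's constant $c = g\left(\pi + 2\cot^{-1}(4)\right)/W$ plays the role of your rescaling), and with the two kite regimes of \Cref{prop: p2-angle-range} handling the $\downarrow E$ and $\uparrow E$ directions exactly as you describe. Your explicit specialization $g_0(\theta)=\theta/\pi-1$ together with the shift $\quaddel(T)=\quaddel_{g_0}(T)+(n-3)$ spells out the ``in particular'' step that the paper leaves implicit, but this is a presentational difference rather than a different argument.
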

\begin{proof} \textbf{(Hardness of $\BCT(\downarrow E,\quaddel_g)$)} Given a strictly increasing continuous function $g: [\pi,2\pi]\to [0,1]$ with $g(\pi)=0$ and $g(2\pi)=1$, assume without loss of generality that a decision version of Knapsack problem is defined with $ \{ v_i \in \mathbbm{Z}_{\geq 2} \}_{i=1}^n $, $ \{ w_i \in \mathbbm{Z}^+ \}_{i=1}^n $, $ V \in \mathbbm{Z}^+ $ and $ W \in \mathbbm{Z}^+ $, for which $ \max w_i < W $.
    
Using $v_i$ and $w_i$, we will construct in polynomial time a simple $(7n-2)$-gon, $P$, which consists of $n$ \emph{kites}, each kite of which is paired with an isosceles trapezoid glued to a large right-triangular $(5n-2)$-gon illustrated in \Cref{fig: rhombi-gadgets}. Furthermore, $P$ will be Delaunay-triangulable except the kites so that, on $P$, $ \quaddel_g $ will have \emph{strict monotone} relationship with all other near-Delaunay measures we are interested in; thus, proving hardness of $ \BCT(\downarrow E, \quaddel_g) $ indeed proves hardness of $ \BCT(\downarrow E, \sigma) $, where $
sigma$ is any near-Delaunay measure.

    \begin{figure}[H]
        \centering
        \includegraphics[width=\linewidth]{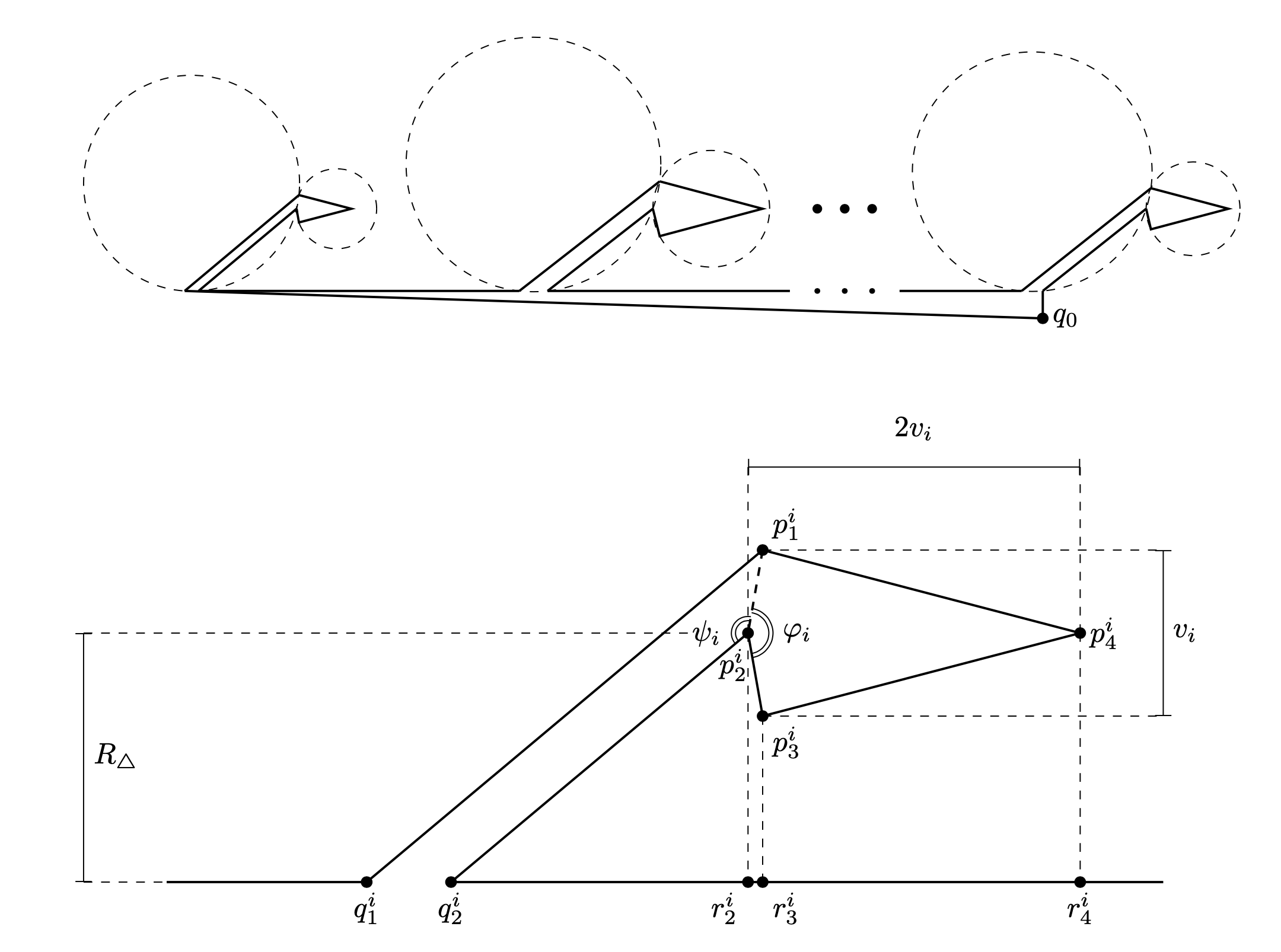}
        \caption{\textbf{(Above)} All gadgets are connected to a bus, which is shaped like a large right triangular $(5n-2)$-gon. Each item gadget is placed at a sufficient distance from both the preceding gadget and the upper border of the bus. This arrangement ensures that the circumcircle of any triangle from any gadget does not encompass any points belonging to other gadgets or the bus. Note that the bus has a unique triangulation.
        \textbf{(Below)} The $i$-th gadget that represents the $i$-th item of the given Knapsack problem. The \emph{kite}, $ \Box\, p_1p_2p_3p_4 $, is connected to an isosceles trapezoid. The trapezoid is long enough so that the largest circumcircle from the kite does not touch the upper border of the bus. $r^i_2$, $r^i_3$ and $r^i_4$ are perpendicular foot dropped from $p^i_1$, $p^i_2$ and $p^i_3$, respectively, so no circumcircles of triangles of the triangulation of the bus do not encompass any points of the kites.}
        \label{fig: rhombi-gadgets}
    \end{figure}

    For the given items, we first create orthogonal quadrilaterals, call the \emph{kites}, that were used in \Cref{prop: p2-angle-range}. Define $ \theta_i := \angle\,p^i_2 + \angle\,p^i_4 $. Note that we want $ g(\theta_i) $ to represent $ w_i $. If we find some constant $c$ such that $ g(\theta_i) = c \cdot w_i $, then we may measure $ \angle\,p^i_1 $ as $ \frac{2\pi-g^{-1}(c \cdot w_i)}{2} $. Then, we can find the $x$-coordinate of $ p^i_1 $, hence the whole figure of kite $i$. Set
        \begin{equation}\label{eq: c-val}
            c := \frac{g\left( \pi + 2\cot^{-1}(4) \right)}{W}
        \end{equation}
    so that $ 0 < c \cdot w_i < g\left( \pi + 2 \cot^{-1} (4) \right) < 1 $ for any $ i \in [n] $; $ g^{-1}(c \cdot w_i) $ is well-defined.
    Given $i$, put $ p^i_2 = (x^i_2,y^i_2) $ and find $ p^i_1 $, $ p^i_3 $ and $ p^i_4 $ such that
        \begin{equation}
            \overline{p^i_1p^i_3} = v_i, \quad
            \overline{p^i_2p^i_4} = 2v_i, \quad
            \overline{p^i_1p^i_3} \perp \overline{p^i_2p^i_4}, \quad
            \overline{p^i_1p^i_2} = \overline{p^i_2p^i_3}, \quad
            \angle\, p^i_1 = \frac{2\pi - g^{-1}(c \cdot w_i)}{2}
        \end{equation}
    by using \Cref{prop: p1-x-coord}.
    Let $x^i_1$ be the $x$-coordinate of $ p^i_1 $. Then,
        \begin{equation}
            \varphi_i := \angle\,p^i_2 = 2\arctan\left(\frac{v_i/2}{x^i_1-x^i_2}\right).
        \end{equation}
        
    We now determine the relative positions of $ q^i_1 $ and $ q^i_2 $ to $ p^i_2 $. Let $ R^i_1 $ be the largest circumradius of the triangles belonging to $ \Box\,p^i_1p^i_2p^i_3p^i_4 $.
    Let $ R_{\triangle} := \max_i R^i_1 $. Let $ \psi_i = \angle\,p^i_1p^i_2q^i_2 $. Then,  for $ \Box\,p^i_2q^i_2q^i_1p^i_1 $ to be an isosceles trapezoid, $ \angle\,p^i_1p^i_2q^i_2 = \angle\,p^i_2q^i_2q^i_1 = \psi_i = \pi - \theta_i/4 $.
    Also, we want each kite distant enough from the bottom right triangle gadget so that no circumcircle from the kite encompasses any point of the right triangle gadget.
    Define,
        \[
            q^i_2 := p^i_2 + \left(R_{\triangle} \cdot \cot \psi_i, -R_{\triangle} \right) \qquad \text{and} \qquad q^i_1 := q^i_2 - \left( \overline{p^i_1p^i_2}, 0 \right).
        \]
    Let us now determine $p^1_2$ and $p^i_2$ so that there is enough space between every pair of two consecutive gadgets. Denote the circumradius of $ \Box\,p^i_1p^i_2q^i_2q^i_1 $ by $ R^i_{\Box} $.
    Define
        \[
            p^1_2 := (0,0) \quad p^i_2 := p^{i-1}_4 + (2 R^i_{\Box} ,0).
        \]
    Since the circumcenter of the $i$-th trapezoid is above $ p^i_2 $, the circumcircle does not encompass $ p^{i-1}_4 $. 
    
    Define $ q_0 := q^n_2 + (0,-1) $. Note that the circumcircle of $\triangle q_0q^{i+1}_1q^i_2$ for some $i$ might encompass points of the $i$-th kite gadgets. To avoid this scenario, we put some extra points along the upper border of the right triangle gadgets. Define $ r^i_j $ by the projection of $ p^i_j $ onto $\overline{q^i_2q^{i+1}_1}$ for $ i\in[n-1] $ and $j\in\{2,3,4\}$, completing construction of $P$. Now, the following claim is obvious.
    
    \begin{claim} \label{claim: P-delaunay}
        The only triangulation of the right triangle gadget is locally Delaunay in $P$. In fact, there is a Delaunay triangulation of $P$.
    \end{claim}

    Define a decision version of $\BCT(\downarrow E, \quaddel_g)$ as a problem that asks that given $V'>0$ and $b\geq0$ if there exists a triangulation $T$ of $P$ such that $E(T) \leq V'$ and $\quaddel_g(T) \leq b$. Denote this problem by $\BCT((\downarrow E,\leq,V'),(\quaddel_g,\leq,b))$. We now prove that a YES-instance of the Knapsack problem is a YES-instance of $\BCT(\downarrow E, \quaddel_g)$ and vice versa.
    \begin{claim}\label{claim:yes-instance-knapsack-to-bct}
        Let $L_{\max}$ denote the maximum Euclidean weight of triangulations of $P$, and let $ c = g\left( \pi + 2\cot^{-1}(4) \right)\big/W $ as in \Cref{eq: c-val}. Then, a YES-instance of $ \textsc{Knapsack}(\{v_i\}, \{w_i\}, V, W) $ is a YES-instance of $ \BCT((\downarrow E,\leq,L_{\max}-V),(\quaddel_g,\leq,c\cdot W)) $.
    \end{claim}
    \begin{proof}
        Let a binary sequence $ \{{z^{\star}}_i\} $ be a YES-instance of the Knapsack problem. When triangulating $P$, select $ p^i_1p^i_3 $ if $ {z^{\star}}_i = 1 $ and $ p^i_2p^i_4 $ if $ {z^{\star}}_i = 0 $. Select any remaining diagonals to complete a triangulation of $P$, and denote by ${t^{\star}}$ the resulting triangulation. Since $ \overline{p^i_1p^i_3} - \overline{p^i_2p^i_4} = -v_i $ and $ \sum_i {z^{\star}} v_i \geq {V} $ , we have $E({t^{\star}}) = L_{\max} - \sum_i {z^{\star}} v_i \leq L_{\max} - V$.

        Also, by the first claim above,
            \begin{equation*}
            \begin{split}
                \quaddel{}_g({t^{\star}})
                    &= \sum_i \left({z^{\star}}_i\cdot\quaddel{}_g(\overline{p^i_1p^i_3},\Box\,p^i_1p^i_2p^i_3p^i_4)\right)
                    = \sum_i \Big({z^{\star}}_i \cdot g(\theta_i)\Big) \\
                    &= \sum_i \Big( {z^{\star}}_i \cdot g(g^{-1}(c \cdot w_i)) \Big)
                    = c \cdot \sum_i {z^{\star}} \cdot w_i
                    \leq c \cdot W,
            \end{split}
            \end{equation*}
        as we expected.
    \end{proof}

    \begin{claim}\label{claim:yes-instance-bct-to-knapsack}
        A YES-instance of $ \BCT((\downarrow E,\leq,L_{\max}-V),(\quaddel_g,\leq,c\cdot{W})) $ is a YES-instance of $ \textsc{Knapsack}(\{v_i\}, \{w_i\}, {V}, {W}) $.
    \end{claim}
    \begin{proof}
        Let ${t^{\star}}$ be a YES-instance of $ \BCT((\downarrow E,\leq,L_{\max}-V),(\quaddel_g,\leq,c\cdot{W})) $. Define $ {z^{\star}}_i $ by $ {z^{\star}}_i = 1 $ if $\overline{p^i_1p^i_3}\in{t^{\star}}$, and 0, otherwise. Since $ E({t^{\star}}) = L_{\max} - \sum_i {z^{\star}}_iv_i \leq L_{\max} - V $,
        it immediately follows that $ \sum_i {z^{\star}}_iv_i \geq {V} $. Also, since
            \begin{equation*}
                \quaddel{}_g({t^{\star}})
                    = \sum_i \left({z^{\star}}_i\cdot\quaddel{}_g(\overline{p^i_1p^i_3},\Box\,p^i_1p^i_2p^i_3p^i_4)\right)
                    = c \cdot \sum_i {z^{\star}}\cdot  w_i
                    \leq c \cdot {W},
            \end{equation*}
        we have that $ \sum_i {z^{\star}}_i\cdot w_i \leq {W} $, showing that our claim holds.
    \end{proof}

    \noindent \textbf{(Hardness of $\BCT(\downarrow E,\edgelen)$, $\BCT(\downarrow E,\edgesh)$ and $\BCT(\downarrow E,\trigdel)$)} Proving the hardness for $\edgelen$, $\edgesh$ and $\trigdel$ now follows since we have shown that $\BCT(\downarrow E,\quaddel_g)$ is hard for any $g$. As illustrated in \Cref{fig: monotone-relationship}, with $\overline{p^i_1p^i_3}$ and $\overline{p^i_2p^i_4}$ held fixed, as $\varphi_i$ increases the fractional overlap, the lens angle and the fractional triangular lens area decrease.
    Therefore, the larger $\varphi_i$ the less Delaunay opting $\overline{p^i_1p^i_3}$ leads to. Therefore, $ \BCT(\downarrow E, \sigma) $, where $ \sigma $ is any of $\edgelen$, $\edgesh$ and $\trigdel$, can solve $ \BCT(\downarrow E, \quaddel_g) $ such that $ \quaddel_g (\theta_e) = \sigma(e) $ or $ \quaddel_g (\theta_e) = \sigma(t) $, where $ e \in t \in T \in \mathcal{T} $. Hence, $\BCT(\downarrow E,\edgelen)$, $\BCT(\downarrow E,\edgesh)$ and $\BCT(\downarrow E,\trigdel)$ are all \emph{NP-hard}. \\

    \noindent \textbf{(Hardness of $\BCT(\uparrow E,\quaddel_g)$)} As composed to the $ \min $ case, we want opting the longer diagonal of a kite to lead to a locally non-Delaunay triangulation. I.e., we want
        \[
            \theta_i := g^{-1}(c \cdot w_i) = \angle\,p_1 + \angle\,p_3.
        \]
    By \Cref{prop: p2-angle-range}, $ \pi < \angle\,p_1 + \angle\,p_3 < 4\arctan(2) $. Thus, for $ g^{-1}(c \cdot w_i) $ to be well-defined, set
        \[
            c := \frac{g(4\arctan(2))}{{W}} \quad \text{and} \quad \angle\,p^i_1 = \frac{g^{-1}(c \cdot w_i)}{2},
        \]
    so that
        \[
            0 < c \cdot w_i < g(4\arctan(2)) \quad \text{and} \quad \theta_i = \angle\,p^i_1 + \angle\,p^i_3.
        \]
    Now, the rest of the proof follows analogously.
\end{proof}

\newpage
\begin{figure}[H]
    \centering
    \begin{subfigure}{\textwidth}
        \centering
        \includegraphics[width=0.2\textwidth]{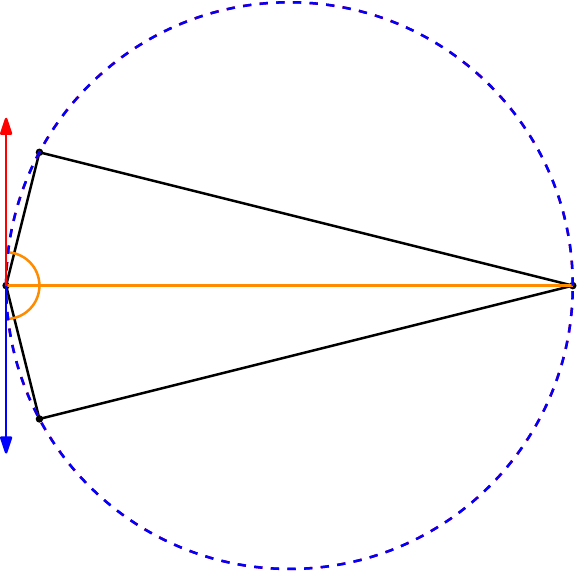}\hspace{2cm}
        \includegraphics[width=0.2\textwidth]{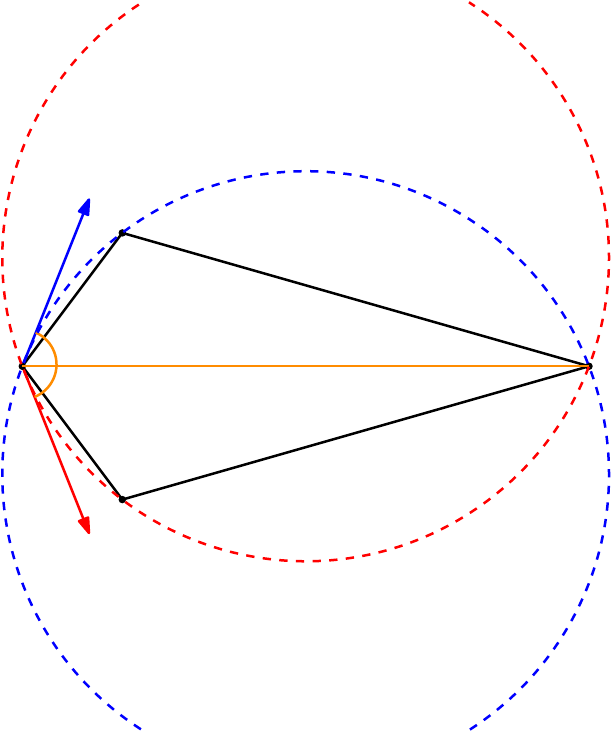}\hspace{2cm}
        \includegraphics[width=0.2\textwidth]{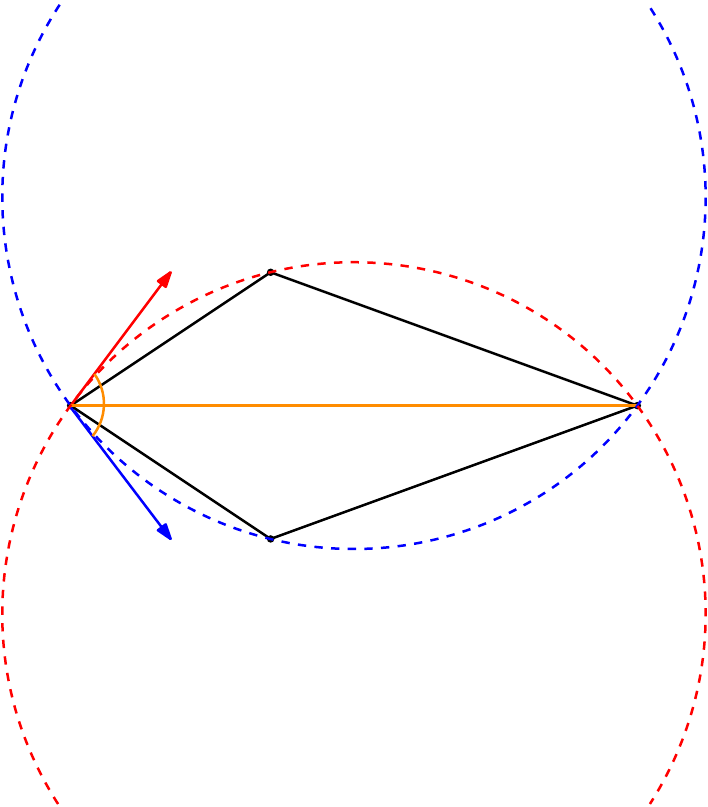}
        \label{fig:edgelenhard-row}
    \end{subfigure}
    \vspace{0.5cm}

    \begin{subfigure}{\textwidth}
        \centering
        \includegraphics[width=0.2\textwidth]{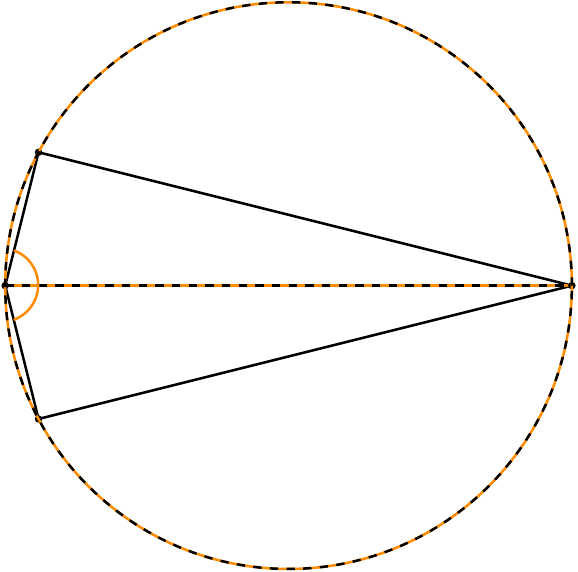}\hspace{2cm}
        \includegraphics[width=0.2\textwidth]{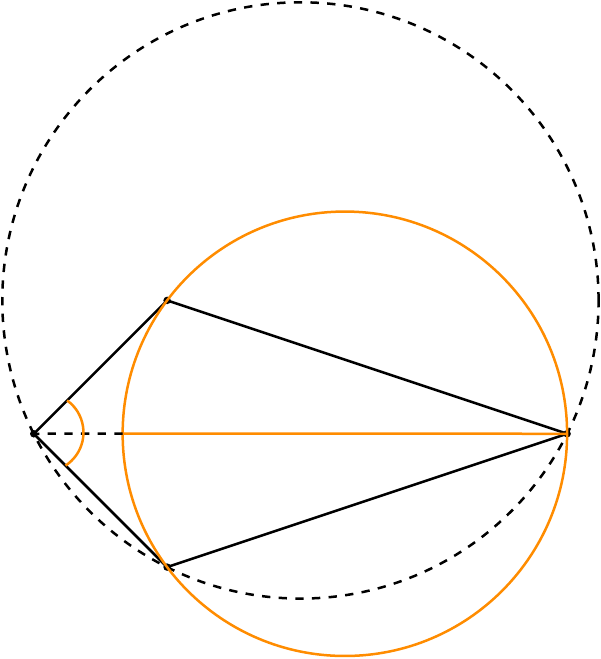}\hspace{2cm}
        \includegraphics[width=0.2\textwidth]{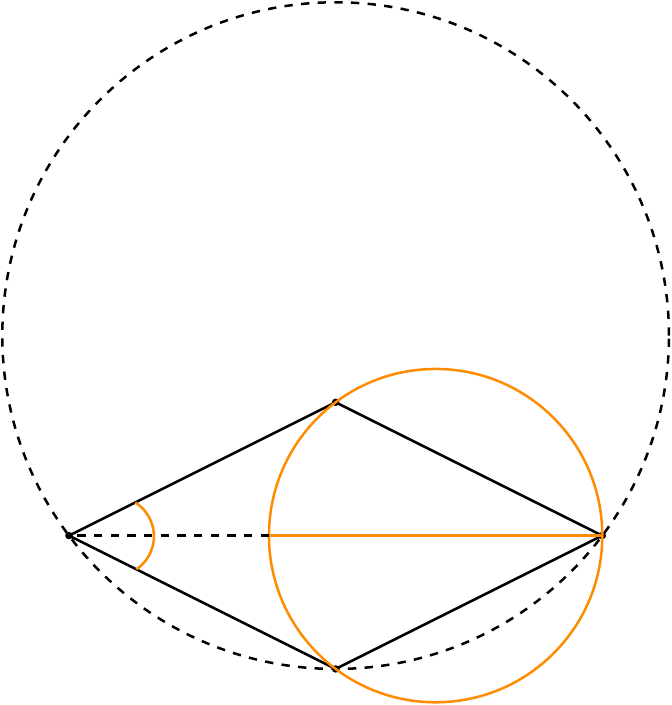}
        \label{fig:edgeshhard-row}
    \end{subfigure}
    \vspace{0.5cm}

    \begin{subfigure}{\textwidth}
        \centering
        \includegraphics[width=0.2\textwidth]{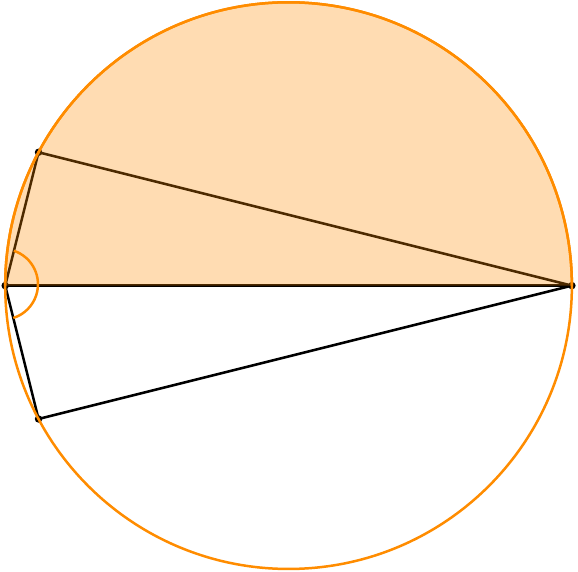}\hspace{2cm}
        \includegraphics[width=0.2\textwidth]{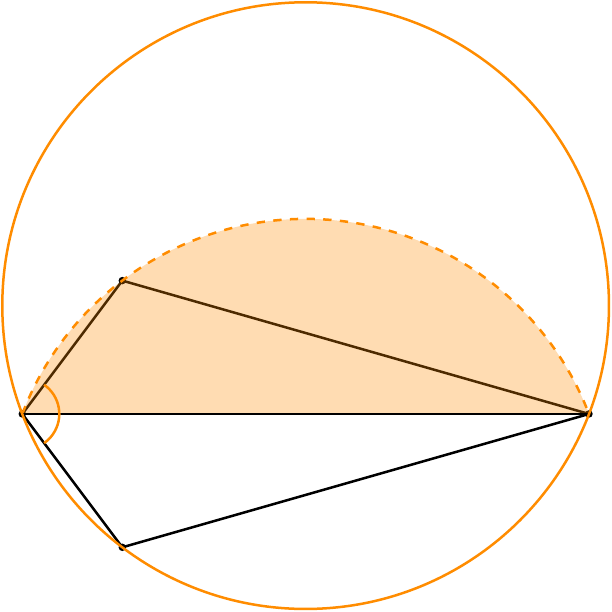}\hspace{2cm}
        \includegraphics[width=0.2\textwidth]{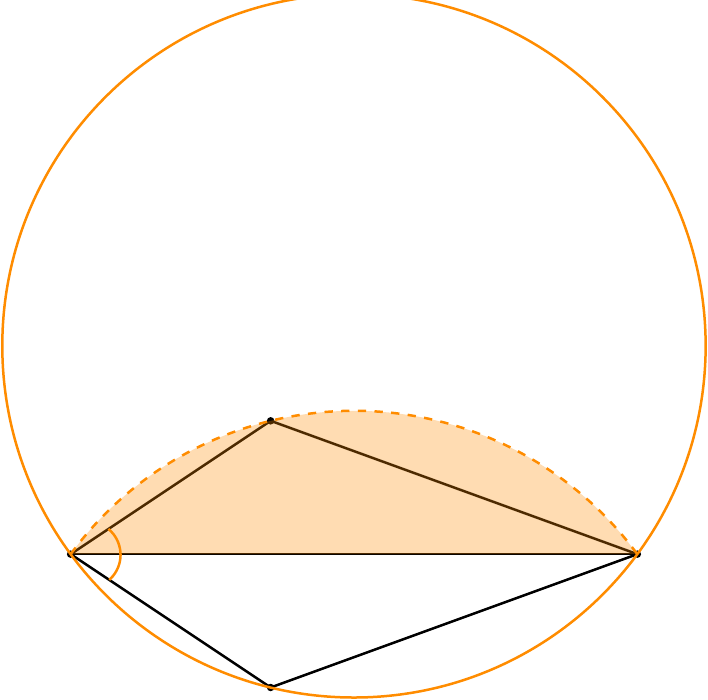}
        \label{fig:trigdelhard-row}
    \end{subfigure}

    \caption{Illustrations of the monotone relationships between $\quaddel_g$ and the other Delaunay measures—$\edgelen$, $\edgesh$, and $\trigdel$—on our hardness polygon. This illustration depicts that NP-Hardness of $\BCT(\uparrow E,\quaddel_g)$ implies that NP-Hardness of $\BCT(\uparrow E,\edgelen)$, $\BCT(\uparrow E,\edgesh)$ and $\BCT(\uparrow E,\trigdel)$.
    Due to space constraints, it is challenging to demonstrate the monotone relationships for decreasing ($\downarrow$) cases here; however, analogous monotone relationships for these cases can also be readily identified.}
    \label{fig: monotone-relationship}
\end{figure}

\clearpage

\section{Proof of \Cref{thm: bct-algorithms} (Algorithms for BCT)}\label{sec: bct-details}

We now present the full proof of \Cref{thm: bct-algorithms}, beginning with a restatement of the theorem for the reader’s convenience.

\ResBCTAlgorithms*
\begin{proof}
\begin{enumerate}
    \item\textbf{(Integer-valued measure)}
    
    \textbf{Case 1: $w(\cdot)$ is integer-valued and $w(T)\in[0,W]$.} This case has already been proven, so we omit it here.

    \vspace{2mm}\noindent\textbf{Case 2: $\sigma(\cdot)$ is integer-valued.} We provide proof for the case that both $w(\cdot)$ and $\sigma(\cdot)$ are triangle-decomposable since the case that one of the measure is edge-decomposable can be handled analogously.
    
    \hspace{4mm} Assume that $k=1$ and both $w(\cdot)$ and $\sigma(\cdot)$ are additive triangle-decomposable measures.
    Let $OPT(B',i,j)$ denote the minimum weight of triangulation of $P[i:j]$ whose quality is at most $B'$:
\[ OPT(B',i,j) = \min\{ w(T) \mid \sigma(T) \leq B', T\in\cT[i:j] \}. \]
By convention, we define $ OPT(B',i,j) = w(\triangle(i,i+1,i+2)) $ when $j=i+2$.
When $j>i+2$, then $ OPT(B',i,j) $ can be found by using the following recurrence relation:
\begin{equation*}
    OPT(B',i,j) = \min\limits_{\substack{m \in [i+2, j-2] \\ B'_1,B'_2 \in [0, B'-\sigma(\triangle imj)], \\ B'_1 + B'_2 = B'-\sigma(\triangle imj)}} OPT(B'_1,i,m) + w(\triangle imj) + OPT(B'_2, m,j)
\end{equation*}

When any of $\overline{im}$, $\overline{mj}$ and $\overline{mj}$ is not an allowed diagonal of $P$, or $B'<\sigma(\triangle imj)$, we set $OPT(B',i,j) = \infty$.
Then, it is easy to check that $ OPT(B,1,n) $ can be computed in time $ O((B+1)^2 n^3) $.

\hspace{4mm} When $k>1$, as in Case 1, let $OPT(B',i,j)$ denote the sorted tuple (in non-decreasing order) of the $k$ smallest $w(T)$ among $T\in\cT[i:j]$ with $ \sigma(T) \leq B' $.
The rest of the proof is analogous to Case 1.

\hspace{4mm} We end the proof by noting similar dynamic programs can be easily constructed when $w(\cdot)$ and $\sigma(\cdot)$ are not additive, but are decomposable with $\min$ or $\max$.
For example, let $w(\cdot)$ and $\sigma(\cdot)$ are both edge-decomposable with respect to $\min$, and $w(\cdot)$ ranges over $[0,W]$, let $ OPT(W',i,j) $ denote the best quality of a triangulation of $P[i:j]$, and define $ OPT(W',i,j)=\infty $ when $ j=i+1 $.
When $j>i+1$, then $OPT(W',i,j)$ can be computed by using the following recurrence relation:
\begin{equation*}
    OPT(W',i,j) = \min_{m, W'_1,W'_2} \Big\{ \min \{OPT(W'_1,i,m),\sigma(\triangle imj), OPT(W'_2, m,j)\} \Big\},
\end{equation*}
where the first minimum in the equation above is taken over all $m \in [i+1, j-1] $ and $ W'_1,W'_2 \in [0, W'-(\triangle imj)] $ such that $ W' = \min \{W'_1,W'_2,w(\triangle imj)\} $.

\hspace{4mm} The other cases can be similarly handled, we omit the details.

    \item \textbf{(FPTAS)} Assume that $ w(\cdot) $, $ \sigma(\cdot) $, and a budget $B\geq0$ are given. For brevity, we assume that both $ w(\cdot) $ and $ \sigma(\cdot) $ are additive and triangle-decomposable. The other cases can be handled similarly.
    
    \hspace{4mm} The main idea of the algorithm is to scale down the quality measures of the allowed triangles in the polygon, as well as the given budget, to integers of size at most $\mathrm{poly}(n, \varepsilon)$.  
    We then run the BCT algorithm described above in Case 2, over these scaled weights and budget. Since the running time of the algorithm in Case 2 is $O((B+1)^2 n^3)$, the BCT problem can be solved in cubic time when $B=0$. Thus, we focus ourselves on the case where $ B > 0$. 

    Assume that $ B > 0 $.
    Given $ \varepsilon > 0 $, define
        \begin{equation}
            \tilde{\sigma}(t) := \left\lfloor \frac{n-2}{\varepsilon B} \cdot \sigma(t) \right\rfloor, \qquad \tilde{B} := \left\lfloor \frac{n-2}{\varepsilon} \right\rfloor,
        \end{equation}
     where $t$ is any allowed triangle of $P$.
     Then we solve $ \BCT(w, \tilde{\sigma}) $ with bound $\tilde{B}$, i.e.,
     \begin{equation*}\label{eq: fptas-constraint}
         \argmin\{w(T):T \in \mathcal{T}\text{ and }\tilde{\sigma}(T) \,\,\leq\,\, \tilde{B}\}
     \end{equation*}
    which can be done in time $O(\tilde{B}^2 \cdot n^3) = O(\varepsilon^{-2} \cdot n^5)$ by using the BCT algorithm in Case 2.

    \hspace{4mm} Let $ \tilde{T} $ be an output triangulation, and $\tstar$ be an optimal triangulation.
    We now prove that the $\tilde{T}$ and $\tstar$ satisfy the desired conditions, i.e.,  $ \sigma(\tilde{T}) \leq (1+\varepsilon)B $ and $ w(\tilde{T}) \leq w(\tstar) $.

    \vspace{2mm}
    \hspace{4mm} We first claim that $ \sigma(\tilde{T}) \leq (1+\varepsilon)B $. Note that $ 0\leq \frac{n-2}{\varepsilon B}\cdot\sigma(t) - \left\lfloor \frac{n-2}{\varepsilon B} \cdot \sigma(t) \right\rfloor \leq 1 $.  Therefore,
        \begin{equation*}
        \begin{split}
            \sigma(\tilde{T})
                = \sum_{t\in \tilde{T}} \sigma(t)
                &\leq \left( (n-2) + \sum_{t\in\tilde{T}} \left\lfloor \frac{n-2}{\varepsilon B}\cdot\sigma(t) \right\rfloor \right) \frac{\varepsilon B}{n-2} \\[2mm]
                &= \left( (n-2) + \sum_{t\in\tilde{T}} \tilde{\sigma}(t) \right) \frac{\varepsilon B}{n-2}
                \leq \left( (n-2) + \left\lfloor \frac{n-2}{\varepsilon} \right\rfloor \right) \frac{\varepsilon B}{n-2} \\[2mm]
                &\leq \left( (n-2) + \frac{n-2}{\varepsilon} \right) \frac{\varepsilon B}{n-2} = (1+\varepsilon)B.
        \end{split}
        \end{equation*}

    \hspace{4mm} We now claim that $ w(\tilde{T}) \leq w(\tstar) $. This follows if we instead prove that $\tilde{\sigma}(\tstar) \leq \tilde{B}$, because if these two triangulations satisfy the same quality constraint then by the minimality of $w(\tilde{T})$, it follows that $ w(\tilde{T}) \leq w(\tstar) $. This can be shown as follows.
    \begin{equation*}
    \begin{split}
        \tilde{\sigma}({\tstar})
        &= \sum_{t\in{\tstar}} \tilde{\sigma}(t)
        = \sum_{t\in{\tstar}} \left\lfloor \frac{n-2}{\varepsilon B} \cdot \sigma(t) \right\rfloor \\
        &\leq \left\lfloor \frac{n-2}{\varepsilon B} \cdot \sum_{t\in{\tstar}} \sigma(t) \right\rfloor
        \leq \left\lfloor \frac{n-2}{\varepsilon B} \cdot B \right\rfloor = \tilde{B},
    \end{split}
    \end{equation*}
as we desired.
\end{enumerate}
\end{proof}

\clearpage
\section{Proof of \Cref{thm: dnt-special} (Special Cases for Sum-DNT)}\label{sec: dnt-special}

In this section, we provide proof for \Cref{thm: dnt-special}. We begin by restating the theorem for the reader’s convenience. Recall that $\beta = \betak$.

\ResDNTSpecial*

\begin{proof}[Proof of \Cref{thm: dnt-special} (1)]
When $ \alpha = 1 $, all output triangulations must have optimal quality.
Given $\sigma$, consider step $i$ of the algorithm in the proof of \Cref{thm: dnttheorem}(\ref{dnt: general}), and let $ S_i = \{T_1, \ldots, T_k\}$ be the set of $k$ optimal triangulations w.r.t. $\sigma$. Recall that $tr(T)$ denotes the set of triangles in the triangulation $T$.
Define the weight of a triangulation $T$ of $P$ as a 2d-vector \[ w_i(T) := \sum_{t \in tr(T)}\left(  \sigma(t), \;\; \sum_{e \in t}\sum_{T_j\in S_i}\mathbbm{1}(e\in T_j) \right), \] where the first summation is performed coordinate-wise. Observe that the first component of $w_i(T)$ represents the quality of the triangulation $T$, and the second component represents twice the total frequency of the edges of $T$ in $S_i$.
Furthermore, for any two triangulations $T'_1$ and $T'_2$ of $P$, we order $w_i(T'_1)$ and $w_i(T'_2)$ lexicographically.
Now, observe that the farthest triangulation with the optimal quality is in fact the MWT w.r.t. $w_i(\cdot)$.
Since the $k$ MWTs can be found in time $O(kn^3)$~\cite{eppstein2015k}, the overall time bound is $O(n^3k^3\log k)$.
\end{proof}

\vspace{2mm} To prove the rest, we consider two key lemmas. The first one will be used to prove \Cref{thm: dnt-special} (2). Following that, we introduce and prove an additional lemma to prove \Cref{thm: dnt-special} (3).
We begin with a lemma for finding disjoint triangulations for convex polygons when $k \leq n/2$.
\begin{restatable}{lemma}{disjtriangulations}\label{lem:disjtriangulations}
    For any $k \leq n/2$, given a convex $n$-gon, there exists an $O(kn)$-time algorithm that returns $k$ disjoint triangulations of $P$.
\end{restatable}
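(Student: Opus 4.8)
The plan is to exhibit $k$ pairwise edge-disjoint triangulations explicitly, as cyclic rotations of a single well-chosen base triangulation. Label the vertices $0,1,\dots,n-1$ in convex position, and let the rotation by $r$ act on a diagonal $\{a,b\}$ by $\{a,b\}+r := \{\,a+r \bmod n,\; b+r \bmod n\,\}$. Since this is a symmetry of the polygon, it carries any triangulation $T$ to another triangulation $T+r$ with the same structure (non-crossing, maximal, exactly $n-3$ diagonals). Because a convex $n$-gon has exactly $n(n-3)/2$ diagonals while each triangulation uses $n-3$ of them, at most $n/2$ triangulations can be pairwise disjoint, so the hypothesis $k\le n/2$ is precisely the feasibility regime, and the goal is to realize it.

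For the base I would take the \emph{snake} (zigzag) triangulation $T_0$, obtained by triangulating along the vertex order $0,1,n-1,2,n-2,\dots$; its diagonals can be written as $\{\,(m,\,n-m)\,\}\cup\{\,(m+1,\,n-m)\,\}$ for $m\ge 1$, truncated to the $n-3$ valid diagonals. The algorithm then simply outputs $T_0,\,T_0+1,\dots,T_0+(k-1)$. Computing $T_0$ takes $O(n)$, and each rotation is produced by adding a constant to all $O(n)$ vertex labels, so the total running time is $O(kn)$, giving the claimed bound immediately.

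Correctness reduces to showing these $k$ rotations are pairwise edge-disjoint. Since $T_0+r$ and $T_0+s$ share a diagonal if and only if $T_0$ and $T_0+(r-s)$ do, it suffices to prove $T_0\cap(T_0+t)=\emptyset$ whenever the circular distance $\min(|t|,\,n-|t|)$ lies in $\{1,\dots,\lfloor n/2\rfloor-1\}$; this covers every pair inside the window $\{0,\dots,k-1\}$, since $k\le\lfloor n/2\rfloor$ forces all pairwise circular distances to be at most $k-1<\lfloor n/2\rfloor$. The key structural fact I would establish is the \emph{antipodal property} of the snake: two diagonals of $T_0$ are rotations of one another exactly when they have equal shorter-arc length, and in that case the rotation relating them has circular distance exactly $\lfloor n/2\rfloor$. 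For even $n$ this follows because the snake is centrally symmetric, $T_0+n/2=T_0$, with a unique diameter, so each length class contributes a single antipodal pair. For odd $n$ a short computation with the two families above shows that the two diagonals realizing each shorter-arc length $\ell$ are related by the rotation $(n\pm 1)/2$, again of circular distance $\lfloor n/2\rfloor$. Hence no rotation of circular distance below $\lfloor n/2\rfloor$ can carry a diagonal of $T_0$ onto another one, which yields disjointness of the window.

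The main obstacle is exactly this antipodal computation: one must verify that equal-length snake diagonals always sit ``half a turn apart,'' and in particular that no two same-length diagonals are a \emph{small} rotation apart — this is the failure mode that rules out the naive fan triangulation, whose short diagonals repeat at small shifts. The bookkeeping differs between the parities (centrally symmetric with one diameter for even $n$, versus two interleaving families paired by $(n\pm1)/2$ for odd $n$), so I would isolate the length spectrum of the snake in a small auxiliary lemma and treat the two cases separately. Once that is in place, pairwise disjointness of the $k$ consecutive rotations, and the $O(kn)$ running time, follow directly.
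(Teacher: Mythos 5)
Your proposal is correct and takes essentially the same route as the paper: the paper also constructs the $k$ triangulations as rotations of the zigzag (snake) triangulation, starting the zigzag at $p_1$, then $p_2$, and so on, and claims pairwise disjointness of these rotated copies. The only difference is in rigor—the paper asserts disjointness ``by construction'' (with a brief remark about gap-$1$ diagonals), whereas your antipodal-property argument (equal-length snake diagonals are related only by a rotation of circular distance $\lfloor n/2\rfloor$, verified separately for even and odd $n$) actually supplies the verification the paper leaves informal.
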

\begin{figure}[ht]
    \centering
    \includegraphics[width=0.8\textwidth]{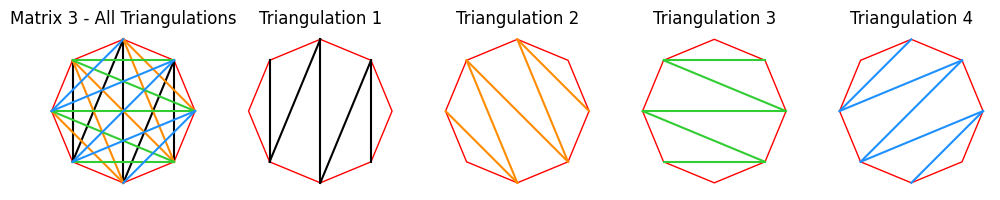}
    \caption{An example of 4 disjoint triangulations of a convex octagon.}
    \label{fig:disjointtriangulations}
\end{figure}
\begin{proof}
    We prove the case when $n$ is even. Proof of the other case is analogous. Without loss of generality, we may assume that a polygon is regular. Given $P$, fix a vertex. Call the fixed vertex $p_1$ and label the other vertices counterclockwise in increasing order. Starting from $p_1$, draw diagonals in zigzag manner as illustrated in \Cref{fig:disjointtriangulations}; $p_1 \rightarrow p_3 \rightarrow p_n \rightarrow p_4 \rightarrow p_{n-1} \rightarrow p_5 \rightarrow \cdots \rightarrow p_{3+\frac{n-4}{2}} \rightarrow p_{5+\frac{n-4}{2}} $, where the diagonals $\overline{p_1p_3}$ and $\overline{p_{3+\frac{n-4}{2}}p_ {5+\frac{n-4}{2}}}$ are symmetric, \emph{i.e.}, $\Box\,p_1p_3p_{3+\frac{n-4}{2}}p_ {5+\frac{n-4}{2}}$ is a parallelogram.
    Once done, draw another set of zigzag diagonals starting from $p_2$ in a similar manner. Repeat this process until there are no remaining diagonals. Note that, by construction, no two triangulations share a diagonal in common. Furthermore, for any triangulations we have just drawn, the first diagonal and the last diagonal are gap\text{-}1 diagonals, \emph{i.e.}, there is only one vertex between the two vertices of the diagonal. Since $ n/2 $ disjoint gap\text{-}1 diagonals can be drawn in any convex $n$-gon, we have $n/2$ disjoint triangulations.
\end{proof}

\begin{proof}[Proof of \Cref{thm: dnt-special} (2)]
    When $k \leq n/2$, we run the algorithm from \Cref{lem:disjtriangulations}, which takes $O(kn)$ time.
    When $k \geq 2/\varepsilon$, run the $ O(n^5k^5\log k) $-time algorithm illustrated in \Cref{thm: dnttheorem}.
    Finally, when $n/2 < k < 2/\varepsilon$, we exhaustively check all sets of $k$ triangulations of $P$. Since a convex $r$-gon has exactly the $(r-2)$-nd Catalan number of triangulations and the $r$-th Catalan number is $O(4^{r})$~\cite{DUTTON1986211}, this step can be done in time $2^{O(1/\varepsilon^2)}$.
    Combining the results from all three cases completes the proof of the desired time bound.
\end{proof}

Observe that a polygon admits a unique Delaunay triangulation unless it has a set of four or more co-circular points~\cite{deberg2008}.
Consequently, to obtain \emph{diverse} Delaunay triangulations, it suffices to focus on the sets of co-circular points.
Any such co-circular set forms a convex polygon, each co-circular set with $r$ points has $O(4^r)$ possible triangulations. Therefore, a naive approach for finding $k$ triangulations with maximum diversity in a polygon that has $m$ co-circular sets may require 
\[
O\left( \binom{4^{r_1} \times 4^{r_2} \times \dots \times 4^{r_m}}{k} \cdot nk^2 \right)
=\,
O\bigl(4^{k(r_1 + \dots + r_m)} \cdot nk^2\bigr)
=\,
2^{O(kn)},
\]
where $r_1 + \dots + r_m = O(n)$. This exponential bound indicates the need for a more efficient technique to achieve a PTAS. To this end, we introduce the following lemma.

\begin{lemma}\label{lem: delaunay-brute-force}
Let $P$ be a Delaunay-triangulable polygon with $n$ vertices. Let $C_1$, ..., $C_m$ be the sets of co-circular points in $P$, each corresponding to a distinct co-circle and containing at least four points, and let $ M = \max_{\ell \in [m]} \lvert C_\ell \rvert. $
Then there is an algorithm running in $m^{O(k^2)} 2^{O(kM)}$ time that computes $k$ distinct Delaunay triangulations $T_1, \dots, T_k$ of $P$ with \emph{maximum diversity}, i.e., for any other set of $k$ distinct Delaunay triangulations $T'_1, \dots, T'_k$ of $P$, $ \sum_{i \neq j} \lvert T_i \Delta T_j \rvert 
\geq
\sum_{i \neq j} \lvert T'_i \Delta T'_j \rvert. $
\end{lemma}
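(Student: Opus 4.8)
The plan is to exploit the product structure of the space of Delaunay triangulations of $P$ that was recalled just before the lemma. Outside the co-circular sets the Delaunay triangulation is forced, while inside each co-circular polygon $\mathrm{conv}(C_\ell)$ one may choose \emph{any} triangulation, and these choices are independent across $\ell$. Hence a Delaunay triangulation $T$ of $P$ is in bijection with a tuple $(T|_{C_1},\dots,T|_{C_m}) \in \prod_{\ell=1}^m \mathcal{T}(\mathrm{conv}(C_\ell))$, where $T|_{C_\ell}$ collects the free diagonals of $T$ lying inside $\mathrm{conv}(C_\ell)$. One checks that each free diagonal lies inside a unique co-circular polygon (distinct co-circles meet in at most two points and all co-circular boundary edges are forced Delaunay edges), so the symmetric difference splits as a disjoint union: $\lvert T\Delta T'\rvert = \sum_{\ell=1}^m \lvert T|_{C_\ell}\,\Delta\,T'|_{C_\ell}\rvert$. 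Consequently the diversity is \emph{separable}: for any $T_1,\dots,T_k$ we have $\sum_{i\neq j}\lvert T_i\Delta T_j\rvert = \sum_{\ell=1}^m\big(\sum_{i\neq j}\lvert T_i|_{C_\ell}\,\Delta\,T_j|_{C_\ell}\rvert\big)$.

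If it were not for distinctness, we could therefore optimize each co-circular set independently. The difficulty — and the main obstacle — is that distinctness of the full triangulations is a \emph{joint} condition: the pair $(T_i,T_j)$ is distinct iff its restrictions differ in at least one co-circular set, which couples the otherwise independent coordinates. I would resolve this by guessing a \emph{witness} function $\omega:\binom{[k]}{2}\to[m]$ assigning to each pair $(i,j)$ one co-circular set in which $T_i$ and $T_j$ are required to differ; there are $m^{\binom{k}{2}}=m^{O(k^2)}$ such functions. Fixing $\omega$ converts the joint distinctness condition into the separable family of constraints $\{\,T_i|_{C_\ell}\neq T_j|_{C_\ell} : \omega(i,j)=\ell\,\}$, one group of inequalities per co-circular set.

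With $\omega$ fixed, both objective and constraints decompose over $\ell$, so I would optimize each co-circular set in isolation: for each $\ell$, enumerate all $k$-tuples $(x_1,\dots,x_k)$ of triangulations of $\mathrm{conv}(C_\ell)$, discard any tuple violating $x_i\neq x_j$ for a pair assigned to $\ell$ by $\omega$, and record the maximum local diversity $\sum_{i\neq j}\lvert x_i\Delta x_j\rvert$ over the survivors (or $-\infty$ if none survive). Summing these local maxima gives the best diversity achievable under $\omega$, and assembling the per-$\ell$ maximizers yields $k$ triangulations that are genuinely distinct, since each pair already differs at its witness set. The algorithm returns the best value over all $\omega$. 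Correctness rests on two inequalities. First, any feasible distinct $k$-tuple is feasible for the witness obtained by choosing, for each pair, a set where it actually differs, so the optimum over $\omega$ is at least the global optimum. Conversely, every $\omega$ produces an honest distinct solution whose diversity equals its local-maxima sum, so no $\omega$ can exceed the global optimum. Hence the maximum over witnesses equals the true maximum diversity.

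For the running time, a convex polygon on at most $M$ vertices has at most $C_{M-2}=O(4^M)$ triangulations, so one co-circular set admits at most $(4^M)^k=2^{O(kM)}$ $k$-tuples, each testable in $\mathrm{poly}(k,M)$ time. Processing all $m$ co-circular sets for a single $\omega$ costs $m\cdot 2^{O(kM)}$, and multiplying by the $m^{O(k^2)}$ choices of $\omega$ gives the claimed $m^{O(k^2)}\,2^{O(kM)}$ bound. The only genuinely delicate point is the witness-guessing that decouples the joint distinctness requirement from the separable objective; the geometric decomposition of the symmetric difference and the Catalan enumeration are routine.
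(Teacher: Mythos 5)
Your proposal is correct and follows essentially the same strategy as the paper: guess a witness function assigning to each pair $(i,j)$ a co-circular set where $T_i$ and $T_j$ must differ (one of $m^{\binom{k}{2}}$ choices), then enumerate the $2^{O(kM)}$ tuples of triangulations per co-circular set and take the best configuration. Your version is marginally cleaner in that it makes the separability of $\sum_{i\neq j}\lvert T_i \Delta T_j\rvert$ across the sets $C_\ell$ explicit and optimizes each set independently under the witness constraints, but the key idea and the time analysis coincide with the paper's.
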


\begin{proof}
Outside of the co-circular sets $C_1,\dots,C_m$, $P$ has a unique Delaunay triangulation (denote it by $T'$). Let $T^{C_\ell}$ be a triangulation of $C_\ell$. Then any Delaunay triangulation $T$ of $P$ can be represented as
\[
T \;=\; T^{C_1} \;\cup\; T^{C_2} \;\cup\;\cdots\;\cup\; T^{C_m} \;\cup\; T',
\]
where $T^{C_\ell}$ is chosen from among at most $O(4^M)$ triangulations of $C_\ell$.

Two triangulations $T_i$ and $T_j$ are distinct if and only if there is some $\ell \in [m]$ for which $T^{C_{\ell}}_i \neq T^{C_{\ell}}_j$. Furthermore, for any collection of $k$ triangulations $T_1, \dots, T_k$, there are $m^{\binom{k}{2}}$ ways to choose a triple $(i,j,\ell)$ that indicates which co-circular set distinguishes $T_i$ from $T_j$.

Suppose we fix an $\ell_{ij}\in [m]$ for each pair $(i,j)$. To ensure $T_i \neq T_j$, assign two distinct triangulations of $C_{\ell_{ij}}$ to $T_i$ and $T_j$ among the $O(4^M)$ possibilities, and for any other $T^{C_{\ell_{ij}}}_{i'}$ (where $i'\in[k]$ is not in the pair), assign an arbitrary triangulation of $C_{\ell_{ij}}$. If this process yields a valid set of $k$ distinct triangulations, we compute its diversity; otherwise, we assign diversity $-\infty$.

As each $C_\ell$ has at most $O(4^M)$ possible triangulations, there are
\[
O\bigl(m^{\binom{k}{2}} \cdot (4^M)^k \cdot m\bigr)
\]
such assignments. We select the configuration with the greatest diversity and return its associated triangulations. Implementing this procedure takes 
\[
O\bigl(m^{\binom{k}{2}} \cdot (4^M)^k \cdot m\bigr) \cdot k^2n 
\;=\;
m^{O(k^2)} 2^{O(kM)},
\]
as required.
\end{proof}

\vspace{2mm}We are now ready to prove \Cref{thm: dnt-special}(3).

\begin{proof}[Proof of \Cref{thm: dnt-special} (3)]
Recall that determining whether $P$ admits a Delaunay triangulation can be done in $O(n \log n)$ time, and a simple $n$-gon has multiple Delaunay triangulations if and only if it contains sets of at least four co-circular points~\cite{deberg2008}.
Such co-circular points can be identified in $O(n \log n)$ time by constructing the Voronoi diagram and scanning for vertices of degree at least four.
Since each co-circular set forms a convex polygon, and a convex $r$-gon has $(r-2)$-nd Catalan number many triangulations, we can also check in $O(n \log n)$ time whether $P$ has at least $k$ distinct Delaunay triangulations.

Assume now that $P$ has $m$ co-circular sets $C_1,\ldots,C_m$. As described in \Cref{lem: delaunay-brute-force}, every Delaunay triangulation $T$ of $P$ can be represented as
\[
T \;=\; T^{C_1} \;\cup\; T^{C_2} \;\cup\;\cdots\;\cup\; T^{C_m} \;\cup\; T',
\]
where $T^{C_\ell}$ is a triangulation of the convex polygon $C_\ell$, and $T'$ is the unique triangulation of the remaining points (those not in any $C_\ell$). Let $ M = \max_{\ell \in [m]} \lvert C_{\ell} \rvert. $

When $k>\varepsilon/2$, we may use the algorithm in (1).
We therefore focus on the remaining case $k < 2 / \varepsilon$, which we divide into two subcases:

\vspace{2mm}\textbf{(Case 1: $k \le |C_{\ell}| / 2$ for some $\ell\in[m]$)}
Using \Cref{lem:disjtriangulations}, we can find $k$ \emph{disjoint} Delaunay triangulations $T_1^{C_{\ell'}}, \ldots, T_k^{C_{\ell'}}$ of every $C_{\ell'}$ with $|C_{\ell'}| \ge |C_{\ell}|$, in total $O\bigl(|C_{\ell'}|k\bigr)$ time per set.  Since $\sum_{\ell'} |C_{\ell'}| \le n$, this step takes $O(kn)$ time overall.

For each $C_{\ell''}$ such that $|C_{\ell''}|/2 < k$, we instead perform a brute-force search to obtain $k$ (possibly repeated) \emph{diverse} Delaunay triangulations $T_1^{C_{\ell''}}, \ldots, T_k^{C_{\ell''}}$.  Because $|C_{\ell''}|/2 \le M/2 < k$, and $k < 2/\varepsilon$, this takes at most $ O( {4^k \choose k} \cdot n) = 2^{O(\varepsilon^{-2})} \cdot n $
using the fact that $\binom{4^k}{k}$ is $2^{O(k^2)}$ and $k = O(\varepsilon^{-1})$.

\vspace{2mm}\textbf{(Case 2: $|C_{\ell}|/2 < k$ for every $\ell\in[m]$)}
Here, we apply \Cref{lem: delaunay-brute-force} directly to each $C_{\ell}$, and find $k$ disjoint triangulations $T_1^{C_{\ell}}, \ldots, T_k^{C_{\ell}}$ of every $C_{\ell}$. Since $|C_{\ell}|/2 \le M/2 < k < 2/\varepsilon$, the number of ways to pick triangulations is at most $2^{O(kM)}$, and combining them for the $m = O(n)$ sets yields $n^{O(\varepsilon^{-2})}$ time in total.

\vspace{2mm}
In each case, we construct 
\[
T_i \;=\; T_i^{C_1} \;\cup\; T_i^{C_2} \;\cup\;\cdots\;\cup\; T_i^{C_m} \;\cup\; T',
\]
for $i\in[m]$, and return $T_1,\dots,T_k$.
The overall running time across the two subcases as well as the case $ k \geq 2/\varepsilon $ is bounded by $ n^{O(\varepsilon^{-2})}+O(n^3k^3\log k) $ as claimed. This completes the proof.
\end{proof}
\clearpage
\section{Proof of \Cref{thm:codes} (Reduction to Hamming Codes)}\label{sec: aqnd-proof}

In this section, we provide proof of \Cref{thm:codes}. We restate the theorem here for reader's convenience.
\cctrihardness*
As mentioned earlier, computing $A_2(n,d)$ is still open, and only a limited number of instances are currently known, e.g., see~\cite{ostergard2011size}. 
\begin{proof}[Proof of \Cref{thm:codes}]
Consider the recursively-defined simple polygon $P$ as illustrated in \Cref{fig:spiral}. Every triangulation of $P$ must contain the diagonals with color red. Each of the remaining non-triangulated regions is a convex quadrilateral, which can be triangulated in two different ways. Note that the choices of triangulations in the convex quadrilaterals can be made independently. Given an integer $n$, we can let the recursion repeat until it contains $n$ such convex quadrilaterals. We call this simple polygon $P_n$.
\begin{figure}[H]
\centering
\includegraphics[scale=0.12]{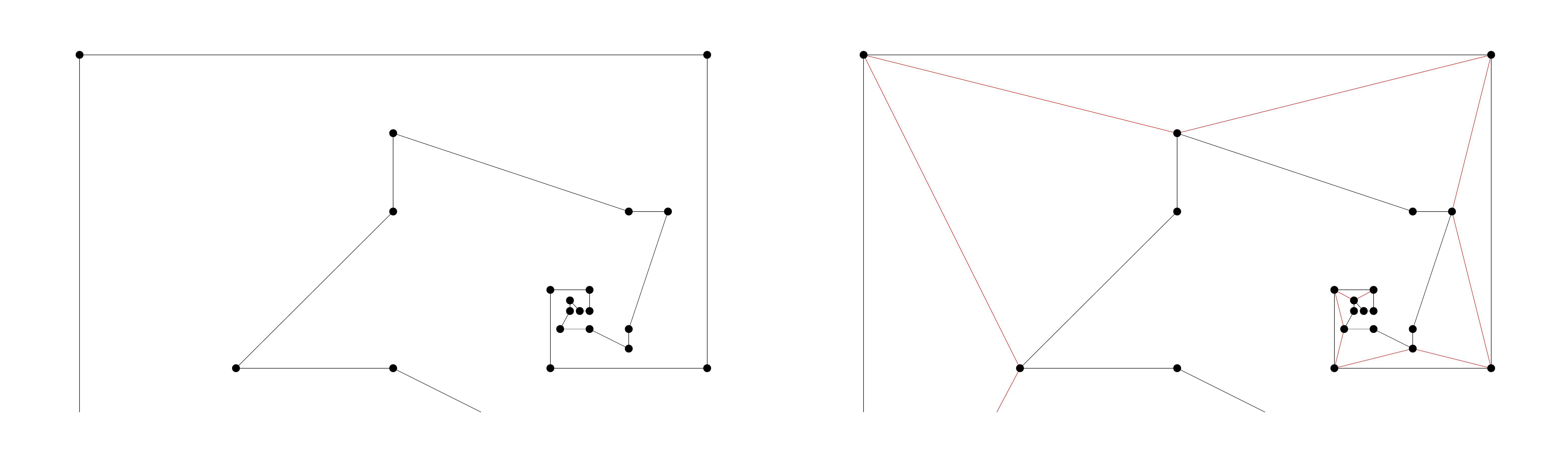}
\caption{A recursively-defined simple polygon $P$.\label{fig:spiral}}
\end{figure}
Our goal is to perform a reduction from computing $A_2(n, d)$ to finding a diverse set of $k = O(n)$ triangulations for $P_n$. Since we require $d > n/2$, $A_2(n, d) = O(n)$, as mentioned at the beginning of this section. The reduction works as follows. We perform a binary search on $g$ in the range $[1, O(n)]$. To verify whether $A_2(n, d) \ge g$, we can ask whether $P_n$ contains $k = g$ triangulations with minimum pairwise Hamming distance at least $\delta = d$. If the answer is ``Yes,'' set $g$ to be a larger value; otherwise, set $g$ to be a smaller value. Thus, we can compute $A_2(n, d)$ by invoking the triangulation problem $O(\log n)$ times. 
\end{proof}
\clearpage
\section{Proof of \Cref{thm: mindttheorem} (Algorithms for Min-DT)}
\label{sec: proof-min-dt}

In this section, we prove \Cref{thm: mindttheorem}. We begin by restating the theorem.

\mindttheorem*

Recall that the simple farthest insertion algorithm w.r.t. $\MinSD$ also gives $\tfrac12$-approximate solutions\cite{ravi1994heuristic}.
To implement farthest insertion w.r.t. $\MinSD$, we define Multi-criteria Triangulation (MCT) problem, a natural generalization of the BCT problem (\Cref{def: bct}). 
We then use MCT to design an FPT algorithm for the decision version of the farthest insertion w.r.t. $\MinSD$ measure.

\begin{definition}[Multi-Criteria Triangulation] Given an objective function $ \sigma: \cT \rightarrow \mathbbm{R} $ and a collection of constraints $\mathcal{B}$ on $\cT$, the \emph{Multi-Criteria Triangulation} problem $MCT(\sigma,\mathcal{B})$ is defined as follows:
\begin{align}
    \text{Minimize} \quad & \sigma(T) \quad\text{over all $T\in\cT{}$}\\
    \text{subject to} \quad & \text{bool}(T) = \textsc{true} \quad\text{for all $\text{bool}\in\mathcal{B}$}.
\end{align}
Its maximization counterpart is defined analogously.
\end{definition}\label{def: mct}

The following lemma is also a general extension to BCT theorem (\Cref{thm: bct-algorithms}) and can be proved similarly. We omit the details here.
\begin{lemma}\label{lem: mct-integer}
    Assume that $MCT(\sigma, \{\sigma_1 \leq b_1, 
    \ldots, \sigma_i \leq b_i\} )$ is given such that $\sigma_j: \cT \rightarrow \mathbbm{Z}_{\geq0} $ and $b_j \in \mathbbm{Z}_{\geq0}$, where $j\in[i]$.
    Then, $MCT(\sigma, \{\sigma_1 \leq b_1, 
    \ldots, \sigma_i \leq b_i\} )$ can be solved in time $O\left((b^2_1 + 1)\cdots (b^2_i + 1) \cdot n^3\right)$.
\end{lemma}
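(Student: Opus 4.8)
The plan is to prove \Cref{lem: mct-integer} by extending the dynamic-programming approach developed for Case 2 of \Cref{thm: bct-algorithms} from a single constraint to $i$ simultaneous integer-valued constraints. The key idea is that the subproblem table is now indexed not by a single budget value $B'$, but by a vector of remaining budgets $(b'_1,\dots,b'_i)$, one coordinate per constraint.

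First I would define the subproblem. For a subpolygon $P[a{:}c]$ (the chain of $P$ from vertex $a$ to $c$) and a budget vector $(b'_1,\dots,b'_i)$ with $0 \le b'_j \le b_j$, let
\[
OPT\bigl((b'_1,\dots,b'_i),a,c\bigr) = \min\bigl\{\sigma(T) : T \in \cT[a{:}c],\ \sigma_j(T) \le b'_j \text{ for all } j\in[i]\bigr\},
\]
setting the value to $\infty$ when no feasible triangulation exists or when $\overline{ac}$ is not an allowed diagonal. Assuming all measures are additive and triangle-decomposable (the $\min/\max$ cases being handled as in \Cref{thm: bct-algorithms}), the recurrence splits at the triangle $\triangle amc$ and distributes each budget coordinate between the two sub-chains:
\[
OPT\bigl((b'_j)_j,a,c\bigr) = \min_{\substack{m \in [a+2,c-2]\\ (b''_j)_j + (b'''_j)_j = (b'_j - \sigma_j(\triangle amc))_j}} OPT\bigl((b''_j)_j,a,m\bigr) + \sigma(\triangle amc) + OPT\bigl((b'''_j)_j,m,c\bigr),
\]
where each coordinate is split independently and we require $b'_j \ge \sigma_j(\triangle amc)$ for all $j$.

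Next I would analyze the running time. The table has $n^2$ choices of endpoint pair $(a,c)$ and $\prod_{j=1}^i (b_j+1)$ budget vectors, so $O\bigl(n^2 \prod_j (b_j+1)\bigr)$ cells. To fill one cell, we loop over $O(n)$ split vertices $m$, and for each we must enumerate ways to partition the budget vector across the two sub-chains. Splitting coordinate $j$ into $b''_j + b'''_j$ offers $O(b_j+1)$ choices, so enumerating all partitions costs $\prod_j (b_j+1)$ per $m$. Multiplying, each cell costs $O\bigl(n \prod_j(b_j+1)\bigr)$, giving total time $O\bigl(n^3 \prod_j (b_j+1)^2\bigr) = O\bigl((b_1^2+1)\cdots(b_i^2+1)\, n^3\bigr)$, matching the claimed bound. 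The objective is read off from the cells $OPT\bigl((b_1,\dots,b_i),1,n\bigr)$.

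The main obstacle, and the only place this genuinely differs from the single-constraint proof, is bookkeeping the budget split across multiple coordinates: the feasibility of the recurrence requires \emph{each} constraint to be satisfied in \emph{both} sub-chains simultaneously, so the partition must range over the full product of coordinatewise splits rather than a single scalar split. Care is needed to confirm that correctness of the recurrence still follows from the additivity of each $\sigma_j$ over the two sub-chains plus the triangle $\triangle amc$, and that the product-of-budgets factor in the running time comes out exactly as stated. Since the additivity argument is identical in each coordinate and the indexing is a direct vectorization of Case 2 in \Cref{thm: bct-algorithms}, the remaining details are routine, which is why we omit them.
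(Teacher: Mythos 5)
Your proposal is correct and matches the paper's intended argument: the paper omits the proof of \Cref{lem: mct-integer}, saying only that it is a general extension of Case~2 of \Cref{thm: bct-algorithms}, and your vector-budget dynamic program (indexing cells by $(b'_1,\dots,b'_i)$ and splitting each coordinate independently at the triangle $\triangle amc$) is precisely that extension. The only cosmetic discrepancy is that your count yields $\prod_j (b_j+1)^2$ rather than $\prod_j (b_j^2+1)$, but since $(b_j+1)^2 \leq 2\,(b_j^2+1)$ these agree up to a $2^i$ factor, which is harmless for the lemma's use in \Cref{thm: MMFT-fpt}, where the bound is anyway stated as $r^{O(k)}$.
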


We now present an FPT algorithm for the decision version of the farthest insertion under $\MinSD$ measure.

\begin{restatable}[Decision Version of Farthest Insertion Under $\MinSD$]{theorem}{ResMMFTFPT}\label{thm: MMFT-fpt}
    Given polygon $P$ and a set $ \cT_k = \{T_1,\ldots,T_k\}$ of $k$ triangulations of $P$, there is an $ O(k\cdot{}(r+1)^{2k-2}\cdot{}n^3) $-time algorithm that finds a triangulation $T$ such that $ \min_{j} |T \Delta T_j| \geq 2(n-3)-2r $.
\end{restatable}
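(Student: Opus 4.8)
The plan is to reduce the problem to a single Multi-Criteria Triangulation (MCT) instance carrying only $k-1$ constraints, and then invoke \Cref{lem: mct-integer}. The starting point is the elementary identity used in the proof of \Cref{prop: min-ce}: since every triangulation of a simple $n$-gon has exactly $n-3$ diagonals, for any triangulation $T$ we have $|T \Delta T_j| = 2(n-3) - 2|T \cap T_j|$. Consequently, the target condition $\min_j |T \Delta T_j| \geq 2(n-3) - 2r$ is equivalent to the system $|T \cap T_j| \leq r$ for every $j \in [k]$. For each $j$, define $\sigma_j(T) := |T \cap T_j| = \sum_{e \in T} \mathbbm{1}(e \in T_j)$, which is additive and edge-decomposable (with $\sigma_j'(e) = \mathbbm{1}(e \in T_j)$) and integer-valued in $[0, n-3]$. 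Thus we must decide feasibility of, and produce a witness for, the system $\{\sigma_j \leq r : j \in [k]\}$.

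Directly feeding all $k$ constraints into \Cref{lem: mct-integer} would cost $O((r^2+1)^{k} n^3)$, one factor of roughly $(r+1)^2$ too many. The key step is to promote one constraint to the objective: instead solve
\[
    MCT\bigl(\sigma_k, \{\sigma_1 \leq r, \ldots, \sigma_{k-1} \leq r\}\bigr),
\]
that is, minimize $|T \cap T_k|$ subject to $|T \cap T_j| \leq r$ for $j \in [k-1]$, recovering the minimizing triangulation by standard bookkeeping. Correctness is immediate: if some triangulation satisfies all $k$ constraints, then it is feasible for this program and its objective value is at most $r$, so the computed minimum is at most $r$ and its minimizer is a valid output; conversely, if the minimum exceeds $r$, no triangulation can satisfy $\sigma_k \leq r$ together with the remaining constraints, so no valid $T$ exists. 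Note also that when $r < n-3$ the output automatically satisfies $T \neq T_j$ for all $j$, as required by farthest insertion.

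For the running time, the reduced program has only $i = k-1$ constraints, each with bound $r$, so \Cref{lem: mct-integer} yields $O((r^2+1)^{k-1} n^3) = O((r+1)^{2k-2} n^3)$. Each cell update manipulates a budget vector of length $k-1$ and must evaluate $\sigma_j(\triangle imj)$ for all $j \in [k]$, contributing an extra $O(k)$ factor, for a total of $O(k\,(r+1)^{2k-2} n^3)$, as claimed.

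I expect the main obstacle, and the genuinely new idea, to be precisely the reduction from $k$ to $k-1$ tracked constraints by moving one triangulation's intersection count into the MCT objective; this is what produces the exponent $2k-2$ rather than $2k$. Everything else follows from the symmetric-difference identity and the already-established MCT dynamic program. The remaining points to verify are routine: that witness recovery interacts correctly with the $\infty$-padding of the DP table, and that the boundary case $r \geq n-3$ (where every triangulation is trivially valid) is handled gracefully.
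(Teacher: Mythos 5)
Your proposal is correct and follows essentially the same route as the paper: both reduce the problem, via the identity $|T \Delta T_j| = 2(n-3) - 2|T \cap T_j|$, to a single MCT instance that promotes one intersection count $\sigma_i$ to the objective and keeps the remaining $k-1$ counts as constraints bounded by $r$, then invoke \Cref{lem: mct-integer} to obtain the $(r+1)^{2k-2}$ factor. Your only deviation is the (valid) observation that one such call already decides feasibility, whereas the paper loops over all $i \in [k]$ — which is where its factor of $k$ arises, while you absorb an $O(k)$ into the per-cell DP work — so either accounting yields the stated $O(k\,(r+1)^{2k-2}\,n^3)$ bound.
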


\begin{proof}
    Recall from \Cref{prop: min-ce} that minimizing the number of common edges between two triangulations is equivalent to maximizing the symmetric difference between them.
    Our goal is to find a triangulation $T$ such that $|T_j\Delta T| \leq r $ for every $ T_j\in \cT_k $.
    
    For $j\in[k]$, let $ \sigma_j (e) := \mathbbm{1}(e \in T_j) $ and let $ \sigma_j(T) := \sum_{e\in{}T} \mathbbm{1}(e\in{}T_j) $.
    In other words, $ \sigma_j(T) $ denotes the number of common edges between $T$ and $T_j$.
    For a fixed $i$, consider $ MCT(\sigma_i, \{\sigma_j(T)\leq{}r\}_{j\in[k]-i}) $, which solves the following problem in $O\left(\left(r+1\right)^{2(k-1)}\cdot{}n^3\right)$ by \Cref{lem: mct-integer}:
        \begin{align}
            \text{Minimize} \quad & \sigma_i(T) \quad\text{over all $T\in\cT{}$}\\
            \text{subject to} \quad & \sigma_j(T) \leq r\quad\text{for all $j\in{}[k]-i$,}
        \end{align}
    Note that if $ \sigma_i(T)\leq{}r $ and some $T\in\cT$, where all the constraints are satisfied, then $ |T \Delta T_j| \geq 2(n-3)-2r $ for all $j\in[k]$. Therefore, we have the following simple algorithm:
        \begin{enumerate}
            \item Set $i\leftarrow{}1$.
            \item While $ i \leq k $ \label{alg: multi-bct-max-while}
            \begin{enumerate}
                \item Let $T\leftarrow{}MCT(\sigma_i(T), \{\sigma_j(T)\leq{}r\}_{j\in[k]-i})$.
                \item If $ \sigma_i(T) \leq r $, return $T$.
                \item If not, increase $i$ by 1 and go to Step~\ref{alg: multi-bct-max-while}.
            \end{enumerate}
        \item Return $\bot$.
        \end{enumerate}
    Since the second step repeats at most $k$ times, we have the desired running time.
\end{proof}

Now, we provide proof for \Cref{thm: mindttheorem}.
\begin{proof}[Proof of \Cref{thm: mindttheorem}]
We begin with an empty collection, and then we update this collection incrementally by adding the triangulation obtained by farthest insertion algorithm w.r.t. $\MinSD$.

By \Cref{thm: MMFT-fpt}, given a collection of triangulations $ \cT_i = \{T_1,\ldots,T_i\}$ and $r\in[0,n-3]$, one can find a triangulation $T$ such that $ \min_{j\in[i]}|T_j\Delta T| \geq 2(n-3)-2r $, if it exists.
Thus, by running this algorithm for every $r\in[0,n-3]$ starting from 0, we may find the farthest triangulation from $\cT_i$ w.r.t. $\MinSD$.
Note that once the farthest triangulation is found for some $r$, we do not have to check further since as $r$ increase the $\MinSD$ measure only decreases.

Let $T_{i+1}$ be the farthest triangulation from $\cT_i$, and let $ d_{\mathrm{OPT}_{i+1}} $ denote the minimum pairwise distance between $T_{i+1}$ and triangulations in $\cT_i$.
I.e.,
\[
T_{i+1} = \argmax_{T\in\cT}\min_{j\in[i]}|T_j\Delta T| \qquad \text{and} \qquad d_{\mathrm{OPT}_{i+1}} = \min_{j\in[i]}|T_j\Delta T_{i+1}|.
\]
Let $r_{i+1}$ be the smallest value of $r\in[0,n-3]$ such that $ d_{\mathrm{OPT}_{i+1}} = 2(n-3)-2r_{i+1} $.
Then, one can find $T_{i+1}$ in time $ \sum_{r=0}^{r_{i+1}} O\left(i\cdot (r_{i+1}+1)^{2k-2} \cdot n^3 \right) $, which is simply $ O\left( i\cdot (r_{i+1}+1)^{2k-1} \cdot n^3 \right) $.
Since finding $T_k$ takes longer than finding $T_1,\ldots,T_{k-1}$, the overall running time of this algorithm will be
\[
O\left( k^2\cdot (r_k+1)^{2k-1} \cdot n^3 \right),
\]
which is simply $ r_k^{O(k)} $.
Since $ d_{\mathrm{OPT}_k} \geq \frac{1}{2}\cdot d_{\mathrm{OPT}} $, we have the desired result.
\end{proof}

\end{document}